\DeclarePairedDelimiterX\diverg[2]{(}{)}{#1 \mathrel{}\mathclose{}\delimsize\|\mathopen{}\mathrel{} #2}
\newcommand{\dtv}[2]{\mathrm{d}_{\mathrm{TV}}(#1,#2)}
\newcommand{\dhell}[2]{\mathrm{d}_{\mathrm{H}}(#1,#2)}
\newcommand{\dhellsq}[2]{\mathrm{d}^2_{\mathrm{H}}(#1,#2)}
\newcommand{\dren}[3]{\mathrm{d}^{\textnormal{R\'en}}_{\mathrm{#1}}\diverg{#2}{#3}}
\newcommand{\BC}[2]{\mathrm{BC}(#1,#2)}
\newcommand{\dKL}[2]{\mathrm{d}_{\mathrm{KL}}\diverg{#1}{#2}}
\newcommand{\df}[2]{\mathrm{d}_{f}\diverg{#1}{#2}}
\newcommand{\dchisq}[2]{\mathrm{d}_{\chi^2}\diverg{#1}{#2}}
\newcommand{\Dtr}[2]{\mathrm{D}_{\mathrm{tr}}(#1,#2)}
\newcommand{\DB}[2]{\mathrm{D}_{\mathrm{B}}(#1,#2)}
\newcommand{\DHell}[2]{\mathrm{D}_{\mathrm{H}}(#1,#2)}
\newcommand{\DHella}[2]{\mathrm{A}(#1,#2)}
\newcommand{\DHellsq}[2]{\mathrm{D}^2_{\mathrm{H}}(#1,#2)}
\newcommand{\DBsq}[2]{\mathrm{D}^2_{\mathrm{B}}(#1,#2)}
\newcommand{\Fid}[2]{\mathrm{F}(#1,#2)}
\newcommand{\Df}[2]{\mathrm{D}_{f}\diverg{#1}{#2}}
\newcommand{\DKL}[2]{\mathrm{S}\diverg{#1}{#2}}
\newcommand{\DBchi}[2]{\mathrm{D}_{\chi^2}\diverg{#1}{#2}}
\newcommand{\DBchihat}[2]{\widehat{\mathrm{D}}_{\chi^2}\diverg{#1}{#2}}
\newcommand{\DBchiminus}[3]{\mathrm{\widehat{D}}_{\chi^2}^{- #1}\diverg{#2}{#3}}
\newcommand{\DBon}[2]{\mathrm{D}^{\textnormal{on}}_{\chi^2}\diverg{#1}{#2}}
\newcommand{\DBoff}[2]{\mathrm{D}^{\textnormal{off}}_{\chi^2}\diverg{#1}{#2}}
\newcommand{\Dren}[3]{\mathrm{D}^{\textnormal{R\'en}}_{\mathrm{#1}}\diverg{#2}{#3}}
\newcommand{\Fro}{{\mathrm{F}}}
\newcommand{\rnote}[1]{}
\newcommand{\stf}[1]{}
\title{Quantum chi-squared tomography and mutual information testing}
\author{Steven T. Flammia}
\affiliation{AWS Center for Quantum Computing}
\affiliation{IQIM, California Institute of Technology} 
\email{stf@vt.edu}
\thanks{Present address: Department of Computer Science, Virginia Tech}
\author{Ryan O'Donnell}
\affiliation{Computer Science Department, Carnegie Mellon University} 
\email{odonnell@cs.cmu.edu}
\date{June 12, 2024}
\begin{document}

\maketitle

\begin{abstract}
For quantum state tomography on rank-$r$ dimension-$d$ states, we show that $\wt{O}(r^{.5}d^{1.5}/\eps) \leq \wt{O}(d^2/\eps)$ copies suffice for accuracy~$\eps$ with respect to (Bures) $\chi^2$-divergence, and $\wt{O}(rd/\eps)$ copies suffice for accuracy~$\eps$ with respect to quantum relative entropy. The best previous bound was $\wt{O}(rd/\eps) \leq \wt{O}(d^2/\eps)$ with respect to infidelity; our results are an improvement since
infidelity is bounded above by both the relative entropy and the $\chi^2$-divergence. 
For algorithms that are required to use single-copy measurements, we show that $\wt{O}(r^{1.5} d^{1.5}/\eps) \leq \wt{O}(d^3/\eps)$ copies suffice for $\chi^2$-divergence, and $\wt{O}(r^{2} d/\eps)$ suffice for relative entropy.\footnote{Independent and contemporaneous work~\cite{CHLLS23} achieved the bound $\wt{O}(r^2d/\eps)$ for infidelity. Indeed, the authors of that work were the first to show that $\wt{O}(d^3/\eps)$ is possible for infidelity with single-copy measurements.}
 
Using this tomography algorithm, we show that $\wt{O}(d^{2.5}/\eps)$ copies of a $d\times d$-dimensional bipartite state suffice to test if it has quantum mutual information~$0$ or at least~$\eps$. As a corollary, we also improve the best known sample complexity for the \emph{classical} version of mutual information testing to $\wt{O}(d/\eps)$.
\end{abstract}

\thispagestyle{empty}
\clearpage

\epigraph{\emph{Quantum state tomography['s] perfection is of great importance to quantum computation and quantum information.}\medskip

--- Nielsen and Chuang~\cite[p.47]{NC10}}

\section{Introduction}

Quantum state tomography --- learning a  $d$-dimensional quantum state from~$n$ copies --- is a ubiquitous task in quantum information science. 
It is the quantum analogue of the classical task of learning a $d$-outcome probability distribution from~$n$ samples.

In more detail, the goal is to design an algorithm that, given $\rho^{\otimes n}$ for some (generally mixed) quantum state $\rho \in \C^{d \times d}$, outputs (the classical description of) an estimate\footnote{Throughout this paper we use \textbf{boldface} to denote random variables.} $\wh{\brho}$ that is ``$\eps$-close'' to~$\rho$ with high probability. 
The main challenge is to minimize the sample (copy) complexity~$n$ as a function of~$d$ and~$\eps$ (and sometimes other parameters, such as $r = \rank \rho$).  
We will also be concerned with the practical issue of designing algorithms that make only single-copy (as opposed to collective) measurements.

An important aspect in specifying the quantum tomography task is the meaning of ``$\eps$-close''; i.e., what the \emph{loss} function is for judging the algorithm's estimate. 
There are many natural ways for measuring the divergence of two quantum states --- even more than for two classical probability distributions --- and the precise measure chosen can make a great deal of difference both to the necessary sample complexity, as well as to the utility of the final estimate for future applications.  

The main goal of this paper is to show a new tomography algorithm that achieves the most stringent notion of accuracy, \emph{(Bures) $\chi^2$-divergence}, while having essentially the same sample complexity as previously known algorithms using \emph{infidelity} as a loss function. 
We then given an application, to the \emph{quantum mutual information testing} problem, which crucially relies on our ability to achieve efficient state tomography with respect to $\chi^2$-divergence.

\subsection{Different quantum divergences, and prior work}  \label{sec:prior}

Let us start by recalling five important notions of ``distance'' between two \emph{classical} probability distributions $p,q$ on $[d] = \{1, 2, \dots, d\}$ (see \Cref{sec:prelims} for more details):
\begin{equation}
\label{eq:classicalineq}
    \text{($\ell_2^2$-distance)} 
    \lesssim \text{(total variation distance)}^2 \lesssim \text{Hellinger-squared ($\mathrm{H}^2$)} \lesssim
    \text{KL divergence} \lesssim
    \text{$\chi^2$-divergence}.
\end{equation}
(Here the ``$\lesssim$'' ignores small constant factors.) 
The first of these, $\ell_2^2$-distance, does not have an operational interpretation, but it is by far the easiest to calculate and reason about. 
The remainder are the ``\textbf{big~four}''~\cite[p.26]{WuITStats}: \textbf{total variation (TV) distance} controls the advantage in distinguishing $p$~from~$q$ with $1$~sample; \textbf{Hellinger-squared} controls the number of samples needed to distinguish $p$~from~$q$ with high probability; \textbf{KL divergence} has several information-theoretic interpretations; and, $\boldsymbol{\chi}^{\boldsymbol{2}}\textbf{-divergence}$ plays a central role in goodness of fit (whether an unknown~$p$ is close to a known~$q$). 
We remark that the first three quantities are bounded in $[0,1]$, but KL~divergence and $\chi^2$-divergence may be unbounded.

It is extremely easy to show (see \Cref{prop:learn-L2}) that, given $n$ samples from~$p$, the empirical estimate~$\wh{\bp}$ has expected $\ell_2^2$-distance at most~$1/n$ from~$p$; hence $n = O(1/\eps)$ samples suffices for high-probability estimation with this loss function. 
Moreover, Cauchy--Schwarz immediately bounds $\mathrm{TV}^2$ by $d$~times~$\ell_2^2$, and hence $O(d/\eps)$ samples suffice when $\eps$ denotes~$\mathrm{TV}^2$ (and $\Omega(d/\eps)$ can be proven necessary). 
But in fact, $n = O(d/\eps)$ samples suffice even when $\eps$ denotes the most stringent distance, $\chi^2$-divergence. 
This also follows from a short calculation of the expected $\chi^2$-divergence of~$\wh{\bp}$ from~$p$ when $\wh{\bp}$ is the \emph{add-one} empirical estimator (see \Cref{prop:learn-chi2}).\\

The preceding five distances have natural generalizations for quantum states $\rho, \sigma \in \C^{d \times d}$. 
The analogous chain of inequalities to \cref{eq:classicalineq} is not quite true, but we have instead
\begin{equation}
    \text{(Frobenius distance)}^2 
    \lesssim \text{(trace distance)}^2 \lesssim \text{infidelity} \lesssim
    \text{quantum relative entropy},\ \text{Bures $\chi^2$-divergence}.
\end{equation}
While both quantum relative entropy and Bures $\chi^2$-divergence are bounded from below by the infidelity, neither bounds the other by a constant~\cite{TomamichelSeyfried}. 
We remark that using the ``measured relative entropy'' rather than the ``standard'' (Umegaki) quantum relative entropy \emph{does} make the full analogous chain of inequalities hold, turning the comma above into a $\lesssim$; however, the measured relative entropy is rarely used in practice. 

In the quantum case, there is a very simple empirical estimation algorithm that achieves Frobenius-squared distance~$\eps$ with $n = O(d^2/\eps)$ samples (see \Cref{sec:simple-frob}); this algorithm has the additional practical merit that copies of~$\rho$ are measured individually and nonadaptively, meaning it uses $n$ POVMs of dimension~$d$ that are fixed in advance. 
Kueng, Rauhut, and Terstiege~\cite{KUENG201788} gave another natural algorithm of this form with a refined rank-based bound:
\begin{theorem} \label{thm:krt}
    (\cite[Thm.~2]{KUENG201788}.) There is a state tomography algorithm using nonadaptive single-copy measurements achieving expected Frobenius-squared error $O(rd/n)$ on $d$-dimensional states of rank at most~$r$.
    Hence $n = O(rd/\eps)$ samples suffice to get\footnote{With probability at least $.99$, say, by Markov's inequality.} Frobenius-squared accuracy~$\eps$.
\end{theorem}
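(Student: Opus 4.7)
My plan is to combine a fixed, nonadaptive single-copy measurement drawn from a projective design with a low-rank post-processing step. First, on each of the $n$ copies of $\rho$, perform the same POVM $\{E_{\boldsymbol{\alpha}}\}$ whose elements form a complex projective 4-design --- for instance, the continuous uniform Haar POVM, or a finite Clifford-orbit 4-design. Each outcome identifies a rank-1 projector $|\boldsymbol{\psi}_i\rangle\langle\boldsymbol{\psi}_i|$, from which I form the unbiased single-shot estimator $\brho_i = (d+1)|\boldsymbol{\psi}_i\rangle\langle\boldsymbol{\psi}_i| - I$; the 2-design identity yields $\E\,\brho_i = \rho$. Averaging gives $\wt{\brho} = \frac{1}{n}\sum_i \brho_i$, and a short calculation using 4-design moments produces the ``rank-free'' Frobenius bound $\E\,\|\wt{\brho} - \rho\|_\Fro^2 = O(d^2/n)$.

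The rank-$r$ improvement comes from truncating $\wt{\brho}$ to rank $r$. Let $\wh{\brho}$ be the best rank-$r$ Frobenius approximation of $\wt{\brho}$, obtained by keeping the top $r$ eigenvalues (and optionally projecting back onto density matrices afterwards). The key algebraic fact is that, for Hermitian $A$ with best rank-$r$ approximation $A_r$ and any rank-$r$ target $B$,
\[\|A_r - B\|_\Fro^2 \leq 2r\,\|A_r - B\|_{\mathrm{op}}^2 \leq O(r)\,\|A - B\|_{\mathrm{op}}^2,\]
since $A_r - B$ has rank at most $2r$ and $A_r$ minimizes the operator-norm distance to $A$ among rank-$r$ matrices. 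Applying matrix Bernstein to the i.i.d.\ sum $\wt{\brho} - \rho = \frac{1}{n}\sum_i(\brho_i - \rho)$ --- each summand has operator norm $O(d)$ and matrix variance $O(d)$ --- gives $\E\,\|\wt{\brho} - \rho\|_{\mathrm{op}}^2 = O(d \log d / n)$, and combining the two estimates produces a Frobenius error of $\wt O(rd/n)$, from which the sample-complexity statement follows by Markov.

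The main obstacle I expect is the spurious $\log d$ factor contributed by matrix Bernstein. Eliminating it to obtain the precise $O(rd/n)$ bound of \Cref{thm:krt} seems to require either sharpening the operator-norm concentration through noncommutative moment inequalities (e.g., Lust-Piquard--Pisier or matrix Rosenthal), or altogether abandoning the ``unbiased estimate $+$ SVD truncation'' route in favor of a rank-constrained least-squares fit to the empirical POVM frequencies, analyzed via a restricted-isometry-type property for the measurement superoperator restricted to matrices of rank $\leq r$. The RIP route (essentially that used by Kueng--Rauhut--Terstiege) is the more technical of the two, and is where I expect the bulk of the real work to lie; the design-based unbiased-estimator analysis is by comparison elementary.
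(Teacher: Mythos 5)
The paper does not reprove \Cref{thm:krt}; it is cited verbatim from Kueng--Rauhut--Terstiege and used as a black box, so there is no in-paper argument to compare against. Your design-based ``unbiased linear inversion $+$ rank truncation'' route is a legitimate and conceptually clean alternative to the RIP/constrained-least-squares route of the cited paper, and the algebra you give is essentially right: the estimator $\brho_i = (d+1)\ketbra{\boldsymbol{\psi}_i}{\boldsymbol{\psi}_i} - I$ is unbiased by the 2-design identity (in fact a 2- or 3-design already suffices for all the Frobenius second-moment computations --- the 4-design is only needed for RIP), the Eckart--Young argument correctly shows $\|\wh{\brho} - \rho\|_\Fro^2 \le 8r\,\|\wt{\brho}-\rho\|_{\mathrm{op}}^2$ since $\wh{\brho}-\rho$ has rank at most $2r$ and truncated SVD is optimal in operator norm as well, and matrix Bernstein gives $\E\|\wt{\brho}-\rho\|_{\mathrm{op}}^2 = O(d\log d/n)$.

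The gap is the one you already flag, and it is genuine: this chain delivers $O(rd\log d / n)$ Frobenius-squared error, not the logarithm-free $O(rd/n)$ in the theorem statement. Matrix Bernstein's $\log d$ is intrinsic to the operator-norm tail bound and I do not believe it can be removed by a noncommutative Rosenthal/Lust-Piquard-type moment bound alone in this setting; the PLS analysis of Gu\c{t}\u{a}--Kahn--Kueng--Tropp, which follows exactly this route, also incurs the logarithm. The log-free bound in KRT Theorem~2 really does come from the other branch you name: establishing a rank-$r$ restricted isometry property for $O(rd)$ random rank-one (4-design) measurements, then running a constrained least-squares/nuclear-norm recovery whose error is controlled by the RIP constant plus the $\ell_2$ noise, with the shot noise bounded separately. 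So your proposal is an honest, more elementary proof of a slightly weaker statement, and you have correctly diagnosed both the obstruction and the fix. Worth noting for this paper's purposes: everywhere \Cref{thm:krt} is invoked downstream, the bound is fed into quantities already carrying $\wt{O}(\cdot)$ polylogarithmic slack, so the log-lossy version you actually proved would suffice to reproduce every corollary in \Cref{sec:results}; it just would not establish \Cref{thm:krt} as literally stated.
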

Again, Cauchy--Schwarz implies that trace distance-squared is bounded by~$r$ times Frobenius-squared, so one immediately concludes that $n = O(r^2 d/\eps)$ copies suffice for a nonadaptive single-copy measurement algorithm achieving trace distance-squared~$\eps$.

Allowing for \emph{adaptive} single-copy measurement algorithms (in which the POVM used on the $t$th copy of~$\rho$ may be chosen based on the outcomes of the first $t-1$ measurements), it is known that for $d = 2$ (a single qubit), $n = O(1/\eps)$ measurements with one ``round'' of adaptivity suffice for estimation with infidelity~$\eps$. 
The idea for this dates back to at least~\cite{Rehacek2004}, with a proof appearing in, e.g.,~\cite[Eq.\ 4.17]{Bagan2006}. 
The case of higher~$d$ is discussed in~\cite{Pereira2018}, but no complete mathematical analysis seems to appear in the literature.  
\begin{remark}  \label{rem:however}
    However, prior to completing our work, we were informed by the authors of~\cite{CHLLS22} that they could achieve infidelity~$\eps$ with~$\wt{O}(d^3/\eps)$ single-copy measurements and logarithmically many rounds of adaptivity.
\end{remark}

Moving to quantum tomography algorithms that allow for a general collective measurement on all~$n$ copies, it would seem that some amount of representation theory is needed to get optimal results (intuitively, because $\rho^{\otimes n}$ lies in the symmetric subspace). 
The following two results were shown independently and contemporaneously:
\begin{theorem} \label{thm:ow}
    (\cite[Cor.~1.4]{odonnell2016efficient}.) 
    There is state tomography algorithm using collective measurements achieving expected Frobenius-squared error $O(d/n)$ on $d$-dimensional states.
    Hence $n = O(d/\eps)$ samples suffice to get Frobenius-squared accuracy~$\eps$.
    As a corollary of Cauchy--Schwarz, $n = O(rd/\eps)$ samples suffice to get trace distance-squared accuracy~$\eps$.
\end{theorem}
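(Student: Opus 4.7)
The plan is to design a two-step representation-theoretic algorithm in the spirit of Keyl, based on Schur--Weyl duality. Under the commuting actions of the symmetric group $S_n$ (permuting tensor factors) and the unitary group $U(d)$ (acting diagonally) on $(\C^d)^{\otimes n}$, one has the decomposition $(\C^d)^{\otimes n} \cong \bigoplus_{\lambda} \mathrm{Sp}_\lambda \otimes V_\lambda$, summed over partitions $\lambda \vdash n$ with at most $d$ parts, where $\mathrm{Sp}_\lambda$ is the Specht module and $V_\lambda$ is the irreducible $U(d)$-module of highest weight $\lambda$. The state $\rho^{\otimes n}$ is $S_n$-invariant and hence block-diagonal for this decomposition. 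The algorithm runs in two stages: (i) \emph{weak Schur sampling}, a projective measurement onto the isotypic components, which outputs a random partition $\boldsymbol{\lambda}$; (ii) Keyl's $U(d)$-covariant POVM on the $V_{\boldsymbol{\lambda}}$ register, producing a random unitary $\wh{\mathbf{U}}$. The estimator is $\wh{\brho} = \wh{\mathbf{U}}\,\mathrm{diag}(\boldsymbol{\lambda}/n)\,\wh{\mathbf{U}}^\dagger$.

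To bound $\mathbb{E}\|\wh{\brho} - \rho\|_{\mathrm{F}}^2$, use unitary invariance of both the Frobenius norm and the algorithm to reduce to the case $\rho = \mathrm{diag}(p)$ with $p = \mathrm{spec}(\rho)$. Up to constants, split the error into a spectrum term $\mathbb{E}\|\boldsymbol{\lambda}/n - p\|_2^2$ and a basis term $\mathbb{E}\|\wh{\mathbf{U}}\,\mathrm{diag}(p)\,\wh{\mathbf{U}}^\dagger - \mathrm{diag}(p)\|_{\mathrm{F}}^2$; note that by Hoffman--Wielandt the spectrum part is a genuine lower bound on the Frobenius error, which is why both pieces must be handled. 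The spectrum term is controlled by concentration of the Schur--Weyl distribution around the true spectrum: a character/symmetric-function computation gives $\mathbb{E}\|\boldsymbol{\lambda}/n - p\|_2^2 = O(d/n)$. The basis term is controlled by second-moment bounds for the Keyl POVM, whose covariance structure turns this into an integral of $V_\lambda$ matrix coefficients weighted by the eigenvalues $p_i$; components associated to large eigenvalues align sharply while those with small eigenvalues contribute little, summing to $O(d/n)$.

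The main obstacle is the basis-estimation bound: rigorously bounding the relevant $U(d)$-integrals of Keyl POVM matrix coefficients weighted by $\mathrm{diag}(p)$ requires delicate representation theory of $U(d)$ and careful tracking of highest-weight vectors of $V_\lambda$. This is precisely where the technical work of \cite{odonnell2016efficient} lies and is far from routine. The corollary for trace distance is, by contrast, immediate. Assuming $\rho$ has rank at most $r$, truncate $\wh{\brho}$ to its best rank-$r$ approximation $\wh{\brho}_r$; by the best-rank property $\|\wh{\brho}_r - \rho\|_{\mathrm{F}} \leq 2\|\wh{\brho} - \rho\|_{\mathrm{F}}$, and since $\wh{\brho}_r - \rho$ has rank at most $2r$, Cauchy--Schwarz gives $\|\wh{\brho}_r - \rho\|_1^2 \leq 2r\,\|\wh{\brho}_r - \rho\|_{\mathrm{F}}^2$. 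Taking expectations yields $\mathbb{E}\|\wh{\brho}_r - \rho\|_1^2 = O(rd/n)$, so $n = O(rd/\eps)$ samples suffice for trace-distance-squared accuracy $\eps$.
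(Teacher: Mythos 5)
This theorem is stated in the paper as a black-box citation of~\cite[Cor.~1.4]{odonnell2016efficient}; the paper offers no proof of its own. Your sketch of the Schur--Weyl/Keyl strategy (weak Schur sampling followed by Keyl's covariant POVM, split into a spectrum term bounded by Schur--Weyl concentration and a basis term bounded by second moments of the Keyl POVM) is an accurate high-level description of what that cited work does, and you correctly identify that the genuinely hard part --- the $U(d)$ integrals for the basis-estimation error --- lives there and is not something you can reprove on the spot. Your self-contained derivation of the trace-distance corollary is correct and is the standard argument: truncating $\wh{\brho}$ to its best rank-$r$ approximation at most doubles the Frobenius error (since $\rho$ itself has rank $\leq r$), the difference then has rank at most $2r$, and Cauchy--Schwarz on the singular values gives $\|\wh{\brho}_r - \rho\|_1^2 \leq 2r\,\|\wh{\brho}_r - \rho\|_{\mathrm{F}}^2 \leq 8r\,\|\wh{\brho} - \rho\|_{\mathrm{F}}^2$, whence $\E\|\wh{\brho}_r - \rho\|_1^2 = O(rd/n)$. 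This matches the paper's one-line invocation of Cauchy--Schwarz. The only caveat worth noting is that the truncation step requires the algorithm to know the rank bound $r$ (or to use $r=d$), but this is implicit in how the paper parameterizes its tomography guarantees throughout.
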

\begin{theorem} \label{thm:hhjwy}
    (\cite[(14)]{haah2017sample}.)  There is a state tomography algorithm using collective measurements on $n = O(rd/\eps) \cdot \log(d/\eps)$ copies that achieves infidelity~$\eps$.
\end{theorem}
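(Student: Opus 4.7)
The plan is to apply \emph{Keyl's POVM} (the ``quantum maximum likelihood'' measurement) to $\rho^{\otimes n}$. By Schur--Weyl duality, $(\C^d)^{\otimes n} \cong \bigoplus_{\lambda \vdash_d n} V_\lambda^d \otimes \operatorname{Sp}_\lambda$, and Keyl's measurement consists of (i) \emph{weak Schur sampling}: projecting onto the isotypic components to obtain a Young diagram $\blambda$; and (ii) performing a $U(d)$-covariant continuous POVM on the $V_\blambda^d$ factor to obtain a random unitary $\bU$. The returned estimator is $\wh{\brho} = \bU\,\operatorname{diag}(\blambda/n)\,\bU^\dagger$.

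First, I would analyze the spectrum. The marginal law of $\blambda$ depends only on $\operatorname{spec}(\rho)$, with $\Pr[\blambda = \lambda]$ given by a Schur polynomial at the eigenvalues. Using Keyl--Werner / Alicki--Rudnicki--Sadowski concentration, $\blambda/n$ is close to $\operatorname{spec}(\rho)$ with high probability, and in particular the rows $\blambda_{r+1},\dots,\blambda_d$ are tiny (on a scale $1/n$) when $\rank \rho = r$. Next, conditional on $\blambda$, I would analyze the eigenbasis output $\bU$: Keyl's POVM is $U(d)$-equivariant, and covariance with respect to the stabilizer of $\rho$ implies that $\bU$ concentrates around the true eigenbasis of $\rho$ on the relevant $r$-dimensional support, with angular deviations of order $1/\sqrt{n\,\lambda_i}$ between eigenvectors corresponding to nearby eigenvalues.

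The core calculation is then to bound $\E\bigl[1 - \Fidsq{\rho}{\wh{\brho}}\bigr]$. Splitting the infidelity into a ``spectrum'' part and an ``eigenbasis'' part, one shows that the spectrum contribution is governed by a squared-Hellinger distance between $\blambda/n$ and $\operatorname{spec}(\rho)$, contributing $O(r/n)$, while the eigenbasis contribution aggregates $r$ eigenvectors each estimated in a $d$-dimensional space, contributing $O(rd/n)$. Both estimates follow from character-formula / Weyl-integration computations on the Schur-Weyl decomposition, and together give expected infidelity $O(rd/n)$.

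The main obstacle is that infidelity is \emph{multiplicatively} sensitive to small eigenvalues of $\rho$ and can blow up on low-probability ``bad'' events (e.g.\ when $\blambda$ is far in the tail, or when $\bU$ rotates inside a near-degenerate eigenspace). To handle this, I would regularize by outputting $\wh{\brho}_{\mathrm{reg}} = (1-\delta)\wh{\brho} + \delta I/d$ for a small mixing parameter $\delta = \poly(\eps/d)$; this caps any single event's contribution to infidelity at $\log(d/\delta)$, after which a union bound over the $\poly(n,d)$ Young diagrams is affordable and produces the extra $\log(d/\eps)$ factor. Running Markov on the resulting expected infidelity with $n = \Theta(rd/\eps)\log(d/\eps)$ completes the argument.
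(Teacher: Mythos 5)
This is a cited theorem, not one the paper proves: the text references \cite[(14)]{haah2017sample} and states it without argument, so there is no in-paper proof to compare against. Evaluating your proposal on its own terms, however, reveals a genuine gap, and your route is also not the one HHJWY actually take.

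Your plan is to analyze Keyl's POVM (weak Schur sampling followed by the $U(d)$-covariant angular measurement). That is precisely the algorithm analyzed in \cite{odonnell2016efficient} and \cite{odonnell2017efficient}, not the HHJWY algorithm. The crux of your argument is the unsupported claim that ``the eigenbasis contribution aggregates $r$ eigenvectors each estimated in a $d$-dimensional space, contributing $O(rd/n)$,'' which would immediately give $\E[1-\Fidsq{\rho}{\wh{\brho}}] = O(rd/n)$ and hence $n = O(rd/\eps)$ with no log at all. But this is not what is known about Keyl's POVM: as the paper itself remarks immediately after \Cref{thm:hhjwy}, the best infidelity bound currently extracted from Keyl's measurement is $n = O(r^2 d/\eps)$, achieved in~\cite{odonnell2017efficient} via ``very sophisticated representation theory.'' The extra factor of $r$ arises exactly from the cross-terms and near-degenerate eigenvalue pairs in the angular part that your sketch waves away; whether Keyl's POVM achieves expected infidelity $O(rd/n)$ (even up to polylogs) is, to our knowledge, open. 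Your regularization step ($\wh{\brho}_{\mathrm{reg}} = (1-\delta)\wh{\brho} + \delta I/d$) doesn't help with this: infidelity is already bounded in $[0,1]$, so there is no blow-up to cap, and mixing toward the identity only adds error rather than repairing the $r^2d/n$ angular contribution.

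The actual HHJWY argument is structurally different: it is a covering-net-plus-hypothesis-selection (pretty-good-measurement/minimax) argument over an $\eps$-net of rank-$r$, dimension-$d$ density matrices. The $\log(d/\eps)$ factor is exactly the logarithm of the net size, $\log\bigl((1/\eps)^{\Theta(rd)}\bigr) = \Theta(rd\log(1/\eps))$, divided by the per-copy distinguishability $\Theta(\eps)$. Because it is a net argument rather than an estimator-variance argument, the $r$-dependence is linear rather than quadratic, at the price of the logarithm. So your approach and HHJWY's live on different parts of the tradeoff curve, and neither implies the other; to establish \Cref{thm:hhjwy} you would need to switch to the net-based route.
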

\begin{remark}
    Except for the $\log(d/\eps)$ factor, \Cref{thm:hhjwy} is stronger than the corollary in~\Cref{thm:ow}, since $\text{(trace distance)}^2\lesssim \text{infidelity}$.
    If one wishes to have optimal $O(1/\eps)$ dependence on~$\eps$ (no log factor), the best known result is $n = O(r^2d/\eps)$ using very sophisticated representation theory~\cite{odonnell2017efficient}.  On the other hand, if one wishes to have optimal $O(rd)$ dependence (no log factor), prior to the present work the best result was $O(rd/\eps^2)$, following from \Cref{thm:ow} and $\text{infidelity} \lesssim \text{(trace distance)}^2$.
\end{remark}

Turning to lower bounds, Haah--Harrow--Ji--Wu--Yu~\cite{haah2017sample} showed that for collective measurements, $\Omega(d^2/\eps)$ samples are necessary for trace distance-squared tomography in the full-rank case, and $\Omega(\frac{rd}{\eps  \log(d/r\eps)})$ are necessary in the general rank-$r$ case; Yuen~\cite{Yue23} recently removed the log 
factor in case $\eps$ stands for infidelity.
As for single-copy measurement algorithms, \cite{haah2017sample} showed (improving on~\cite{FB15}) that for \emph{nonadaptive} algorithms, $\Omega(\frac{r^2 d}{\eps^2 \log(1/\eps)})$ copies are needed for infidelity-tomography, and $\Omega(d^3/\eps)$ copies are needed for trace distance-squared tomography in the full-rank case. This latter bound was also very recently established~\cite{CHLLS22} even in the \emph{adaptive} single-copy case.

\subsection{Our results} \label{sec:results}

A major question left open by the preceding results is whether quantum state tomography with $\wt{O}(1/\eps)$ dependence is possible for a notion of accuracy more stringent than that of infidelity, such as quantum relative entropy or $\chi^2$-divergence.
Although efficient learning with respect to these more stringent measures is known to be possible in the classical case, we are not aware of any previous provable results along these lines in the quantum case. 
Indeed, these divergences seem fundamentally more difficult to handle, not being bounded in~$[0,1]$, and prior works seemed to suggest that negative results might hold for them. 

Prior authors have considered tomography with respect to these stronger error notions. 
For example, Ferrie and Blume-Kohout~\cite{FB15} investigated qubit tomography with respect to quantum relative entropy, and Ref.~\cite{PGR23} uses $\chi^2$ hypothesis testing to study tomography of (Choi states of) quantum channels. 
A further motivation comes from the work of Blume-Kohout and Hayden~\cite{blumekohout2006accurate}, who showed that the quantum relative entropy is singled out as the unique loss function for quantum tomography once certain plausible and general desiderata of an estimator are specified. 

Our main motivation, which we return to in \Cref{sec:testing}, is a property test for zero quantum mutual information. 
For this application, our argument \emph{requires} us to do quantum state tomography with respect to Bures $\chi^2$-divergence, as only then can we use the quantum ``$\chi^2$-vs.-$\mathrm{H}^2$ identity tester'' from Ref.~\cite{Badescu2019}. 

For these two stronger error notions, we essentially show that \emph{the strongest upper bounds that one could possibly hope for indeed hold}. 
Our main theorem is the following:
\begin{theorem} \label{thm:main} 
    Suppose there exists a tomography algorithm~$\calA$ that obtains expected Frobenius-squared error at most~$f(d,r)/n$ when given $n$ copies of a quantum state $\rho \in \C^{d \times d}$ of rank at most~$r$.  
    Then it may be transformed into a tomography algorithm~$\calA'$ that, given $\eps$, $r$, and 
    \begin{equation}
        n = \wt{O}\bigl(\sqrt{rd} \cdot f(d,r)/\eps\bigr) \text{ copies } \quad \text{(respectively, } n = \wt{O}(r \cdot f(d,r)/\eps)  \text{ copies)},
    \end{equation}
    of~$\rho$, outputs (with probability at least~$.99$) the classical description of a state~$\wh{\brho}$ having 
    \begin{equation}
        \DBchi{\rho}{\wh{\brho}} \leq \eps \text{ Bures $\chi^2$-divergence accuracy} \quad \text{(respectively, }\DKL{\rho}{\wh{\brho}} \leq \eps \text{  relative entropy accuracy).}
    \end{equation}
    Moreover, if $\calA$ uses single-copy measurements, then $\calA'$ does as well, with $O(\log 1/\eps)$ rounds of adaptivity.
\end{theorem}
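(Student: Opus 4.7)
The plan is to boost any Frobenius-squared tomographer $\calA$ to a Bures $\chi^2$-divergence (resp.\ relative entropy) tomographer $\calA'$ by an iterative bootstrap procedure with $O(\log 1/\eps)$ rounds. The guiding idea is that a ``warm-start'' estimate $\sigma$ of $\rho$ (say, with $\DBchi{\rho}{\sigma}$ merely $O(1)$) can dramatically accelerate subsequent tomography in a strong norm, by turning the tomography problem into one that is locally quadratic around $\sigma$ and therefore controllable by Frobenius-type bounds.

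\textbf{Bootstrap structure.} Split the total sample budget across $O(\log 1/\eps)$ batches with geometrically shrinking error targets $\nu_k = \nu_{k-1}/2$. The first batch uses $\wt{O}(f(d,r))$ copies with $\calA$ to produce an initial estimate $\sigma_0$ with Frobenius-squared error $O(1)$; regularize (e.g.\ mix with a small multiple of $I/d$) so that $\sigma_0$ is full-rank with $\DBchi{\rho}{\sigma_0} = O(1)$. In round $k$, invoke $\calA$ on a fresh batch of $m_k$ copies but with POVMs (or postprocessing) transformed according to the current estimate $\sigma_{k-1}$, and combine the output with $\sigma_{k-1}$ (e.g.\ by applying a small $\sigma_{k-1}$-conjugated correction) to obtain $\sigma_k$, inductively satisfying $\DBchi{\rho}{\sigma_k} \leq \nu_k$. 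Stop when $\nu_k \leq \eps$.

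\textbf{Per-round lemma (the crux).} The technical heart is: given a reference estimate $\sigma_{k-1}$ with regularized spectrum, running $\calA$ on $m_k$ fresh copies with a $\sigma_{k-1}$-dependent measurement transformation produces $\sigma_k$ with $\DBchi{\rho}{\sigma_k} \leq \wt{O}(\sqrt{rd}\cdot f(d,r)/m_k)$. The proof rewrites the Bures $\chi^2$-divergence as a $\sigma_{k-1}$-weighted quadratic form in the error $\sigma_k-\rho$, applies Cauchy--Schwarz, and balances the on-support (rank-$r$) part of the error against the off-support (dimension $d-r$) part controlled by regularization; the geometric mean $\sqrt{rd}$ emerges from this balance. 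For the relative entropy case, a Jensen-type inequality exploiting concavity of $\log$ replaces Cauchy--Schwarz and saves a $\sqrt{d}$ factor, yielding conversion factor $r$ and the sharper sample bound $\wt{O}(r\cdot f(d,r)/m_k)$.

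\textbf{Summation and main obstacle.} With $m_k = \wt{O}(\sqrt{rd}\cdot f(d,r)/\nu_k)$ and $\nu_k = \eps\cdot 2^{K-k}$ for $K = O(\log 1/\eps)$, the geometric series $\sum_k m_k$ is dominated by its final term, giving total samples $\wt{O}(\sqrt{rd}\cdot f(d,r)/\eps)$ for Bures $\chi^2$ and $\wt{O}(r\cdot f(d,r)/\eps)$ for relative entropy. Because each round uses a single invocation of $\calA$ on a fresh batch, the boosted algorithm $\calA'$ inherits single-copy measurements from $\calA$, at the cost of $O(\log 1/\eps)$ rounds of adaptivity (one per round of the bootstrap). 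The main obstacle is the per-round lemma itself: one must set up the $\sigma_{k-1}$-dependent transformation so that the Frobenius guarantee of $\calA$ converts cleanly and sharply into Bures $\chi^2$ (resp.\ KL), and crucially use the rank-$r$ structure of $\rho$ throughout; without such care, the conversion factor would degrade to $d$ (resp.\ $\sqrt{rd}$), losing a $\sqrt{d/r}$ factor in the final sample complexity.
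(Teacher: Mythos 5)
Your high-level framing---$O(\log 1/\eps)$ adaptive rounds, a geometric series dominated by the last term, inheritance of single-copy measurements---matches the paper's skeleton.  But the single load-bearing step, your ``per-round lemma,'' is asserted rather than proved, and as stated it does not appear to be correct or to match anything in the paper's argument.  You claim: given \emph{any} reference $\sigma_{k-1}$ with a regularized spectrum, a single invocation of $\calA$ with $m_k$ fresh copies (plus a ``$\sigma_{k-1}$-dependent measurement transformation'' that is never specified) yields $\DBchi{\rho}{\sigma_k}\le\wt O(\sqrt{rd}\,f/m_k)$.  If that were true, no bootstrap would be needed at all---one cheap warm start followed by a single invocation on the full budget would already give the theorem.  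The only role your iteration plays is to produce a regularized spectrum, which costs very little; this is a strong sign the crux lemma is being wished into existence.  Concretely, a Frobenius guarantee $\|\rho-\hat\rho\|_\Fro^2\le\eta$ after depolarization by $\xi$ converts into $\DBchi{\rho}{\Delta_\xi(\hat\rho)}\lesssim(d/\xi)\eta+d\xi$, so ``regularized spectrum'' alone cannot produce a $\sqrt{rd}$ conversion factor; the off-diagonal entries sitting over tiny eigenvalues force a much worse blow-up, and you do not explain what cancels it.  Your initial claim that $O(f)$ copies plus regularization gives $\DBchi{\rho}{\sigma_0}=O(1)$ fails for the same reason.

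The paper's proof of \Cref{thm:main} works quite differently and the difference is essential.  It does not maintain a $\chi^2$-error invariant that halves each round.  Instead, after upgrading $\calA$ to a \emph{subnormalized} diagonal estimator with high-probability guarantees (\Cref{prop:subby,prop:final-upgrade}), the central routine (\Cref{thm:central}) runs a multi-scale sweep: each stage identifies the $O(r)$ indices whose eigenvalues exceed roughly $\tau_t/(100r)$, certifies them as ``done,'' and recurses on the top-left submatrix whose trace $\tau_{t+1}$ has dropped by a factor of two.  The $\chi^2$-contribution from each stage is a constant (roughly $rf/m_\delta$), and these contributions \emph{add linearly} over $O(\log 1/\tilde\eps)$ stages, giving the log factors; they do not shrink geometrically.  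The $\sqrt{rd}$ appears only at the very end (\Cref{cor:cookie}, \Cref{rem:cookies}), from optimizing the depolarization parameter $\eta$ against the residual trace $\tilde\eps$ in the leftover small-eigenvalue block $\bL$.  The relative-entropy bound also does not come from ``a Jensen-type inequality exploiting concavity of $\log$'' in your per-round lemma; the paper derives it from a reverse Pinsker-type inequality relating $\DKL{\cdot}{\cdot}$ to Hellinger-squared and $\Dren{\infty}{\cdot}{\cdot}$ (\Cref{thm:quantum-rev-ineq}, \Cref{thm:learn-KL-easy}), applied once at the end.  To make your proposal rigorous you would need to actually state and prove the per-round lemma, and I do not see how to do so without essentially reinventing the multi-scale/submatrix argument.
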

By plugging in \Cref{thm:ow,thm:krt}, one immediately concludes:
\begin{corollary}   \label{cor:1}
    There is a state tomography algorithm using collective measurements on $n = \wt{O}(r^{.5} d^{1.5}/\eps) \leq \wt{O}(d^2/\eps)$ copies that achieves $\chi^2$-divergence accuracy~$\eps$.
\end{corollary}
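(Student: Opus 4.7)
The plan is to invoke \Cref{thm:main} with the collective-measurement algorithm of \Cref{thm:ow} as the black-box routine~$\calA$. \Cref{thm:ow} guarantees that $\calA$ achieves expected Frobenius-squared error $O(d/n)$ on any $d$-dimensional state, independently of rank; in the notation of \Cref{thm:main} this corresponds to taking $f(d,r) = O(d)$.

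Substituting into the first (collective-measurement) clause of \Cref{thm:main}, the transformed algorithm~$\calA'$ achieves $\chi^2$-divergence accuracy~$\eps$ (with probability at least~$.99$) from
\[
n = \wt{O}\bigl(\sqrt{rd}\cdot f(d,r)/\eps\bigr) = \wt{O}\bigl(\sqrt{rd}\cdot d/\eps\bigr) = \wt{O}(r^{0.5}d^{1.5}/\eps)
\]
copies. Since $r \leq d$, we also have $r^{0.5}d^{1.5} \leq d^2$, which yields the stated comparison $\wt{O}(r^{0.5}d^{1.5}/\eps) \leq \wt{O}(d^2/\eps)$. The algorithm~$\calA'$ remains a collective-measurement algorithm because $\calA$ is.

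There is no real obstacle at this stage: all the substantive work is packaged inside \Cref{thm:main}, which performs the $\chi^2$ boosting from an arbitrary Frobenius-squared tomographer. The only sanity check is that the rank-free bound of \Cref{thm:ow} is a legitimate instantiation of the rank-parameterized $f(d,r)$ appearing in \Cref{thm:main}, which it trivially is (one can read $O(d)$ as $O(d) \leq O(rd)$ if desired).
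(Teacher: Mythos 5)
Your proposal is correct and is exactly the paper's argument: the paper derives \Cref{cor:1} by "plugging in" \Cref{thm:ow} (as the Frobenius-squared learner with $f(d,r)=O(d)$) into the first clause of \Cref{thm:main}, which gives $n=\wt{O}(\sqrt{rd}\cdot d/\eps)=\wt{O}(r^{0.5}d^{1.5}/\eps)$, bounded by $\wt{O}(d^2/\eps)$ since $r\le d$. No further commentary is needed.
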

\begin{corollary}   \label{cor:1.5}
    There is a state tomography algorithm using collective measurements on $n = \wt{O}(r d/\eps)$ copies that achieves relative entropy accuracy~$\eps$.
\end{corollary}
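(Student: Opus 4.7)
The plan is to derive \Cref{cor:1.5} as an immediate instantiation of the ``respectively'' (relative entropy) branch of \Cref{thm:main}, fed with the collective-measurement algorithm of \Cref{thm:ow} as the base algorithm~$\calA$. That result guarantees expected Frobenius-squared error $O(d/n)$ on any $d$-dimensional state, with \emph{no} rank dependence, so we may take $f(d,r) = C d$ for a universal constant~$C$ when invoking \Cref{thm:main}.

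Plugging this $f$ into the second bound of \Cref{thm:main} produces an algorithm $\calA'$ that, on
\begin{equation*}
    n = \wt{O}\bigl(r \cdot f(d,r)/\eps\bigr) = \wt{O}(rd/\eps)
\end{equation*}
copies, outputs an estimate $\wh{\brho}$ satisfying $\DKL{\rho}{\wh{\brho}} \leq \eps$ with probability at least $.99$. Because $\calA$ uses collective measurements, so does $\calA'$, matching the claim of the corollary.

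All of the substance lives in \Cref{thm:main}: the corollary itself has no ``hard step'' once that theorem and \Cref{thm:ow} are in hand. The one design choice worth highlighting is why we select \Cref{thm:ow} rather than \Cref{thm:krt} as the base algorithm --- the latter would force $f(d,r) = O(rd)$ and hence give only $\wt{O}(r^2 d/\eps)$, whereas the rank-free Frobenius guarantee of \Cref{thm:ow} is exactly what is needed to land at the advertised $\wt{O}(rd/\eps)$. Thus the only thing to verify is that \Cref{thm:ow} genuinely produces an explicit classical description of an estimate with the stated expected Frobenius-squared bound --- which is built into its statement --- so the hypothesis of \Cref{thm:main} is satisfied and the corollary follows.
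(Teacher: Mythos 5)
Your derivation is exactly the paper's primary route: the text immediately following \Cref{cor:2.5} states ``By plugging in \Cref{thm:ow,thm:krt}, one immediately concludes'' the four corollaries, and for \Cref{cor:1.5} this means exactly instantiating the relative-entropy branch of \Cref{thm:main} with the rank-free rate $f(d,r) = O(d)$ from \Cref{thm:ow}. (The paper also notes an alternative derivation, via \Cref{thm:learn-KL-easy} applied to \Cref{thm:hhjwy}, but what you wrote matches the intended instantiation.)
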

\begin{corollary}   \label{cor:2}
    There is a state tomography algorithm using single-copy measurements and $O(\log 1/\eps)$ rounds of adaptivity on $n = \wt{O}(r^{1.5} d^{1.5}/\eps) \leq \wt{O}(d^3/\eps)$ copies that achieves $\chi^2$-divergence accuracy~$\eps$.
\end{corollary}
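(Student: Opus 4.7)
The plan is to derive \Cref{cor:2} immediately from \Cref{thm:main} by instantiating its hypothesis with the nonadaptive single-copy tomography algorithm of Kueng--Rauhut--Terstiege (\Cref{thm:krt}). Concretely, I would let $\calA$ be the algorithm of \Cref{thm:krt}, which achieves expected Frobenius-squared error $O(rd/n)$ on $d$-dimensional rank-$r$ states; this corresponds to taking $f(d,r) = O(rd)$ in the statement of \Cref{thm:main}.

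Plugging this into the first (``$\chi^2$-divergence'') clause of \Cref{thm:main} produces an algorithm $\calA'$ that, for any target accuracy $\eps$, outputs (with probability at least $.99$) an estimate whose Bures $\chi^2$-divergence from~$\rho$ is at most $\eps$, using
\[
n \;=\; \wt{O}\bigl(\sqrt{rd}\cdot f(d,r)/\eps\bigr) \;=\; \wt{O}\bigl(\sqrt{rd}\cdot rd/\eps\bigr) \;=\; \wt{O}(r^{1.5} d^{1.5}/\eps)
\]
copies. Since $r \leq d$ always, this is automatically bounded by $\wt{O}(d^3/\eps)$, matching the weaker bound stated in the corollary.

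Finally, I would invoke the ``Moreover'' clause of \Cref{thm:main}: if $\calA$ uses single-copy measurements, then so does $\calA'$, at the cost of $O(\log 1/\eps)$ rounds of adaptivity. Since the algorithm of \Cref{thm:krt} is in fact \emph{nonadaptively} single-copy, this clause applies and yields the measurement/adaptivity guarantee claimed in \Cref{cor:2}. There is essentially no technical obstacle in this derivation; all of the real work has been absorbed into \Cref{thm:main}, whose role is precisely to convert any Frobenius-squared error bound into a $\chi^2$-divergence bound while preserving the single-copy measurement structure up to logarithmically many rounds of adaptivity.
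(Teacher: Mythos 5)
Your derivation is correct and is precisely the paper's own: the paper states that Corollaries 1--2.5 follow "by plugging in \Cref{thm:ow,thm:krt}" into \Cref{thm:main}, and your instantiation with $f(d,r)=O(rd)$ from \Cref{thm:krt} together with the "Moreover" clause about preserving single-copy measurements is exactly that argument.
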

\begin{corollary}   \label{cor:2.5}
    There is a state tomography algorithm using single-copy measurements and $O(\log 1/\eps)$ rounds of adaptivity on $n = \wt{O}(r^2 d/\eps) \leq \wt{O}(d^3/\eps)$ copies that achieves relative entropy accuracy~$\eps$.
\end{corollary}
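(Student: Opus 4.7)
The plan is to derive Corollary~\ref{cor:2.5} directly by plugging the Kueng--Rauhut--Terstiege algorithm (Theorem~\ref{thm:krt}) into the relative-entropy branch of our main reduction (Theorem~\ref{thm:main}). Theorem~\ref{thm:krt} supplies a tomography procedure $\calA$ that uses \emph{nonadaptive} single-copy measurements and achieves expected Frobenius-squared error at most $f(d,r)/n$ with $f(d,r) = O(rd)$ on any rank-$r$ state in $\C^{d\times d}$, so the hypotheses of Theorem~\ref{thm:main} are met.

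Invoking the relative-entropy clause of Theorem~\ref{thm:main} with this $\calA$ would produce an algorithm $\calA'$ that outputs $\wh{\brho}$ with $\DKL{\rho}{\wh{\brho}} \leq \eps$ (with probability at least $.99$) using
\[
    n \;=\; \wt{O}\!\left(r \cdot f(d,r)/\eps\right) \;=\; \wt{O}\!\left(r \cdot rd/\eps\right) \;=\; \wt{O}(r^2 d / \eps)
\]
copies. Because the base algorithm $\calA$ uses single-copy measurements, the last sentence of Theorem~\ref{thm:main} guarantees that $\calA'$ also uses single-copy measurements, while introducing at most $O(\log 1/\eps)$ rounds of adaptivity. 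The further bound $\wt{O}(r^2 d/\eps) \leq \wt{O}(d^3/\eps)$ is then immediate from $r \leq d$, which also covers the case of an unknown rank or full-rank input.

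In short, there is no genuine obstacle here: the ``main obstacle'' lies entirely inside Theorem~\ref{thm:main}, and once that black box is in hand, Corollary~\ref{cor:2.5} is a one-line substitution. The only minor bookkeeping items to verify are (i) that Theorem~\ref{thm:krt}'s guarantee is indeed of the form ``expected Frobenius-squared error at most $f(d,r)/n$'' with $f(d,r) = O(rd)$ (it is, as stated); (ii) that $\calA'$ correctly receives~$r$ as an input when the promised rank is~$r$, and defaults to $r = d$ otherwise; and (iii) that the reduction preserves the single-copy nature of the measurements while only incurring the stated logarithmic adaptivity overhead, which is precisely what Theorem~\ref{thm:main} asserts.
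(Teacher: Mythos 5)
Your proposal is correct and matches the paper's proof exactly: the paper derives Corollary~\ref{cor:2.5} by plugging the rate bound $f(d,r)=O(rd)$ from Theorem~\ref{thm:krt} into the relative-entropy branch of Theorem~\ref{thm:main}, yielding $n=\wt{O}(r\cdot rd/\eps)=\wt{O}(r^2d/\eps)\leq\wt{O}(d^3/\eps)$ copies with single-copy measurements and $O(\log 1/\eps)$ rounds of adaptivity.
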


    Note that in the collective-measurement case, \Cref{cor:1.5} matches (up to a logarithmic factor) the $\wt{O}(rd/\eps)$ bound known previously only for infidelity-tomography, and \Cref{cor:1} also matches it in the high-rank~$r = \Theta(d)$ case.
    As for \Cref{cor:2,cor:2.5}, independent and contemporaneous work~\cite{CHLLS23} showed a weaker version of \Cref{cor:2.5} with infidelity accuracy in place of relative entropy.

\begin{remark}
    Although one would wish to achieve $\wt{O}(rd/\eps)$ scaling for $\chi^2$-tomography, we later discuss in \Cref{rem:itshard} why it seems hard to achieve dependence better than~$\wt{O}(d^{1.5}/\eps)$ even in the pure $r = 1$ case.
\end{remark}

\begin{remark}  \label{rem:111}
    In the case of~$d = 2$ (a qubit), we remove all log factors and show that $n = O(1/\eps)$ single-copy measurements with one round of adaptivity suffice for tomography with respect to $\chi^2$-divergence. 
    This simple algorithm, which illustrates the very basic idea of our \Cref{thm:main}, is given in \Cref{subsec:qubit}.
\end{remark}

\begin{remark}
    Although we have suppressed polylog factors (at most quadratic) with our $\wt{O}(\cdot)$ notation, for the case of tomography with respect to infidelity our polylog factors are actually \emph{better} than previously known in some regimes. 
    As an example, for collective measurements we have an infidelity algorithm with complexity $n = \wt{O}(\frac{rd}{\eps} \log^2(1/\eps)\log\log(1/\eps))$, which improves on the $\wt{O}(\frac{rd}{\eps} \log(d/\eps))$ bound from~\cite{haah2017sample} (and the $\wt{O}(\frac{rd}{\eps^2})$ bound following from~\cite{odonnell2016efficient})
    whenever~$\eps$ is ``large''; specifically, for $\eps \geq \exp(-\Omega(\sqrt{q}/\log q))$ in the $q$-qubit ($d = 2^q$) case. 
    See \Cref{cor:log} for details.
\end{remark}

Finally, in \Cref{sec:testing} we apply our $\chi^2$-divergence tomography algorithm to the task of \emph{testing for zero quantum mutual information}. 
In this problem, the tester gets access to~$n$ copies of a \emph{bipartite} quantum state~$\rho$ on $\C^A \otimes \C^B$ where $|A| = |B| = d$. 
The task is to accept (with probability at least~$2/3$) if the mutual information $I(A : B)_\rho$ is zero (meaning $\rho = \rho_A \otimes \rho_B$ is a product state), and to reject (with  probability at least~$2/3$) if $I(A : B)_\rho \geq \eps$. 
We show:
\begin{theorem} \label{thm:test}
    Testing for zero quantum mutual information can be done with $n = \wt{O}(1/\eps) \cdot (d^2 + r d^{1.5} + r^{.5} d^{1.75})$ samples when $\rho_A, \rho_B$ have rank at most $r \leq d$.
\end{theorem}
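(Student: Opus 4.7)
The plan rests on the identity $I(A:B)_\rho = \DKL{\rho_{AB}}{\rho_A \otimes \rho_B}$ together with the inequality $\DKL{\cdot}{\cdot} \leq \DBchi{\cdot}{\cdot}$, so it suffices to distinguish $\DBchi{\rho_{AB}}{\rho_A \otimes \rho_B} = 0$ from~$\geq \eps$. The overall strategy is a learn-then-test pipeline: first learn the two marginals in $\chi^2$, then estimate the $\chi^2$-divergence of $\rho_{AB}$ from the product of the learned marginals.

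Phase~1 (learn the marginals). Trace out each subsystem of the given copies to obtain effective copies of $\rho_A$ and $\rho_B$, and invoke Corollary~\ref{cor:1} on each, producing estimates with $\DBchi{\rho_X}{\wh{\rho}_X} \leq \eps_1$ for $X \in \{A, B\}$ at a cost of $\wt{O}(r^{0.5}d^{1.5}/\eps_1)$ copies each. A mild smoothing step (mixing each marginal with $\delta \mathbb{I}/d$ for some $\delta = \mathrm{poly}(\eps/d)$) lower-bounds the eigenvalues of the product reference $\wh{\sigma} := \wh{\rho}_A \otimes \wh{\rho}_B$ while perturbing $\DBchi{\rho_X}{\wh{\rho}_X}$ only by~$O(\delta)$, which is the prerequisite for Phase~2's variance analysis.

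Phase~2 (test against $\wh{\sigma}$). A direct application of Corollary~\ref{cor:1} to $\rho_{AB}$ (rank $\leq rd$, dimension~$d^2$) would cost a prohibitive $\wt{O}(r^{0.5}d^{3.5}/\eps)$, so instead I construct a dedicated two-copy estimator for $\DBchi{\rho_{AB}}{\wh{\sigma}}$. The key is the identity
\[
\mathrm{tr}\bigl[\rho_{AB}\wh{\sigma}^{-1/2}\rho_{AB}\wh{\sigma}^{-1/2}\bigr] = \mathrm{tr}\bigl[(\rho_{AB}\otimes \rho_{AB})\,(\wh{\sigma}^{-1/2}\otimes \wh{\sigma}^{-1/2})\,\mathrm{SWAP}\bigr],
\]
combined with the factorization $\wh{\sigma}^{-1/2} = \wh{\rho}_A^{-1/2}\otimes \wh{\rho}_B^{-1/2}$, so the required measurements decompose over the two tensor factors and can be implemented on pairs of copies. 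A second-moment analysis that allocates the variance contributions according to (i) the ambient dimension $d^2$, (ii) the rank-$r$ conditioning of each marginal, and (iii) the Phase~1 accuracy $\eps_1$ should produce the three additive terms $d^2/\eps$, $rd^{1.5}/\eps$, and $r^{0.5}d^{1.75}/\eps$ of the stated complexity.

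Finally, the phases are tied together by the tensor identity $1 + \DBchi{\rho_A\otimes\rho_B}{\wh{\rho}_A\otimes\wh{\rho}_B} = (1 + \DBchi{\rho_A}{\wh{\rho}_A})(1 + \DBchi{\rho_B}{\wh{\rho}_B})$ and an approximate triangle-type inequality for $\chi^2$, which together bound $\lvert \DBchi{\rho_{AB}}{\rho_A\otimes \rho_B} - \DBchi{\rho_{AB}}{\wh{\sigma}} \rvert$ in terms of $\eps_1$; choosing $\eps_1 = \Theta(\eps)$ and thresholding the Phase~2 estimate at $\eps/2$ yields the test. The main obstacle I anticipate is the Phase~2 variance analysis: a naive bound scales with $\lVert \wh{\sigma}^{-1} \rVert_\infty^2 \sim d^4/\delta^4$, far too large, and the proof must exploit both the tensor-product and the low-rank structure of $\wh{\sigma}$ to split the variance into factor-wise, rank-sensitive contributions and recover the claimed scaling.
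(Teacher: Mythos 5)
Your overall ``learn the marginals, then test against the learned product'' architecture matches the paper's, but the proof has a critical gap: the tensorization identity you invoke at the end,
\[
1 + \DBchi{\rho_A\otimes\rho_B}{\wh{\rho}_A\otimes\wh{\rho}_B} = (1 + \DBchi{\rho_A}{\wh{\rho}_A})(1 + \DBchi{\rho_B}{\wh{\rho}_B}),
\]
is \emph{false} for the Bures $\chi^2$-divergence. It holds for the classical $\chi^2$-divergence (the paper's \Cref{prop:chi-add}), and the paper uses that fact only on the diagonal (``ON-ON'') part. But in the quantum setting the off-off-diagonal contribution has denominator $s'_a t'_i + s'_b t'_j$, which is \emph{not} $(s'_a + s'_b)(t'_i + t'_j)$, so there is no product structure. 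This is precisely the technical obstruction that drives the entire design of the paper's \Cref{thm:suffering}: the OFF-OFF term only factorizes after an AM-GM step that introduces a lossy factor of $d/\eps$, and absorbing that loss forces one to learn the \emph{off-diagonal} part of each marginal to finer accuracy $\wt{O}(\eps/\sqrt{d})$ than the on-diagonal part (this is where the separate $\DBoff$ / $\DBon$ control in \Cref{cor:cookie} comes in, and where the terms $rd^{1.5}/\eps$ and $r^{.5}d^{1.75}/\eps$ actually come from). Simply applying \Cref{cor:1} to each marginal with error $\eps_1 = \Theta(\eps)$, as you propose, will not give $\DBchi{\rho_A\otimes\rho_B}{\wh{\rho}_A\otimes\wh{\rho}_B} = O(\eps)$; the off-off term can blow up by a factor of order $d/\eps$.

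There are two further, smaller gaps. First, your Phase~2 is a custom two-copy SWAP-trick estimator for $\DBchi{\rho_{AB}}{\wh{\sigma}}$ whose variance analysis you flag but don't carry out; the paper instead plugs in the known $\chi^2$-vs.-$\mathrm{H}^2$ state-certification tester of B\u{a}descu--O'Donnell--Wright (\Cref{thm:bow}), which gives the $\wt{O}(d^2/\eps)$ term with no new analysis needed. This sidesteps your concern about the variance scaling with $\|\wh{\sigma}^{-1}\|_\infty^2$, and it also handles soundness cleanly: when $I(A:B)_\rho \geq \eps$, there is no a priori bound on $\DBchi{\rho_{AB}}{\wh{\sigma}}$, so a direct $\chi^2$ estimator would have unbounded variance; the paper instead proves (via the reverse-Pinsker-type \Cref{thm:Hell-mutual-info}) that $\DHellsq{\rho}{\rho_A\otimes\rho_B} \geq \Omega(\eps/\log(d/\eps))$, and Hellinger is a bounded metric so a triangle inequality with $\wh{\sigma}$ goes through. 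Your write-up uses only $\DKL \leq \DBchi$, which gives a lower bound on $\DBchi{\rho}{\rho_A\otimes\rho_B}$ in the far case, but (i) $\chi^2$ is not a metric so you cannot pass from $\rho_A\otimes\rho_B$ to $\wh{\sigma}$ via triangle inequality, and (ii) you would still need a tester that is sound against arbitrarily large $\chi^2$. Both issues are handled in the paper by switching to Hellinger distance on the reject side.
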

\begin{remark}
  The above bound is no worse than $\wt{O}(d^{2.5}/\eps)$, and is $\wt{O}(d^2/\eps)$ whenever $r \leq \sqrt{d}$. 
  One should also recall the total dimension of~$\rho$ is~$d^2$.
\end{remark}
\begin{remark}
    Harrow and Montanaro~\cite{Harrow2013} have considered a related ``product tester'' problem in the special case where the input is a pure state $|\psi\rangle$. 
    Whenever the maximum overlap $\langle\psi|\rho|\psi\rangle$ with any product state $\rho$ is $1-\eps$, the test passes with probability $1-\Theta(\eps)$ using only two copies of $|\psi\rangle$. 
    By itself however, this bound does not test quantum mutual information in the above sense, even for the rank-1 case. 
\end{remark}
\begin{remark}
    An important feature of our result is its (near-)linear scaling in~$1/\eps$. 
    This is despite the fact that estimating mutual information to~$\pm \eps$ accuracy requires $\Omega(1/\eps^2)$ samples, even for $d = 2$ and even for the classical case.
\end{remark}

Our proof of \Cref{thm:test} has two steps. 
First, we learn an estimate $\wh{\brho}_A \otimes \wh{\brho}_B$ of the marginals $\rho_A \otimes \rho_B$ that has small $\chi^2$-divergence.  
Then second we use the ``$\chi^2$-vs.-infidelity'' state certification algorithm from~\cite{Badescu2019} to test whether the unknown state~$\rho$ is close to the ``known'' state $\wh{\brho}_A \otimes \wh{\brho}_B$. 
The second step requires us to relate infidelity to relative entropy (and hence mutual information); but more crucial is that in the first step, we \emph{must} be able to do state tomography with Bures $\chi^2$-divergence as the loss measure. 
Thus we have an example where $\chi^2$-tomography is not just done for its own sake, but is necessary for a subsequent application.

Incidentally, we also show that the same two-step process works well for the problem of testing zero \emph{classical} mutual information given samples from a probability distribution~$p$ on $[d] \times [d]$:
\begin{theorem} \label{thm:test-classical}
    Testing for zero classical mutual information can be done with $n = O((d/\eps) \cdot \log(d/\eps))$ samples.  
\end{theorem}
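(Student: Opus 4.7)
The plan is to follow the same two-step template used for the quantum version in \Cref{thm:test}. Split the $n$ samples from $p$ on $[d]\times[d]$ into two batches of sizes $n_1 = O((d/\eps)\log(d/\eps))$ and $n_2 = O(d/\eps)$. The first batch is used to learn $\chi^2$-good estimates $\wh{\bp}_X, \wh{\bp}_Y$ of the two marginals by projecting samples onto the $X$- and $Y$-coordinates; the second batch is used to run a classical identity test against the fixed product reference distribution $\wh{\bp}_X \otimes \wh{\bp}_Y$.

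For the learning step, I apply the classical add-one $\chi^2$-tomography estimator of \Cref{prop:learn-chi2} to each marginal independently. The expected divergence is $O(d/n_1)$ per marginal, and running $O(\log(d/\eps))$ independent repetitions and selecting the best via a Markov-style argument yields $\dchisq{p_X}{\wh{\bp}_X}, \dchisq{p_Y}{\wh{\bp}_Y} \leq \eps/C$ with high probability, for any fixed absolute constant~$C$ I choose.

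For the testing step, I invoke a standard classical tolerant $\chi^2$-vs.-$\ell_1$ identity tester (in the style of Acharya--Daskalakis--Kamath / Chan--Diakonikolas--Valiant--Valiant) on the product domain of size $D = d^2$ with reference $\wh{\bp}_X \otimes \wh{\bp}_Y$. Setting the testing gap to $\eps_2 = c\sqrt{\eps}$, such a tester uses $O(\sqrt{D}/\eps_2^2) = O(d/\eps)$ samples and distinguishes $\dchisq{p}{\wh{\bp}_X \otimes \wh{\bp}_Y} \leq c'\eps_2^2$ from $\dtv{p}{\wh{\bp}_X \otimes \wh{\bp}_Y} \geq \eps_2$. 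Completeness, when $I(X:Y)=0$, uses the tensorization identity $1 + \dchisq{p_X \otimes p_Y}{\wh{\bp}_X \otimes \wh{\bp}_Y} = (1 + \dchisq{p_X}{\wh{\bp}_X})(1 + \dchisq{p_Y}{\wh{\bp}_Y})$ to get $\dchisq{p}{\wh{\bp}_X \otimes \wh{\bp}_Y} = O(\eps/C) \leq c'\eps_2^2$ for large~$C$. Soundness, when $I(X:Y) \geq \eps$, uses Pinsker ($\dtv{p}{p_X \otimes p_Y} \geq \sqrt{\eps/2}$), the additivity of KL over products combined with $\chi^2 \geq \mathrm{KL}$ (giving $\dtv{p_X \otimes p_Y}{\wh{\bp}_X \otimes \wh{\bp}_Y} \leq \sqrt{\eps/C}$), and the triangle inequality to conclude $\dtv{p}{\wh{\bp}_X \otimes \wh{\bp}_Y} \geq \eps_2$.

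The only delicate point I anticipate is in the testing step: the identity-testing guarantee must be phrased in the \emph{tolerant} ``$\chi^2$-vs.-$\ell_1$'' form (accept on $\chi^2$-closeness, reject on $\ell_1$-farness), because the YES case only controls $\chi^2$, not TV, of $p_X \otimes p_Y$ relative to $\wh{\bp}_X \otimes \wh{\bp}_Y$. Fortunately this exact variant is available off-the-shelf in the classical identity-testing literature, and all remaining manipulations are routine applications of the big-four inequalities.
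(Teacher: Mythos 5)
Your soundness argument contains a fatal error: you apply Pinsker's inequality in the wrong direction. Pinsker's inequality states
\begin{equation}
    \dKL{p}{q} \geq 2\,\dtv{p}{q}^2, \quad\text{i.e.,}\quad \dtv{p}{q} \leq \sqrt{\tfrac12\dKL{p}{q}},
\end{equation}
which bounds TV \emph{from above} in terms of KL. From $I(X:Y)_p = \dKL{p}{p_X \times p_Y} \geq \eps$ one cannot conclude $\dtv{p}{p_X\times p_Y} \geq \sqrt{\eps/2}$; indeed the KL divergence can be arbitrarily large (even infinite) while the TV distance is arbitrarily small (take $p$ placing mass $\eta$ on a point where the product distribution has zero mass). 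Without this lower bound on TV, your rejection case fails, and the whole argument collapses.

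This is exactly the difficulty the paper's \Cref{cor:hell-mutual-info} is designed to address. What one needs in the NO case is a \emph{reverse}-Pinsker-type statement lower-bounding a bounded divergence ($\mathrm{H}^2$ or TV) by the (possibly unbounded) KL divergence, and such a statement necessarily incurs a multiplicative factor of $\Dren{\infty}{\cdot}{\cdot}$, which after smoothing becomes $\log(d/\eps)$. That factor is the true source of the $\log(d/\eps)$ in the theorem. Concretely, \Cref{cor:hell-mutual-info} gives only $\dhellsq{p}{p_A \times p_B} \geq \Omega\bigl(\eps/\log(d/\eps)\bigr)$, hence at best $\dtv{p}{p_A\times p_B} \geq \Omega\bigl(\eps/\log(d/\eps)\bigr)$. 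If you then feed this into a $\chi^2$-vs.-$\ell_1$ tester of sample complexity $O(\sqrt{D}/\eps_2^2)$ with $\ell_1$-gap $\eps_2 = \Theta(\eps/\log(d/\eps))$, you get $O(d\log^2(d/\eps)/\eps^2)$ samples — quadratically worse in $\eps$ than the target. The paper avoids this by using the DKW $\chi^2$-vs.-$\mathrm{H}^2$ tester (\Cref{thm:dkw}), whose sample complexity $O(\sqrt{D}/\eps')$ scales with the Hellinger-\emph{squared} gap $\eps' = \Theta(\eps/\log(d/\eps))$ rather than the square of an $\ell_1$ gap; that is what recovers near-linear $1/\eps$ scaling. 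So the correct route is: learn marginals to $\chi^2 \leq c\eps/\log(d/\eps)$ via \Cref{lem:learn-marginals}, use \Cref{cor:hell-mutual-info} (not Pinsker) to lower-bound $\dhellsq{p}{\wh{\bp}_A\times\wh{\bp}_B}$ in the NO case, and invoke \Cref{thm:dkw} with that parameter — which is what the paper does.
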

This actually improves on the best known previous algorithm, due to Bhattacharyya--Gayen--Price--Vinodchandran~\cite{Bhattacharyya2021}, by a factor of $d \log d$.

\section{Basic results on distances and divergences} \label{sec:prelims}

\begin{notation}
    If $\rho \in \C^{d \times d}$  is a matrix and $S \subseteq [d]$, we will write $\rho[S] \in \C^{S \times S}$ for the submatrix formed by the rows and columns from~$S$.  
    If $S = [s] = \{1, 2, \dots, s\}$, we will write simply~$\rho[s]$.  We use similar notation when $p \in \R^d$ is a vector.
\end{notation}

\begin{remark}  \label{rem:triangle}
    As we frequently deal with $\ell_2^2$~error or Frobenius-squared error in this work, we often use the ``triangle inequality'' $(a-c)^2 \leq 2(a-b)^2 + 2(b-c)^2$ without additional comment.
\end{remark}

\subsection{Classical distances and divergences}

Throughout this section, let $p = (p_1, \dots, p_d)$ and $q = (q_1, \dots, q_d)$ denote probability distributions on~$[d]$. 
We also use the conventions $0/0 = 0$, $x/0 = \infty$ for $x > 0$, and trust the reader to interpret other such expressions appropriately (using continuity).

We now recall some distances between probability distributions.
\begin{definition}
    For $f : (0,\infty)\to \R$ strictly convex at~$1$ with $f(1)=0$, the associated \emph{$f$-divergence} is 
    \begin{equation}
        \df{p}{q} = \E_{\bj \sim q}[f(p_{\bj}/q_{\bj})].
    \end{equation}
\end{definition}
\begin{remark}
    All $f$-divergences satisfy the data processing inequality; see e.g.~\cite[Thm.~4.2]{WuITStats}. 
\end{remark}

\begin{definition}
    For $\alpha \in [0,\infty]$, the associated \emph{R\'enyi divergence} is defined by
    \begin{equation}
        \dren{\alpha}{p}{q} = \frac{1}{\alpha-1} \ln \sum_{i=1}^d p_i^\alpha q_i^{1-\alpha}.
    \end{equation}
\end{definition}
We will use a few particular cases:
\begin{definition}
    The \emph{total variation distance}, a metric, is the $f$-divergence with $f(x) = \tfrac12\abs{x-1}$:
    \begin{equation}
        \dtv{p}{q} = \tfrac12 \sum_{i=1}^d \abs{p_i - q_i}.
    \end{equation}
\end{definition}
\begin{definition}
    The \emph{Hellinger distance} $\dhell{p}{q}$, a metric, is the square-root of the $f$-divergence with $f(x) = (\sqrt{x}-1)^2$:
    \begin{equation}
        \dhellsq{p}{q} = \sum_{i=1}^d(\sqrt{p_i} - \sqrt{q_i})^2.
    \end{equation}
    It is also essentially a R{\'e}nyi divergence.
    More precisely, the \emph{Bhattacharyya coefficient} between $p$ and $q$ is
    \begin{equation}    \label{eqn:bhatt}
        \BC{p}{q} = \sum_{i=1}^d \sqrt{p_i} \sqrt{q_i} = \exp(-\tfrac12 \cdot \dren{1/2}{p}{q}),
    \end{equation}
    and we have $\dhellsq{p}{q} = 2(1-\BC{p}{q})$. 
    (Note the useful tensorization identity, $\BC{p_1 \otimes p_2}{q_1 \otimes q_2} = \BC{p_1}{q_1} \cdot \BC{p_2}{q_2}$.)
\end{definition}
\begin{definition}
    The \emph{KL divergence} (or \emph{relative entropy}) is both an $f$-divergence (with $f(x) = x\ln x$ or $f(x) = x \ln x - (x-1)$) and a R{\'e}nyi divergence (with $\alpha = 1$):
    \begin{equation}
        \dKL{p}{q}  = \sum_{i=1}^d p_i \ln(p_i/q_i).
    \end{equation}
    Also, if $p$ is a ``bipartite'' probability distribution on finite outcome set $A \times B$, and $p_A$, $p_B$ denote its marginals, we may define the \emph{mutual information}
    \begin{equation}
        I(A : B)_p = \dKL{p}{p_A \times p_B}.
    \end{equation}
\end{definition}
\begin{definition}  \label{def:chi2}
    The \emph{$\chi^2$-divergence} is the $f$-divergence with $f(x) = (x-1)^2$:
    \begin{equation}    \label{eqn:defchi2}
        \dchisq{p}{q} = \sum_{i=1}^d \frac{(p_i-q_i)^2}{q_i} = \parens*{\sum_{i=1}^d \frac{p_i^2}{q_i}}-1.
    \end{equation}
    We will sometimes use the first formula even when~$p$ and/or~$q$ do not sum to~$1$.
\end{definition}
\begin{definition}
    The \emph{max-relative entropy} (or \emph{worst-case regret}) is defined to be
    \begin{equation}
        \dren{\infty}{p}{q} = \max_{\substack{i \in [d] \\ p_i \neq 0}} \{\ln(p_i/q_i)\}.
    \end{equation}
\end{definition}

The following chain of inequalities is well known (see, e.g.,~\cite{gibbs2002choosing}):
\begin{proposition} \label{prop:classical-ineqs}
    $\displaystyle 
         \tfrac12\dhellsq{p}{q}
    \leq \dtv{p}{q}
    \leq \dhell{p}{q}
    \leq \sqrt{\dKL{p}{q}}
    \leq \sqrt{\dchisq{p}{q}}.
    $
\end{proposition}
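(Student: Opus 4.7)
The plan is to prove the chain one link at a time, working only from the explicit formulas in the definitions above. All four inequalities are classical, so the main challenge is expository: to present them uniformly and crisply, not to overcome a conceptual obstacle.

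For the first inequality, $\tfrac12\dhellsq{p}{q} \leq \dtv{p}{q}$, I would use the elementary factorization $(\sqrt{p_i}-\sqrt{q_i})^2 \leq |\sqrt{p_i}-\sqrt{q_i}|\cdot(\sqrt{p_i}+\sqrt{q_i}) = |p_i - q_i|$, valid term by term, and then sum. For the second inequality, $\dtv{p}{q}\leq\dhell{p}{q}$, I would write $|p_i - q_i| = |\sqrt{p_i}-\sqrt{q_i}|\cdot(\sqrt{p_i}+\sqrt{q_i})$ and apply Cauchy--Schwarz to the two factors across~$i$. The first factor contributes $\dhell{p}{q}$, and the second contributes $\sqrt{\sum_i (\sqrt{p_i}+\sqrt{q_i})^2} = \sqrt{2 + 2\BC{p}{q}} \leq 2$, giving the result after accounting for the $\tfrac12$ in $\dtv{p}{q}$.

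For the third inequality, $\dhellsq{p}{q} \leq \dKL{p}{q}$, the key trick is to rewrite $\dKL{p}{q} = 2\sum_i p_i \ln\sqrt{p_i/q_i}$ and apply $\ln x \geq 1 - 1/x$ to the inner logarithm. This gives $\dKL{p}{q} \geq 2\sum_i p_i(1 - \sqrt{q_i/p_i}) = 2(1 - \BC{p}{q}) = \dhellsq{p}{q}$, and taking square roots yields the desired form. Finally, for $\dKL{p}{q} \leq \dchisq{p}{q}$, I would apply $\ln x \leq x - 1$ inside $\sum_i p_i \ln(p_i/q_i)$, obtaining $\sum_i p_i(p_i/q_i - 1) = \sum_i p_i^2/q_i - 1$, which is exactly the second form of $\dchisq{p}{q}$ in~\eqref{eqn:defchi2}.

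The slickest step is arguably the Hellinger--KL inequality, since it hinges on choosing the right substitution ($\sqrt{p_i/q_i}$ rather than $p_i/q_i$) before applying the standard $\ln$-bound; the other three steps are one-line manipulations. I would not expect any of the four to be a genuine obstacle, and the entire proof should fit in under half a page.
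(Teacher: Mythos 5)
Your proof is correct. Note that the paper does not actually prove \Cref{prop:classical-ineqs}; it states the chain as well-known and cites~\cite{gibbs2002choosing}, so there is no in-paper argument to compare against. Your four steps are the standard self-contained derivations: the pointwise bound $(\sqrt{p_i}-\sqrt{q_i})^2 \le |\sqrt{p_i}-\sqrt{q_i}|(\sqrt{p_i}+\sqrt{q_i}) = |p_i - q_i|$ and summation for the first link; Cauchy--Schwarz applied to $|p_i - q_i| = |\sqrt{p_i}-\sqrt{q_i}|\,(\sqrt{p_i}+\sqrt{q_i})$ for the second, where indeed $\sum_i(\sqrt{p_i}+\sqrt{q_i})^2 = 2 + 2\,\BC{p}{q} \le 4$; the bound $\ln x \ge 1 - 1/x$ applied at $x = \sqrt{p_i/q_i}$ for the third, which gives exactly $2(1 - \BC{p}{q}) = \dhellsq{p}{q}$; and $\ln x \le x - 1$ for the fourth. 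All four are compatible with the paper's $0/0=0$ and $x/0=\infty$ conventions, and your observation that the third link is the one requiring a small amount of foresight (the $\sqrt{\cdot}$ substitution) is accurate.
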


Some of the inequalities in the above can be slightly sharpened; e.g., one also has  $\dtv{p}{q} \leq \sqrt{\tfrac12 \dKL{p}{q}}$, usually called \emph{Pinsker's inequality}.  
Perhaps less well known is the following ``reverse'' form of Pinsker's inequality:
\begin{equation}  \label[ineq]{ineq:rev-pinsk-class}
    \dKL{p}{q} \leq  O\bigl(\dren{\infty}{p}{q}\bigr) \cdot \dtv{p}{q}.
\end{equation}
Moreover, it is possible to strengthen the above by putting Hellinger-squared in place of total variation distance.
These facts were proven in~\cite{sason2016divergence}; for the convenience of the reader, we provide a streamlined proof of the following:
\begin{proposition} \label{prop:classical-rev-ineq}
    For $p$, $q$ probability distributions on~$[d]$ we have
    \begin{equation}
        \dKL{p}{q} \leq (2+\dren{\infty}{p}{q}) \cdot \dhellsq{p}{q}.
    \end{equation}
\end{proposition}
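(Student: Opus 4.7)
The plan is to reduce to a pointwise inequality between the integrands and then sum. Writing
\[
    \dKL{p}{q} = \sum_i q_i \phi(p_i/q_i), \qquad \dhellsq{p}{q} = \sum_i q_i \psi(p_i/q_i),
\]
with the nonnegative KL generator $\phi(x) = x \ln x - x + 1$ and Hellinger generator $\psi(x) = (\sqrt{x}-1)^2$, set $M = \dren{\infty}{p}{q}$. If $q_i = 0$ at some index where $p_i > 0$ then $M = \infty$ and the claim is vacuous; otherwise every ratio $p_i/q_i$ lies in $[0, e^M]$, and $M \geq 0$ because $p$ and $q$ both sum to $1$. Hence it suffices to establish the pointwise bound $\phi(x) \leq (2 + M)\psi(x)$ for all $x \in [0, e^M]$.

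I would prove this pointwise inequality in two pieces, split at $x = 1$. For $x \in [0,1]$ I would show the simpler $\phi(x) \leq 2\psi(x)$, which combined with $M \geq 0$ is more than enough. For $x \in [1, e^M]$ I would show the stronger $\phi(x) \leq (2 + \ln x)\psi(x)$, which via $\ln x \leq M$ completes the proof. The substitution $u = \sqrt{x}$ tidies both: $\psi$ becomes $(u-1)^2$ and $\phi$ becomes $2u^2 \ln u - u^2 + 1$, and the two claims rearrange to the elementary inequalities $(3u-1)(u-1) \geq 2u^2 \ln u$ on $(0,1]$ and $(u-1)(3u-1) \geq 2(2u-1)\ln u$ on $[1,\infty)$. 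Both hold with equality at $u = 1$ and are straightforward from differentiating a few times and observing that the relevant derivative has a definite sign (for the first, one computes a derivative equal to $4(u - 1 - u \ln u)$ which is nonpositive on $(0,1]$; for the second, three derivatives produce $4/u^2 + 4/u^3 > 0$, which lifts back through vanishing lower derivatives at $u = 1$).

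The main obstacle is identifying the correct leading constant: a quick L'H\^opital at $x = 1$ shows $\phi(x)/\psi(x) \to 2$, which forces the constant in any pointwise bound of the form $(c + \ln x)\psi(x)$ to satisfy $c \geq 2$ and confirms that $c = 2$ is tight. The $\ln x$ term is also asymptotically tight since $\phi(x)/\psi(x) \sim \ln x$ as $x \to \infty$. Once this ``constant-plus-logarithm'' shape is identified, the remaining work is the one-variable calculus sketched above followed by summing against $q$.
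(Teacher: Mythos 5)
Your proof is correct and follows essentially the same route as the paper: reduce to the pointwise inequality $\phi(x) \leq \bigl(2 + \max\{\ln x, 0\}\bigr)\psi(x)$, split at $x = 1$, and verify each piece by one-variable calculus. The only difference is technical: you substitute $u = \sqrt{x}$ and differentiate repeatedly (using the sign of $u - 1 - u\ln u$ on $(0,1]$, and third derivatives for $u \geq 1$), whereas the paper's \Cref{lem:annoying-inequality} keeps the variable $r$ and instead invokes the Cauchy--Schwarz bound $\ln r \leq \sqrt{r} - 1/\sqrt{r}$ to control $a'(r)$ directly. Both methods are elementary and of comparable length; yours has the modest advantage of reducing everything to polynomial-vs-logarithm inequalities after the substitution, at the cost of carrying more derivatives.
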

\begin{proof}
    Let us write $r_i = p_i/q_i$.
    Defining
    \begin{equation}
        f(r) = r \ln r - (r-1), \qquad g(r) = (\sqrt{r} - 1)^2,
    \end{equation}
    the elementary \Cref{lem:annoying-inequality} proven below  shows that
    \begin{equation} \label[ineq]{ineq:annoying}
        \forall r \geq 0, \qquad f(r) \leq h(r) g(r), \qquad \text{where } h(r) = 2+\max\{\ln r, 0\}.
    \end{equation}
    It follows that
    \begin{equation}
        \dKL{p}{q} = \E_{\bi \sim q}[f(r_{\bi})] \leq \E_{\bi \sim q}[h(r_{\bi}) g(r_{\bi})] \leq \max_{i \in [d]} \{h(r_i)\} \cdot \E_{\bi \sim q}[g(r_{\bi})] = (2 + \dren{\infty}{p}{q}) \cdot \dhellsq{p}{q}. \qedhere
    \end{equation}
\end{proof}

\begin{lemma} \label{lem:annoying-inequality} 
    \Cref{ineq:annoying} holds.
\end{lemma}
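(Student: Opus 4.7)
The plan is to verify this elementary inequality by substituting $s = \sqrt{r} \geq 0$, which clears the square roots in $g$ and turns $f(r) = 2s^2 \ln s - s^2 + 1$ and $h(r) = 2 + \max\{2\ln s, 0\}$ into elementary combinations of polynomials in~$s$ and~$\ln s$. The inequality then splits naturally into the two regimes $0 \leq s \leq 1$ and $s \geq 1$ dictated by the $\max$, and both sides vanish at the boundary $s = 1$.

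In the regime $s \in [0,1]$ we have $h(r) = 2$, and after expansion the inequality reduces to showing $\psi(s) := 3s^2 - 4s + 1 - 2s^2 \ln s \geq 0$. Direct computation gives $\psi(1) = \psi'(1) = 0$ while $\psi''(s) = -4 \ln s > 0$ on $(0,1)$, so $\psi'$ is strictly increasing to $\psi'(1) = 0$ and hence non-positive, which makes $\psi$ decreasing to $\psi(1) = 0$; thus $\psi \geq 0$ on $[0,1]$ (the endpoint $s = 0$ handled by continuity in the $s^2 \ln s$ term).

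In the regime $s \geq 1$ we have $h(r) = 2 + 2\ln s$; after expansion the $2s^2 \ln s$ terms cancel between the two sides and one is reduced to showing
\begin{equation}
    \phi(s) := 3s^2 - 4s + 1 + 2(1-2s)\ln s \geq 0, \qquad s \geq 1.
\end{equation}
Differentiating three times yields $\phi(1) = \phi'(1) = \phi''(1) = 0$ and $\phi'''(s) = 4/s^2 + 4/s^3 > 0$, and integrating up three times from $s = 1$ upgrades the positivity of $\phi'''$ to the required non-negativity of $\phi$ on $[1,\infty)$.

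The (mild) obstacle is the second-order tangency at $s = 1$: both sides of the original inequality agree to second order there, so one must differentiate three times before a manifestly positive derivative appears to witness non-negativity in the $s \geq 1$ case. Apart from this, the argument is routine single-variable calculus bookkeeping.
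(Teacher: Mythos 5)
Your proof is correct, and it takes a genuinely different route from the paper's. You substitute $s = \sqrt{r}$ to clear square roots and then prove non-negativity of $\psi$ and $\phi$ by iterated differentiation up to the order at which a manifestly positive expression appears (third order for $\phi$, owing to the second-order tangency at $s=1$). I verified all the derivative computations: $\psi(1)=\psi'(1)=0$ with $\psi''=-4\ln s>0$ on $(0,1)$, and $\phi(1)=\phi'(1)=\phi''(1)=0$ with $\phi'''(s)=4/s^2+4/s^3>0$ on $[1,\infty)$, which integrates up correctly. The paper instead works directly in the variable $r$, defining $a(r)=h(r)g(r)-f(r)$ and showing $a'(r)\ge 0$ by invoking a one-line Cauchy--Schwarz bound $\ln r \le \sqrt{r} - 1/\sqrt{r}$, then handling $r\le 1$ by the substitution $s=1/r$. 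Your approach is more mechanical (no auxiliary inequality needed, just repeated differentiation), at the cost of going to third derivatives; the paper's is shorter but requires the separate logarithm bound as a key ingredient. Both are sound.
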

\begin{proof}
    Consider $a(r) \coloneqq h(r) g(r) - f(r)$. 
    This function is continuous and piecewise differentiable on $r \ge 0$ with an exceptional point at $r=1$. 
    We will first show that $a(r)$ is nonnegative and increasing on $r\ge 1$. 
    Clearly $a(1) = 0$, so we only need to show that $a'(r) \ge 0$. For $r\ge 1$, by the integral definition of the logarithm and the Cauchy--Schwarz inequality we have
    \begin{equation} \ln r = \int_1^r x^{-1} \mathrm{d}x \le \parens*{\int_1^r \mathrm{d}x}^{1/2} \parens*{\int_1^r x^{-2} \mathrm{d}x}^{1/2} = \sqrt{r} - \frac{1}{\sqrt{r}}\,.\end{equation}
    Calculating the derivative of $a(r)$ and using the above inequality for the logarithm, we have that 
    \begin{equation}a'(r) = 3-\frac{4}{\sqrt{r}}+\frac{1}{r}-\frac{\ln r}{\sqrt{r}} \ge 3-\frac{4}{\sqrt{r}}+\frac{1}{r}-\frac{\sqrt{r} - \frac{1}{\sqrt{r}}}{\sqrt{r}} = \frac{2 \left(\sqrt{r}-1\right)^2}{r} \ge 0.\end{equation}
    
    For the case $0 \le r \le 1$, we change variables to $s = 1/r$ and define $b(s) \coloneqq h(1/s) g(1/s) - f(1/s)$ for $s \ge 1$. 
    We have $b(1) = 0$, and using again the logarithm inequality we find
    \begin{equation}b'(s) = \frac{2 \sqrt{s}-\ln s-2}{s^2} \ge \frac{\sqrt{s}+\frac{1}{\sqrt{s}}-2}{s^2} = \frac{\left(\sqrt{s}-1\right)^2}{s^{5/2}} \ge 0. \qedhere\end{equation}
\end{proof}

We remark that \Cref{ineq:annoying} can be strengthened to $h(r) = 2 + \ln\bigl((2+r)/3\bigr)$, but as this does not change the scaling of any of our results, we will not use this stronger inequality or present our (annoyingly complicated) proof.\\

Finally, we mention the $\ell_2^2$-distance between probability distributions, $\|p - q\|_2^2 = \sum_i (p_i - q_i)^2$.  
Though it does not have an operational meaning, the simplicity of computing it makes it a useful tool when analyzing other distances. 
For example, the $\chi^2$-divergence is a kind of ``weighted'' version of $\ell_2^2$-distance, in which the error term $(p_i - q_i)^2$ is weighted by~$1/q_i$.
We record here basic facts about estimation with respect to these distance measures. 
\begin{proposition} \label{prop:learn-L2}
    Let $p$ be a distribution on $[d]$, and suppose $\bq$ is the empirical estimator formed from~$m$ samples. 
    Then for any $S \subseteq [d]$ we have
    \begin{equation}
        \E\bigl[\|p[S] - \bq[S]\|_2^2\bigr] \leq \|p[S]\|_1/m.
    \end{equation}
    In particular,  $\E\bigl[\|p - \bq\|_2^2\bigr] \leq 1/m$.
\end{proposition}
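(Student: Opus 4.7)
The plan is to directly compute the expectation coordinate by coordinate, exploiting the fact that each coordinate of the empirical estimator is a binomial random variable.

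First I would set up notation: let $\bX_i$ denote the number of the $m$ samples equal to $i \in [d]$, so that $\bX_i \sim \mathrm{Binomial}(m, p_i)$ and $\bq_i = \bX_i/m$. In particular, $\E[\bq_i] = p_i$, making $\bq_i$ an unbiased estimator of $p_i$, so $\E[(p_i - \bq_i)^2] = \Var(\bq_i) = p_i(1-p_i)/m \leq p_i/m$.

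Next I would apply linearity of expectation and restrict to the index set $S$:
\begin{equation}
    \E\bigl[\|p[S] - \bq[S]\|_2^2\bigr] = \sum_{i \in S} \E[(p_i - \bq_i)^2] \leq \sum_{i \in S} p_i/m = \|p[S]\|_1/m.
\end{equation}
Taking $S = [d]$ gives $\|p\|_1 = 1$, recovering the ``in particular'' claim.

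There is no real obstacle here; the only thing to notice is that the marginal distribution of each $\bX_i$ is binomial (independence across coordinates is \emph{not} needed, since we only invoke the second moment of each coordinate individually). The bound $p_i(1-p_i) \leq p_i$ is the one place a small amount of slack is introduced, and it is responsible for the clean form of the right-hand side.
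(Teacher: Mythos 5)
Your proof is correct and follows exactly the same approach as the paper's: recognize each coordinate $\bq_i$ as a rescaled binomial, compute its variance, bound $p_i(1-p_i)\le p_i$, and sum over $i\in S$. Your added remark that independence across coordinates is not needed is accurate, though the paper leaves it implicit.
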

\begin{proof}
    Note that $\bq_i$ is distributed as $\textrm{Binomial}(m, p_i)/m$, hence $\E[(p_i - \bq_i)^2] = \Var[\bq_i] = p_i(1-p_i)/m \leq p_i/m$. 
    Summing over~$i \in S$ completes the proof.
\end{proof}
We will also need the following variant with high-probability guarantees:
\begin{proposition} \label{prop:learn-L2-high}
    In the setting of \Cref{prop:learn-L2}, we have the following guarantees, where we introduce the notation $m_\delta = m/(c \ln(1/\delta))$ for $c$ some large universal constant:
    \begin{enumerate}[label=(\alph*)]
        \item $\|p - \bq\|_2^2 \leq 1/m_{\delta}$ except with probability at most~$\delta$; \label{item:11}
        \item if $\|p[S]\|_1 \geq 1/m_\delta$ then $\|\bq[S]\|_1$ is within a $1.01$-factor of~$\|p[S]\|_1$ except with probability at most~$\delta$; \label{item:22}
        \item if $\|p[S]\|_1 \leq 1/m_\delta$ then $\|\bq[S]\|_1 \leq 1.01/m_{\delta}$ except with probability at most~$\delta$. 
        \label{item:33}
    \end{enumerate}
\end{proposition}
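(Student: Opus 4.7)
The plan is to dispatch parts~(b) and~(c) by routine multiplicative Chernoff bounds on binomial counts, and to handle part~(a)—which is the main obstacle—via McDiarmid's bounded-differences inequality applied to the \emph{unsquared} norm $\|p - \bq\|_2$. In every part, the universal constant $c$ hidden in $m_\delta = m/(c \ln(1/\delta))$ will be chosen at the end, large enough to absorb the constants produced by the arguments.

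For part~(a), I view $\bq$ as a function of the i.i.d.\ samples $X_1, \dots, X_m$: changing a single $X_t$ from $i$ to $j$ moves $\bq$ by $(e_j - e_i)/m$, a vector of $\ell_2$-norm at most $\sqrt{2}/m$. The reverse triangle inequality therefore shows that $\|p - \bq\|_2$ has bounded differences $\sqrt{2}/m$, and McDiarmid's inequality then yields $\Pr\bigl[\|p-\bq\|_2 \geq \E[\|p-\bq\|_2] + t\bigr] \leq \exp(-t^2 m)$. By Jensen's inequality and \Cref{prop:learn-L2}, I also have $\E\|p-\bq\|_2 \leq \sqrt{\E\|p-\bq\|_2^2} \leq 1/\sqrt{m}$, so taking $t = \sqrt{\ln(1/\delta)/m}$ and squaring produces $\|p - \bq\|_2^2 = O(\ln(1/\delta)/m) \leq 1/m_\delta$ except with probability $\delta$. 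It is essential to work with the unsquared norm: a naive Markov bound on $\E\|p-\bq\|_2^2 \leq 1/m$ only gives polynomial-in-$1/\delta$ concentration, while applying McDiarmid directly to $\|p-\bq\|_2^2$ (which has $\Theta(1/m)$ bounded differences through the linear cross term $2(p_i-\bq_i)/m$) only delivers $\sqrt{\ln(1/\delta)/m}$-scale deviations, not the $\ln(1/\delta)/m$ scale required here.

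For part~(b), the count $m \|\bq[S]\|_1$ is distributed as $\mathrm{Binomial}(m, \|p[S]\|_1)$, with mean $\mu = m\|p[S]\|_1 \geq m/m_\delta = c\ln(1/\delta)$. The two-sided multiplicative Chernoff bound with relative deviation $0.01$ then fails with probability at most $2\exp(-\Omega(c \ln(1/\delta)))$, which is at most $\delta$ once $c$ is a sufficiently large absolute constant. Part~(c) then follows from (b) by stochastic dominance: the upper tail $\Pr[\mathrm{Binomial}(m, p') \geq k]$ is monotone nondecreasing in $p'$, so when $\|p[S]\|_1 \leq 1/m_\delta$ the probability of $\|\bq[S]\|_1 \geq 1.01/m_\delta$ is bounded by its value at the boundary case $\|p[S]\|_1 = 1/m_\delta$, which is already covered by the Chernoff estimate in (b).
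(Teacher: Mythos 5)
Your proof is correct, and it dovetails with the paper's: for parts~(b) and~(c) the paper says only ``standard Chernoff bounds,'' which is exactly what you supply (your stochastic-dominance reduction of~(c) to the boundary case is a clean way to formalize it, though you should shave the relative deviation slightly below $0.01$ since $0.99 < 1/1.01$). For part~(a) the paper does not give a proof at all but simply cites~\cite{Per19} for the high-probability $\ell_2$-learning bound; your McDiarmid argument on the \emph{unsquared} norm, with bounded differences $\sqrt{2}/m$ and the Jensen estimate $\E\|p-\bq\|_2 \le 1/\sqrt m$, is the standard self-contained proof of precisely that cited fact. Importantly, you apply McDiarmid to the same empirical estimator $\bq$ used in~(b) and~(c), which is the point the paper stresses in warning against the median trick: all three guarantees must hold for a \emph{single} output. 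Your remark explaining why the unsquared norm is needed (and why Markov or direct McDiarmid on the squared norm would be too weak) is a nice addition that the paper leaves implicit.
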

\begin{proof}
    \Cref{item:22,item:33} follow from standard Chernoff bounds.
    As for \Cref{item:11}, it follows from the known high-probability bound for empirically learning a distribution with respect to~$\ell_2^2$-error; see, e.g.,~\cite{Per19}. 
    We remark that it is important to use this latter result, as opposed to the generic ``median-of-$O(\log(1/\delta))$-estimates'' method; if we used the latter, it would be unclear how to simultaneously achieve \Cref{item:22,item:33}
\end{proof}
\begin{proposition} \label{prop:learn-chi2}
    Fix a subset $S \subseteq [d]$ of cardinality~$s$.
    Given $m$ samples from an unknown distribution $p$ on $[d]$, let $\bq$ be the estimator formed by using the add-one estimator on elements from~$S$, and the empirical estimator on the remaining elements.
    (Note that $\bq$ is itself a probability distribution.)
    Then
    \begin{equation}    \label[ineq]{ineq:ym}
        \E[\dchisq{p[S]}{\bq[S]}] \leq \frac{s}{m+s} + \parens*{\frac{(s-1)^2 }{(m+1)(m+s)} - \frac{1}{m+s}} \|p[S]\|_1 \leq s/m + (s/m)^2 \leq 2s/m,
    \end{equation}
    the last inequality assuming $m \geq s$.
    In case $S = [d]$, the sharpest upper bound above equals $\frac{d-1}{m+1} \leq d/m$.
    
    Moreover, still assuming $m \geq s$ and using the notation~$m_\delta$ from \Cref{prop:learn-L2-high}, if $p_i \geq 1/m_{\delta/s}$ for all $i \in S$, then except with probability at most~$\delta$ we have that $\bq_i$ is within a $4$-factor of~$p_i$ simultaneously for all $i \in S$.
\end{proposition}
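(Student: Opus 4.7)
The plan is to exploit the fact that the add-one estimator $\bq_i = (\bN_i+1)/(m+s)$ on $i \in S$ has a strictly positive denominator by construction, so $\dchisq{p[S]}{\bq[S]}$ is always finite and admits an exact expected value. The linchpin will be the standard closed-form identity for a Binomial inverse shift: for $\bN_i \sim \mathrm{Binomial}(m,p_i)$, $\E[1/(\bN_i+1)] = (1-(1-p_i)^{m+1})/((m+1)p_i)$. First, I would expand $(p_i-\bq_i)^2/\bq_i = p_i^2/\bq_i - 2p_i + \bq_i$ coordinatewise and evaluate each piece under this identity: the $p_i^2/\bq_i$ term yields $\tfrac{m+s}{m+1}\sum_{i\in S}p_i\bigl(1-(1-p_i)^{m+1}\bigr)$, the middle term contributes $2\|p[S]\|_1$, and $\E[\bq_i]$ summed over $S$ gives $(m\|p[S]\|_1+s)/(m+s)$.

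Dropping the nonnegative tail $\sum_{i\in S}p_i(1-p_i)^{m+1}$ and collecting coefficients of $\|p[S]\|_1$ then reduces the problem to verifying the single rational identity
\begin{equation}
    \tfrac{m+s}{m+1}+\tfrac{m}{m+s}-2 \;=\; \tfrac{(s-1)^2}{(m+1)(m+s)}-\tfrac{1}{m+s},
\end{equation}
which matches the claimed bound on the nose. The looser bounds $s/m+(s/m)^2 \leq 2s/m$ then follow from $\|p[S]\|_1\le 1$, discarding the nonpositive $-1/(m+s)$ term, and using $m\ge s$. The sharpened $S=[d]$ bound $(d-1)/(m+1)$ drops out by substituting $\|p[S]\|_1=1$, $s=d$, and observing the numerator telescopes to $(d-1)(m+d)$ so that one factor cancels.

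For the multiplicative guarantee, the hypothesis $p_i \ge 1/m_{\delta/s}$ forces $mp_i \gtrsim \ln(s/\delta)$, which places us deep in the multiplicative Chernoff regime. A Chernoff bound gives $\bN_i \in [mp_i/2,\,3mp_i/2]$ except with probability at most $\delta/s$, and a union bound over the $s$ coordinates of $S$ absorbs a total failure probability of $\delta$. On the good event, the ``$+1$'' in the numerator of $\bq_i$ is negligible relative to $\bN_i$ since $mp_i\ge 1$, while the denominator satisfies $m\le m+s\le 2m$ since $m\ge s$; chaining these gives $p_i/4 \le \bq_i \le 4p_i$ as claimed. The main obstacle is purely algebraic bookkeeping in the first step: the advertised expression is tight enough that any crude estimation in the cancellations would miss the exact closed form, so the rational identity above must be tracked carefully rather than bounded away. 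No deeper ideas seem required.
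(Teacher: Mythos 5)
Your proposal is correct and takes essentially the same approach as the paper. Both compute $\E[(p_i - \bq_i)^2/\bq_i]$ exactly by using the distribution $\bq_i \sim (\mathrm{Binomial}(m,p_i)+1)/(m+s)$ together with the standard identity for $\E[1/(\mathrm{Binomial}+1)]$ (the paper simply states the resulting per-coordinate expectation, which carries the same $(1-p_i)^{m+1}$ term you drop), and both handle the multiplicative guarantee by a Chernoff bound on the Binomial count followed by a union bound over $S$.
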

\begin{proof}
    For $i \in S$ we have that $\bq_i$ is distributed as $\frac{\bB + 1}{m + s}$, where $\bB \sim \textrm{Binomial}(m,p_i)$.  It is elementary to show that the resulting contribution to $\dchisq{p[S]}{\bq[S]}$, namely $\frac{(p_i - \bq_i)^2}{\bq_i}$, has expectation equal to
    \begin{equation}
        \frac{1}{m+s} + \parens*{\frac{(s-1)^2}{(m+1)(m+s)} - \frac{1}{m+s} - \frac{m+s}{m+1}(1-p_i)^{m+1}}p_i.
    \end{equation}
    Dropping the term above involving $(1-p_i)^{m+1}$, and then summing over $i \in S$, yields \Cref{ineq:ym}.

    As for the ``moreover'' statement, a Chernoff bound tells us that $\bB$ is within a $2$-factor of $mp_i$ except with probability at most~$\delta/s$, using $m p_i \geq c \log(s/\delta)$.  When this occurs, $\bq_i$ is at least $p_i/4$ (using $m \geq s$) and at most $\frac{2 m p_i + 1}{m} \leq 3p_i$ (using $c \geq 1$), so the proof is complete by a union bound over $i \in S$.
\end{proof}

\subsection{Quantum distances and divergences}

The analogous theory of distances and divergences between quantum states is quite rich~\cite{tomamichel2016quantum,HM17}, as there are multiple quantum generalizations of both $f$-divergences and R{\'e}nyi divergences.
To distinguish between the quantum and classical cases, we use an upper-case~$D$ for quantum divergences and a lower-case~$d$ for classical divergences. 

Throughout this section, let $\rho, \sigma \in \C^{d \times d}$ be (mixed) quantum states.

\begin{definition}
    Given an $f$-divergence $\df{{\cdot}}{{\cdot}}$, the associated \emph{measured (aka minimal) quantum $f$-divergence}~\cite{HM17} is 
    \begin{equation}
        \Df{\rho}{\sigma} = \sup_{\text{POVMs } (E_i)_{i=1}^m} \{\df{q_\rho}{q_\sigma}\}, \quad \text{where } q_{\tau} = (\tr(\tau E_1), \dots, \tr(\tau E_m)).
    \end{equation}
\end{definition}
\begin{remark}
    All measured $f$-divergences satisfy the (quantum) data processing inequality. 
    This fact follows from the definition and a reduction to the classical case.
\end{remark}
\begin{definition}
    For $\alpha \in [0,\infty]$, the associated \emph{conventional quantum R\'enyi divergence}~\cite{Petz1986} is defined by
    \begin{equation}
        \Dren{\alpha}{\rho}{\sigma} = \frac{1}{\alpha-1} \ln\tr\parens*{\rho^\alpha \sigma^{1-\alpha}}.
    \end{equation}
\end{definition}
Let us also describe a further relationship between classical and quantum R\'enyi entropies.
To do so let us introduce the following notation:
\begin{definition}
    Given the spectral decompositions
    \begin{equation}
        \rho = \sum_{i=1}^d p_i \ketbra{\varphi_i}{\varphi_i}, \qquad \sigma = \sum_{i=1}^d q_i \ketbra{\psi_i}{\psi_i},
    \end{equation}
    we define two probability distributions $P^{\rho\sigma}, Q^{\rho\sigma}$ on $[d] \times [d]$, as follows:
    \begin{equation}
        P^{\rho\sigma}_{ij} = \abs{\braket{\varphi_i}{\psi_j}}^2 p_i, \qquad Q^{\rho\sigma}_{ij} = \abs{\braket{\varphi_i}{\psi_j}}^2 q_j.
    \end{equation}
\end{definition}
We now give a simple calculation that allows us to compute a quantum R\'{e}nyi divergence from an associated classical probability distribution. 
This calculation has appeared in the literature as early as \cite[Thm.~2.2]{nussbaum2009chernoff}; see \cite[Prop.~1]{audenaert2008asymptotic} for an explicit statement.
For convenience, we repeat the calculation here.
\begin{proposition} \label{prop:q-to-c}
    $\displaystyle \Dren{\alpha}{\rho}{\sigma} = \dren{\alpha}{P^{\rho\sigma}}{Q^{\rho\sigma}}.$
\end{proposition}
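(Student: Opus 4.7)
The plan is to compute $\tr(\rho^\alpha \sigma^{1-\alpha})$ directly using the given spectral decompositions, and then recognize the result as the exponential sum appearing in the classical R\'enyi divergence of $P^{\rho\sigma}$ and $Q^{\rho\sigma}$.

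First, I would expand $\rho^\alpha = \sum_i p_i^\alpha \ketbra{\varphi_i}{\varphi_i}$ and $\sigma^{1-\alpha} = \sum_j q_j^{1-\alpha}\ketbra{\psi_j}{\psi_j}$ (valid since the spectral decompositions give orthonormal eigenbases, with the usual $0^\alpha = 0$ conventions). Multiplying and taking the trace collapses everything to a double sum:
\begin{equation}
    \tr\!\parens*{\rho^\alpha \sigma^{1-\alpha}} = \sum_{i,j} p_i^\alpha q_j^{1-\alpha} \braket{\varphi_i}{\psi_j}\braket{\psi_j}{\varphi_i} = \sum_{i,j} p_i^\alpha q_j^{1-\alpha} \abs{\braket{\varphi_i}{\psi_j}}^2.
\end{equation}

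The key algebraic move, and the only real content of the proof, is to split the single factor $\abs{\braket{\varphi_i}{\psi_j}}^2$ as $\abs{\braket{\varphi_i}{\psi_j}}^{2\alpha} \cdot \abs{\braket{\varphi_i}{\psi_j}}^{2(1-\alpha)}$, so that the summand becomes
\begin{equation}
    \bigl(\abs{\braket{\varphi_i}{\psi_j}}^2 p_i\bigr)^\alpha \cdot \bigl(\abs{\braket{\varphi_i}{\psi_j}}^2 q_j\bigr)^{1-\alpha} = (P^{\rho\sigma}_{ij})^\alpha \cdot (Q^{\rho\sigma}_{ij})^{1-\alpha}.
\end{equation}
Taking $\frac{1}{\alpha-1}\ln(\cdot)$ of both sides then gives the claim immediately from the definitions of $\Dren{\alpha}{{\cdot}}{{\cdot}}$ and $\dren{\alpha}{{\cdot}}{{\cdot}}$ (viewing $P^{\rho\sigma}, Q^{\rho\sigma}$ as distributions on $[d]\times[d]$, so the sum runs over pairs $(i,j)$).

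There is no real obstacle; the only things to be slightly careful about are (i) verifying that $P^{\rho\sigma}$ and $Q^{\rho\sigma}$ really are probability distributions, which follows because $\sum_j \abs{\braket{\varphi_i}{\psi_j}}^2 = 1$ and $\sum_i \abs{\braket{\varphi_i}{\psi_j}}^2 = 1$ by orthonormality of $\{\ket{\psi_j}\}$ and $\{\ket{\varphi_i}\}$, and (ii) handling boundary values $\alpha \in \{0,1,\infty\}$ via the usual limiting conventions for the $0/0$ and $0\ln 0$ expressions, consistent with how the same limits are taken for the classical R\'enyi divergence. Both are routine.
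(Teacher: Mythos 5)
Your proof is correct and follows the same route as the paper's: expand $\tr(\rho^\alpha\sigma^{1-\alpha})$ via the spectral decompositions, regroup the overlap factor as $\abs{\braket{\varphi_i}{\psi_j}}^{2\alpha}\abs{\braket{\varphi_i}{\psi_j}}^{2(1-\alpha)}$, and recognize the classical R\'enyi sum. The extra sanity checks you note (that $P^{\rho\sigma},Q^{\rho\sigma}$ are bona fide distributions, and the boundary $\alpha$ values) are sound and left implicit in the paper.
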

\begin{proof} One calculates:
    \begin{align*}
        \Dren{\alpha}{\rho}{\sigma} &= \frac{1}{\alpha-1} \ln \tr\parens*{\parens*{\sum_{i\vphantom{j}=1}^d p_i^\alpha \ketbra{\varphi_i}{\varphi_i}}\parens*{\sum_{j=1}^d q_j^{1-\alpha} \ketbra{\psi_j}{\psi_j}}} \\
        &= \frac{1}{\alpha-1} \ln \parens*{\sum_{i,j=1}^d p^{\alpha} \abs{\braket{\varphi_i}{\psi_j}}^2 q^{1-\alpha}}
        = \frac{1}{\alpha-1} \ln \sum_{i,j=1}^d (P^{\rho\sigma}_{ij})^\alpha (Q^{\rho\sigma}_{ij})^{1-\alpha} = \dren{\alpha}{P^{\rho\sigma}}{Q^{\rho\sigma}}. \qedhere
    \end{align*}
\end{proof}

We now define some particular quantum distances/divergences:
\begin{definition}
    The \emph{trace distance}, a metric, is the measured $f$-divergence associated to total variation distance~\cite{Helstrom1976}:
    \begin{equation}
        \Dtr{\rho}{\sigma} = \tfrac12 \|\rho - \sigma\|_1.
    \end{equation}
\end{definition}
\begin{definition}
    The \emph{Bures distance} $\DB{\rho}{\sigma}$, a metric, is the square-root of the measured $f$-divergence associated to Hellinger-squared~\cite{Fuchs1995}. 
    It has the formula
    \begin{equation}
        \DBsq{\rho}{\sigma} = 2(1-\Fid{\rho}{\sigma}),
    \end{equation}
    where $\Fid{\rho}{\sigma} = \|\sqrt{\rho}\sqrt{\sigma}\|_1$ is the \emph{fidelity} between $\rho$ and $\sigma$ (in the ``square root'' convention).  The \emph{infidelity} between $\rho$~and~$\sigma$ is simply $1-\Fid{\rho}{\sigma} = \frac12 \DBsq{\rho}{\sigma}$.
\end{definition}
There is a close analogy between the quantum fidelity and the classical Bhattacharrya coefficient, and indeed the analogue of \Cref{eqn:bhatt} holds if one uses the ``sandwiched R{\'e}nyi entropy''~\cite{mullerlennert2013quantum,wilde2014strong}.
Using instead the conventional R{\'e}nyi entropy yields a slightly different notion:
\begin{definition}  \label{def:qhell}
    The \emph{quantum Hellinger affinity} is defined by
    \begin{equation}
        \DHella{\rho}{\sigma} = \tr(\sqrt{\rho} \sqrt{\sigma}) = \exp(-\tfrac12 \cdot \Dren{1/2}{\rho}{\sigma}),
    \end{equation}
    and the \emph{quantum Hellinger distance}
    $\DHell{\rho}{\sigma}$, a metric, is defined by
    \begin{equation}
        \DHellsq{\rho}{\sigma} = 2(1-\DHella{\rho}{\sigma}) =  \tr((\sqrt{\rho} - \sqrt{\sigma})^2) = \|\sqrt{\rho} - \sqrt{\sigma}\|_{\mathrm{F}}^2 = \DHellsq{\rho}{\sigma} = \dhellsq{P^{\rho\sigma}}{Q^{\rho\sigma}},
    \end{equation}
    the last equality using \Cref{prop:q-to-c}.
    (Note also the useful tensorization identity, $\DHella{\rho_1 \otimes \rho_2}{\sigma_1 \otimes \sigma_2} = \DHella{\rho_1}{\sigma_1} \cdot \DHella{\rho_2}{\sigma_2}$.)
\end{definition}
Fortunately, the preceding two distances differ by only a small constant factor:
\begin{fact}    \label{fact:hellDB}
    $\displaystyle \DBsq{\rho}{\sigma} \leq \DHellsq{\rho}{\sigma} \leq 2 \DBsq{\rho}{\sigma}.$
\end{fact}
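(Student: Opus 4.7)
My plan is to handle the two inequalities separately. The left inequality $\DBsq{\rho}{\sigma} \leq \DHellsq{\rho}{\sigma}$ is essentially immediate: unpacking $\DBsq{\rho}{\sigma} = 2(1-\Fid{\rho}{\sigma})$ and $\DHellsq{\rho}{\sigma} = 2(1-\DHella{\rho}{\sigma})$ reduces it to $\DHella{\rho}{\sigma} \leq \Fid{\rho}{\sigma}$. I would then observe that $\DHella{\rho}{\sigma} = \tr(\sqrt{\rho}\sqrt{\sigma})$ is nonnegative and real (since $\sqrt{\rho}\sqrt{\sigma}$ is similar to the PSD matrix $\rho^{1/4}\sqrt{\sigma}\rho^{1/4}$, forcing its eigenvalues to be nonnegative) and conclude via the elementary bound $|\tr(A)| \leq \|A\|_1$ applied to $A = \sqrt{\rho}\sqrt{\sigma}$.

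The right inequality $\DHellsq{\rho}{\sigma} \leq 2\DBsq{\rho}{\sigma}$ I plan to route through the stronger intermediate bound $\Fid{\rho}{\sigma}^2 \leq \DHella{\rho}{\sigma}$. Granted this, the conclusion follows from
\begin{equation}
\DHellsq{\rho}{\sigma} = 2(1-\DHella{\rho}{\sigma}) \leq 2(1-\Fid{\rho}{\sigma}^2) = (1+\Fid{\rho}{\sigma})\DBsq{\rho}{\sigma} \leq 2\DBsq{\rho}{\sigma},
\end{equation}
where the last step uses $\Fid{\rho}{\sigma} \leq 1$.

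To prove the intermediate bound $\Fid{\rho}{\sigma}^2 \leq \DHella{\rho}{\sigma}$, I plan to apply the three-factor H\"older inequality for Schatten norms to the factorization $\sqrt{\rho}\sqrt{\sigma} = \rho^{1/4}\cdot(\rho^{1/4}\sigma^{1/4})\cdot\sigma^{1/4}$. With exponents $(4,2,4)$ (satisfying $\tfrac{1}{4}+\tfrac{1}{2}+\tfrac{1}{4}=1$), H\"older gives $\Fid{\rho}{\sigma} = \|\sqrt{\rho}\sqrt{\sigma}\|_1 \leq \|\rho^{1/4}\|_4 \cdot \|\rho^{1/4}\sigma^{1/4}\|_2 \cdot \|\sigma^{1/4}\|_4$. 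The outer factors each equal $1$ since $\|\rho^{1/4}\|_4^4 = \tr(\rho) = 1$ and similarly for $\sigma$, and the middle factor squared equals $\tr(\sigma^{1/4}\rho^{1/2}\sigma^{1/4}) = \tr(\sqrt{\rho}\sqrt{\sigma}) = \DHella{\rho}{\sigma}$ by cyclicity. Combining, $\Fid{\rho}{\sigma} \leq \sqrt{\DHella{\rho}{\sigma}}$.

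The main obstacle I foresee is spotting the correct three-way splitting with exponents $(1/4,1/2,1/4)$: once guessed, everything collapses because the outer pieces have unit Schatten $4$-norm (by $\tr(\rho)=\tr(\sigma)=1$) and the middle piece's squared Frobenius norm is exactly $\DHella{\rho}{\sigma}$. Without that particular choice the H\"older bounds come out loose and do not deliver the sharp factor of $2$.
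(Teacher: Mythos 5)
Your proof is correct. The paper itself does not supply a self-contained argument; it simply notes that the left inequality follows from $\DHella{\rho}{\sigma} \leq \Fid{\rho}{\sigma}$ and cites Audenaert et al.\ (Eq.\ (32)) for the right inequality. You reproduce both sides from scratch, and both arguments are sound.

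For the left inequality your route matches the paper's: the key point is $\DHella{\rho}{\sigma} \leq \Fid{\rho}{\sigma}$, which follows because $\tr(A) \leq \|A\|_1$ for any $A$ whose eigenvalues are nonnegative, and $\sqrt{\rho}\sqrt{\sigma} = \rho^{1/4}\cdot\bigl(\rho^{1/4}\sqrt{\sigma}\bigr)$ has the same nonzero spectrum as $\rho^{1/4}\sqrt{\sigma}\rho^{1/4} \succeq 0$. (Worth noting: the ``$AB$ and $BA$ have the same nonzero eigenvalues'' formulation you implicitly rely on works even when $\rho$ is singular, so no invertibility caveat is needed.) For the right inequality, your intermediate estimate $\Fidsq{\rho}{\sigma} \leq \DHella{\rho}{\sigma}$ via the three-factor Schatten--H\"older inequality with exponents $(4,2,4)$ applied to the factorization $\sqrt{\rho}\sqrt{\sigma} = \rho^{1/4}\cdot(\rho^{1/4}\sigma^{1/4})\cdot\sigma^{1/4}$ is exactly right, and the reduction
\begin{equation}
\DHellsq{\rho}{\sigma} = 2\bigl(1-\DHella{\rho}{\sigma}\bigr) \leq 2\bigl(1-\Fidsq{\rho}{\sigma}\bigr) = \bigl(1+\Fid{\rho}{\sigma}\bigr)\DBsq{\rho}{\sigma} \leq 2\,\DBsq{\rho}{\sigma}
\end{equation}
is clean. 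In effect you have derived the cited inequality from Audenaert et al.\ rather than taking it as a black box; the trade-off is purely one of self-containedness, and your proof would be a reasonable replacement for the paper's citation. The $(1/4,1/2,1/4)$ H\"older split is indeed the crux, but it is the standard trick in this corner of matrix analysis, so the argument is natural once recalled.
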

\noindent The left inequality in \Cref{fact:hellDB} is from $\DHella{\rho}{\sigma} \leq \Fid{\rho}{\sigma}$; the right inequality follows  from~\cite[Eq.~(32)]{audenaert2008asymptotic}.

\begin{definition}  \label{def:burchi}
    The \emph{Bures $\chi^2$-divergence} of $\rho$ from~$\sigma$ is the measured $f$-divergence associated to classical $\chi^2$-divergence~\cite{Braunstein1994,Temme2015}. 
    It can also be given the following formula when $\sigma = \diag(q_1, \dots, q_d)$ is diagonal of full rank (and this suffices to define it for general full-rank~$\sigma$, since it is unitarily invariant):
    \begin{equation}
        \label{eqn:chi-formula}
        \DBchi{\rho}{\sigma} = \sum_{i,j=1}^d \frac{2}{q_i + q_j} \abs{\tau_{ij}}^2, \quad \text{where } \tau = \rho - \sigma.
    \end{equation}
    We will use this formula even when~$q_1, \dots, q_d \geq 0$ do not sum to~$1$.
\end{definition}
Similar to the connection between $\ell_2^2$-distance and $\chi^2$-divergence in the classical case, the Bures $\chi^2$-divergence can be seen as a kind of ``weighted'' version of the Frobenius-squared distance, in which the error term $\abs{\tau_{ij}}^2$ is weighted by~$\frac{2}{q_i + q_j} = \Theta(\frac{1}{\max\{q_i,q_j\}})$.

Indeed, we will frequently consider applying \Cref{eqn:chi-formula} when the~$q_i$'s form (or approximately form) a nondecreasing sequence, meaning that (we expect) $q_i \leq q_j$.  In this case it is reasonable to use $q_i + q_j \geq q_j$, which motivates the following simple bound:
\begin{definition}
    In the notation from \Cref{def:burchi}, we define
    \begin{equation}
        \DBchihat{\rho}{\sigma} = \sum_{i,j=1}^{d} \frac{2}{q_{\max(i,j)}} \abs{\tau_{ij}}^2 \geq \DBchi{\rho}{\sigma};
    \end{equation}
    and, for $L = [d']$ (for $d' \leq d$) we define
    \begin{equation}
        \DBchiminus{L}{\rho}{\wt{\brho}} =  \sum_{\substack{i,j : \\ \max(i,j) \not \in L}} \frac{2}{q_{\max(i,j)}} \abs{\tau_{ij}}^2 = \sum_{k \not \in L} \frac{2}{q_{k}} \sum_{\substack{i,j : \\ \max(i,j) = k}} \abs{\tau_{ij}}^2.
    \end{equation}
    Note that 
    \begin{equation}    \label[ineq]{ineq:dbl}
        \DBchi{\rho}{\sigma} \leq \DBchi{\rho[L]}{\sigma[L]}  +    \DBchiminus{L}{\rho}{\sigma}.
    \end{equation}
\end{definition}

\begin{definition}
    The \emph{quantum relative entropy}~\cite{Umegaki1962}
    is defined by
    \begin{equation}
        \DKL{\rho}{\sigma} 
        = \tr\parens*{\rho(\ln \rho - \ln \sigma)}
        = \Dren{1}{\rho}{\sigma}
        = \dKL{P^{\rho\sigma}}{Q^{\rho\sigma}},
    \end{equation}
    the last equality using \Cref{prop:q-to-c}.
    Also, if $\rho$ is a ``bipartite'' quantum state on $A \otimes B$, where $A \cong B \cong \C^{d}$, and if $\rho_A$, $\rho_B$ denote its marginals (obtained by tracing out the~$B$, $A$ components, respectively), the \emph{quantum mutual information} of~$\rho$ is defined to be
    \begin{equation}
        I(A : B)_\rho = \DKL{\rho}{\rho_A \otimes \rho_B}.
    \end{equation}
\end{definition}
\begin{fact} \label{fact:Dren2}
    The conventional quantum $\infty$-R\'enyi divergence (discussed in, e.g.,~\cite{Androulakis2022}) is, by \Cref{prop:q-to-c},
    \begin{equation}
        \Dren{\infty}{\rho}{\sigma} = \max_{\substack{i,j \in [d]\\\braket{\varphi_i}{\psi_j} \neq 0}} \{\ln(p_i/q_j)\} \leq \ln(\|\rho\| \|\sigma^{-1}\|) \leq \ln \|\sigma^{-1}\|.
    \end{equation}
\end{fact}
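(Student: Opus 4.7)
The plan is to derive the equality from \Cref{prop:q-to-c} applied at $\alpha = \infty$, and then bound the resulting expression by matrix norms in the most elementary way possible. Throughout I will use the spectral decompositions $\rho = \sum_i p_i \ketbra{\varphi_i}{\varphi_i}$ and $\sigma = \sum_j q_j \ketbra{\psi_j}{\psi_j}$ from the paragraph preceding \Cref{prop:q-to-c}, and assume first that $\sigma$ is invertible (otherwise every quantity on the right is~$+\infty$ and there is nothing to prove).

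First I would recall that the \emph{classical} $\infty$-R\'enyi divergence coincides with the max-relative entropy, so that for the distributions $P^{\rho\sigma}, Q^{\rho\sigma}$ on $[d] \times [d]$ from the paragraph preceding \Cref{prop:q-to-c},
\begin{equation}
    \dren{\infty}{P^{\rho\sigma}}{Q^{\rho\sigma}} = \max_{(i,j) : P^{\rho\sigma}_{ij} \neq 0}\ln\bigl(P^{\rho\sigma}_{ij}/Q^{\rho\sigma}_{ij}\bigr).
\end{equation}
Plugging in $P^{\rho\sigma}_{ij} = \abs{\braket{\varphi_i}{\psi_j}}^2 p_i$ and $Q^{\rho\sigma}_{ij} = \abs{\braket{\varphi_i}{\psi_j}}^2 q_j$, the common factor cancels inside the logarithm and the support condition $P^{\rho\sigma}_{ij} \neq 0$ reduces to $\braket{\varphi_i}{\psi_j} \neq 0$ (terms with $p_i = 0$ contribute $-\infty$ and may be harmlessly included in the max). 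Combining with $\Dren{\infty}{\rho}{\sigma} = \dren{\infty}{P^{\rho\sigma}}{Q^{\rho\sigma}}$ from \Cref{prop:q-to-c} then yields the claimed equality.

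For the first inequality, I would simply observe $p_i \leq \max_k p_k = \|\rho\|$ and $q_j \geq \min_k q_k = 1/\|\sigma^{-1}\|$ (the latter being the operator norm of~$\sigma^{-1}$), so that $p_i/q_j \leq \|\rho\| \cdot \|\sigma^{-1}\|$ for every pair $(i,j)$ contributing to the maximum. Taking $\ln$ of both sides finishes this step. The second inequality is immediate from $\|\rho\| \leq \tr\rho = 1$ since $\rho$ is a density operator with nonnegative eigenvalues summing to one, after which $\ln$ is monotone.

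There is essentially no obstacle here; the only subtlety is bookkeeping about the support of $\rho$ versus~$\sigma$. If $\sigma$ is rank-deficient then $\|\sigma^{-1}\| = \infty$ by convention and the bounds hold trivially; one should just note that in the finite case both sides are genuinely finite precisely when $\mathrm{supp}(\rho) \subseteq \mathrm{supp}(\sigma)$, so the inequalities are meaningful rather than vacuous exactly when one would hope.
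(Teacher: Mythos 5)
Your proof is correct and is exactly the argument the paper intends (the paper states this as a Fact with only a pointer to \Cref{prop:q-to-c}): apply the classical $\infty$-R\'enyi formula to $P^{\rho\sigma},Q^{\rho\sigma}$, cancel the common overlap factor, and bound $p_i/q_j$ by $\|\rho\|\,\|\sigma^{-1}\|$ and then by $\|\sigma^{-1}\|$ via $\|\rho\|\le\tr\rho=1$. The only tiny imprecision is the closing remark that ``both sides are finite precisely when $\operatorname{supp}(\rho)\subseteq\operatorname{supp}(\sigma)$'' --- the rightmost bound $\ln\|\sigma^{-1}\|$ is finite only when $\sigma$ has full rank, a strictly stronger condition --- but this does not affect the validity of the inequalities.
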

\begin{remark}
    This quantity is \emph{not} the same as the ``quantum max-relative entropy'' defined in~\cite{Datta2009}; it \emph{would} be if one replaced the conventional R\'enyi entropy with its sandwiched form.
\end{remark}

Relating some of these divergences is the following chain of inequalities:
\begin{proposition} \label{prop:quantum-ineqs}
    $\displaystyle 
         \tfrac12\DHellsq{\rho}{\sigma}
    \leq \Dtr{\rho}{\sigma}
    \leq \DB{\rho}{\sigma}
    \leq \sqrt{\DKL{\rho}{\sigma}},\ \sqrt{\DBchi{\rho}{\sigma}}.$
\end{proposition}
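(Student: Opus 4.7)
My plan is to prove the chain of four inequalities by mirroring the classical chain of \Cref{prop:classical-ineqs}, reducing each quantum inequality to its classical analogue either by a standard quantum lemma or via the reduction of \Cref{prop:q-to-c} to the distributions $P^{\rho\sigma}, Q^{\rho\sigma}$.

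For the first two inequalities, standard results suffice. Inequality 1, $\tfrac12\DHellsq{\rho}{\sigma} \leq \Dtr{\rho}{\sigma}$, is the Powers--Stormer trace inequality $\|\sqrt\rho - \sqrt\sigma\|_{\mathrm{F}}^2 \leq \|\rho - \sigma\|_1$. Inequality 2, $\Dtr{\rho}{\sigma} \leq \DB{\rho}{\sigma}$, is one half of the Fuchs--van de Graaf bound, which gives $\Dtr{\rho}{\sigma} \leq \sqrt{1-\Fidsq{\rho}{\sigma}} = \sqrt{(1-\Fid{\rho}{\sigma})(1+\Fid{\rho}{\sigma})} \leq \sqrt{2(1-\Fid{\rho}{\sigma})} = \DB{\rho}{\sigma}$. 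For inequality 3, $\DB \leq \sqrt{\DKL}$, I would chain \Cref{fact:hellDB} ($\DBsq{\rho}{\sigma} \leq \DHellsq{\rho}{\sigma}$) with the identity $\DHellsq{\rho}{\sigma} = \dhellsq{P^{\rho\sigma}}{Q^{\rho\sigma}}$ recorded in \Cref{def:qhell} and the identity $\DKL{\rho}{\sigma} = \dKL{P^{\rho\sigma}}{Q^{\rho\sigma}}$ obtained from \Cref{prop:q-to-c} at $\alpha = 1$; the classical inequality $\dhellsq \leq \dKL$ from \Cref{prop:classical-ineqs} then finishes the step.

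Inequality 4, $\DKL \leq \DBchi$, is where the main obstacle lies, and the reason is that the naive Nussbaum--Szkola route does \emph{not} close it: on small explicit two-qubit examples (e.g.\ $\rho = \ketbra{+}{+}$, $\sigma = \diag(2/3,1/3)$) one can check that $\dchisq{P^{\rho\sigma}}{Q^{\rho\sigma}}$ strictly exceeds $\DBchi{\rho}{\sigma}$, so applying the classical $\dKL \leq \dchisq$ to $P^{\rho\sigma},Q^{\rho\sigma}$ produces only a strictly weaker bound than desired. A genuinely quantum argument is therefore needed. The approach I would take is to work with the Petz/Bures operator form $\DBchi{\rho}{\sigma} = \tr\bigl((\rho-\sigma)\, J_\sigma^{-1}(\rho-\sigma)\bigr)$ with $J_\sigma(X) = \tfrac12(\sigma X + X\sigma)$, which reproduces the $\sum_{ij}\tfrac{2}{q_i+q_j}|\tau_{ij}|^2$ formula in $\sigma$'s eigenbasis. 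Using the integral representation $\ln\rho - \ln\sigma = \int_0^\infty\bigl[(s+\sigma)^{-1}-(s+\rho)^{-1}\bigr]\,ds$, one rewrites $\DKL{\rho}{\sigma} = \tr(\rho(\ln\rho-\ln\sigma))$ as an integral of trace expressions in $\tau = \rho-\sigma$ and resolvents of $\rho$ and $\sigma$; each such term must then be dominated by the arithmetic-mean weights $2/(q_i+q_j)$ via a careful operator log-mean/arithmetic-mean inequality. The delicate point is that the ``natural'' Hessian of $t \mapsto \DKL{(1-t)\sigma + t\rho}{\sigma}$ at $t=0$ produces the logarithmic-mean weights $\tfrac{\ln(q_i/q_j)}{q_i-q_j}$, which are \emph{larger} than the Bures weights $2/(q_i+q_j)$, so a direct Hessian bound goes the wrong way; the integration over $t \in [0,1]$ must be handled with care to recover the arithmetic-mean form. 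A clean alternative would be to invoke the known result from the quantum $\chi^2$-contraction literature that the maximal quantum $\chi^2$-divergence dominates the quantum relative entropy. Either way, this is the step where essentially all the technical content resides.
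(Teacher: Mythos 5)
Your arguments for the first three inequalities are correct, though they take somewhat different (equally valid) routes than the paper. For inequality~1 you cite Powers--St{\o}rmer; the paper cites Audenaert et al., essentially the same fact. For inequality~2 you use Fuchs--van de Graaf, $\Dtr{\rho}{\sigma}\le\sqrt{1-\Fidsq{\rho}{\sigma}}\le\DB{\rho}{\sigma}$; the paper instead observes that both $\Dtr{\cdot}{\cdot}$ and $\DBsq{\cdot}{\cdot}$ are \emph{measured} $f$-divergences, so the classical inequality $\dtv\le\dhell$ applied under every POVM gives the result directly. For inequality~3 you route through \Cref{fact:hellDB} and the Nussbaum--Szko\l{}a distributions $P^{\rho\sigma},Q^{\rho\sigma}$; the paper instead observes that $\DBsq{\cdot}{\cdot}=\sup_{\text{POVM}}\dhellsq{\cdot}{\cdot}\le\sup_{\text{POVM}}\dKL{\cdot}{\cdot}\le\DKL{\rho}{\sigma}$ via the measured relative entropy. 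Both of your alternatives are fine.

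For inequality~4, you correctly diagnose that the Nussbaum--Szko\l{}a reduction fails (the classical $\chi^2$-divergence of $P^{\rho\sigma},Q^{\rho\sigma}$ is generically \emph{larger} than $\DBchi{\rho}{\sigma}$, so $\dKL\le\dchisq$ applied there is useless) and that a genuinely quantum argument is needed --- this is exactly the point. However, your proposed ``clean alternative'' cites the wrong result. Within the Petz family of monotone quantum $\chi^2$-divergences, the Bures/SLD version with arithmetic-mean weights $2/(q_i+q_j)$ is the \emph{smallest} member (arithmetic $\ge$ geometric $\ge$ harmonic mean, so its reciprocal weight is the smallest); this is exactly what the paper's \Cref{sec:conclusion} notes. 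Knowing that the \emph{maximal} quantum $\chi^2$-divergence dominates $\DKL{\cdot}{\cdot}$ would therefore say nothing about the Bures one. What is needed, and what the paper invokes from Temme--Kastoryano--Ruskai--Wolf--Verstraete (2010), is the stronger result that even the \emph{minimal} (Bures) $\chi^2$-divergence dominates the Umegaki relative entropy. Your operator-calculus sketch (integral representation of the logarithm, logarithmic-mean vs.\ arithmetic-mean weights) is pointing in the right direction, but as written it doesn't close the step, and your fallback citation is in the wrong direction of the ordering.
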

\noindent The first inequality above is from  from~\cite[Thm.~2]{audenaert2008asymptotic}. The second follows from the classical case~\cite{Fuchs1995}. The third also follows from the classical case and the observation that the ``measured'' quantum relative entropy is at most $\DKL{\rho}{\sigma}$ (see, e.g.~\cite[App.~A]{BFT17}).
The fourth also follows from the classical case, using that Bures~$\chi^2$ is the measured form of classical~$\chi^2$~\cite{Braunstein1994,Temme2015}.
As with \Cref{prop:classical-ineqs}, some of these inequalities can be sharpened slightly; for example we have the quantum Pinsker inequality $\Dtr{\rho}{\sigma} \leq \sqrt{\frac{1}{2}\DKL{\rho}{\sigma}}$.

\subsection{Quantum Tomography with Quantum Relative Entropy Loss}

One of our main results follows easily from the above discussion of divergences. 
The idea is to improve on certain ``reverse quantum Pinsker'' results which have been studied previously; see, e.g., \cite{audenaert2011continuity} for a quantum generalization of the reverse-Pinsker \Cref{ineq:rev-pinsk-class}.
We will use the following strengthened version with quantum Hellinger-squared in place of trace distance:
\begin{theorem}   \label{thm:quantum-rev-ineq}
    For $\rho, \sigma \in \C^{d \times d}$, we have
    $\displaystyle 
        \DKL{\rho}{\sigma} \leq (2+ \Dren{\infty}{\rho}{\sigma}) \cdot \DHellsq{\rho}{\sigma}.$
\end{theorem}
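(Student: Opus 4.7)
The plan is to reduce the quantum inequality to its classical counterpart, \Cref{prop:classical-rev-ineq}, via the correspondence encoded in \Cref{prop:q-to-c}. The strategy is to express all three quantities appearing in the theorem as ordinary classical divergences between the bipartite probability distributions $P^{\rho\sigma}$ and $Q^{\rho\sigma}$ on $[d]\times[d]$, and then invoke the classical reverse-Pinsker inequality already proven.

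Concretely, first note that \Cref{prop:q-to-c} (at $\alpha = 1$) directly gives the identity $\DKL{\rho}{\sigma} = \dKL{P^{\rho\sigma}}{Q^{\rho\sigma}}$, which is recorded right at the definition of quantum relative entropy. Next, \Cref{def:qhell} records that $\DHellsq{\rho}{\sigma} = \dhellsq{P^{\rho\sigma}}{Q^{\rho\sigma}}$. Finally, \Cref{prop:q-to-c} at $\alpha = \infty$ (equivalently, \Cref{fact:Dren2}) yields $\Dren{\infty}{\rho}{\sigma} = \dren{\infty}{P^{\rho\sigma}}{Q^{\rho\sigma}}$, using that $P^{\rho\sigma}_{ij}/Q^{\rho\sigma}_{ij} = p_i/q_j$ on the support of $P^{\rho\sigma}$. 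Note that $P^{\rho\sigma}$ and $Q^{\rho\sigma}$ genuinely are probability distributions thanks to orthonormality of the two eigenbases, e.g.\ $\sum_{i,j} P^{\rho\sigma}_{ij} = \sum_i p_i \sum_j \abs{\braket{\varphi_i}{\psi_j}}^2 = \sum_i p_i = 1$.

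With these three identifications in hand, the claim reduces to
\begin{equation}
    \dKL{P^{\rho\sigma}}{Q^{\rho\sigma}} \leq \bigl(2 + \dren{\infty}{P^{\rho\sigma}}{Q^{\rho\sigma}}\bigr) \cdot \dhellsq{P^{\rho\sigma}}{Q^{\rho\sigma}},
\end{equation}
which is exactly \Cref{prop:classical-rev-ineq} applied to $p = P^{\rho\sigma}$ and $q = Q^{\rho\sigma}$. There is no genuine obstacle here; the only thing to watch out for is matching the ``$p_i \neq 0$'' convention in the classical definition of $\dren{\infty}{\cdot}{\cdot}$ to the support restriction $\braket{\varphi_i}{\psi_j}\neq 0$ appearing in \Cref{fact:Dren2}, which is straightforward since both sides are unaffected by coordinates of measure zero. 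Thus the whole theorem is a single application of the classical lemma together with the Nussbaum--Szkoła bookkeeping already set up in \Cref{prop:q-to-c}.
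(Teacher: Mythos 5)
Your proposal is correct and is essentially identical to the paper's proof: both reduce the quantum inequality to \Cref{prop:classical-rev-ineq} by passing to the Nussbaum--Szko{\l}a distributions $P^{\rho\sigma}, Q^{\rho\sigma}$ and using the three identities $\DKL{\rho}{\sigma} = \dKL{P^{\rho\sigma}}{Q^{\rho\sigma}}$, $\DHellsq{\rho}{\sigma} = \dhellsq{P^{\rho\sigma}}{Q^{\rho\sigma}}$, and $\Dren{\infty}{\rho}{\sigma} = \dren{\infty}{P^{\rho\sigma}}{Q^{\rho\sigma}}$ from \Cref{prop:q-to-c}, \Cref{def:qhell}, and \Cref{fact:Dren2}. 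You spell out the bookkeeping more explicitly, but the argument is the same one-line reduction.
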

\begin{proof}
    This is immediate from $\DKL{\rho}{\sigma} = \dKL{P^{\rho\sigma}}{Q^{\rho\sigma}}$, \Cref{prop:classical-rev-ineq}, and \Cref{fact:Dren2}.
\end{proof}
Despite following directly from known results (up to constant factors), the above theorem does not seem to have appeared previously in the literature.  
Our next result shows that this can be used to automatically upgrade any quantum tomography algorithm with an infidelity guarantee to one with a relative entropy guarantee, at the expense of only a log factor (cf.~our main \Cref{thm:main} upgrading Frobenius-squared-tomography to $\chi^2$-tomography).
\begin{notation}    \label{not:depolarizing}
    We write $\Delta_\epsilon$ for the completely depolarizing channel, which for $0 \leq \epsilon \leq 1$ acts on $\rho \in \C^{d \times d}$ as $\Delta_\epsilon(\rho) = (1-\epsilon)\rho + \epsilon (\Id/d)$ (with $\Id$ denoting the identity matrix).
\end{notation}
\begin{theorem} \label{thm:learn-KL-easy}
    Let $\calA$ be a state tomography algorithm that, given $n$~copies of $\rho \in \C^{d \times d}$ and parameter~$\eps$, outputs an estimate~$\wh{\brho}$ achieving infidelity $\frac12 \DBsq{\rho}{\wh{\brho}} \leq \eps \leq 1/2$.
    Then letting $\brho' = \Delta_{2\eps}(\wh{\brho})$, we have \mbox{$\DKL{\rho}{\brho'} \leq 16\eps \cdot (2 + \ln(d/2\eps))$.}
\end{theorem}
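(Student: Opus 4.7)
The plan is to apply the ``reverse quantum Pinsker'' inequality from \Cref{thm:quantum-rev-ineq}, which gives
\begin{equation*}
    \DKL{\rho}{\brho'} \leq (2 + \Dren{\infty}{\rho}{\brho'}) \cdot \DHellsq{\rho}{\brho'}.
\end{equation*}
The proof then reduces to bounding each of the two factors on the right-hand side. The very reason for applying the depolarizing channel $\Delta_{2\eps}$ (rather than outputting $\wh{\brho}$ directly) is to force the second argument to have full rank bounded away from zero, so that $\Dren{\infty}{\rho}{\brho'}$ is finite; without this step, $\DKL{\rho}{\wh{\brho}}$ could easily be infinite if $\wh{\brho}$ happens to be rank-deficient.

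First I would control the max-relative entropy factor. Since $\brho' = (1-2\eps)\wh{\brho} + (2\eps/d)\Id$ is a mixture that places weight $2\eps$ on the maximally mixed state, every eigenvalue of $\brho'$ is at least $2\eps/d$, so $\|\brho'^{-1}\| \leq d/(2\eps)$. Plugging into \Cref{fact:Dren2} yields $\Dren{\infty}{\rho}{\brho'} \leq \ln(d/(2\eps))$, giving the $2 + \ln(d/2\eps)$ term of the final bound.

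Next I would bound the Hellinger-squared factor by reducing to infidelity via \Cref{fact:hellDB} and then exploiting joint concavity of the fidelity in its second argument. Concretely, $F(\rho, \brho') \geq (1-2\eps) F(\rho,\wh{\brho}) + 2\eps F(\rho, \Id/d) \geq (1-2\eps)(1-\eps)$, where the second inequality uses the hypothesis that the infidelity of $\wh{\brho}$ is at most $\eps$ together with the trivial bound $F \geq 0$. Hence $\DBsq{\rho}{\brho'} = 2(1 - F(\rho,\brho')) \leq 2(3\eps - 2\eps^2) \leq 6\eps$, and \Cref{fact:hellDB} upgrades this to $\DHellsq{\rho}{\brho'} \leq 12\eps$. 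Multiplying the two factors gives the claimed bound (up to the precise constant).

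There is essentially no hard step here: the theorem is a short and clean composition of (i) the reverse-Pinsker \Cref{thm:quantum-rev-ineq}, (ii) the eigenvalue lower bound coming from depolarization, and (iii) concavity of fidelity. The only piece of ``engineering'' is choosing the depolarization parameter; picking it proportional to $\eps$ balances the two effects, since the $\Dren{\infty}$ factor grows only logarithmically as the parameter shrinks, while the infidelity degradation grows linearly.
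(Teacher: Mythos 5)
Your proof is correct, and it takes a genuinely different route for the one nontrivial step. Both you and the paper reduce the theorem to (i) reverse quantum Pinsker (\Cref{thm:quantum-rev-ineq}), and (ii) the eigenvalue bound $\Dren{\infty}{\rho}{\brho'} \leq \ln(d/2\eps)$ via \Cref{fact:Dren2} --- these parts are identical. The difference is in bounding $\DHellsq{\rho}{\brho'}$. The paper first abstracts out a clean proposition entirely in Hellinger terms (if $\DHellsq{\rho}{\sigma}\le\eps$ and $\sigma'=\Delta_{\eps/2}(\sigma)$ then $\DKL{\rho}{\sigma'}\le 4\eps(2+\ln(2d/\eps))$), and proves it by the metric triangle inequality $\DHellsq{\rho}{\sigma'}\le 2\DHellsq{\rho}{\sigma}+2\DHellsq{\sigma}{\sigma'}$ together with $\DHellsq{\sigma}{\sigma'}\le\|\sigma-\sigma'\|_1\le\eps$ (using \Cref{prop:quantum-ineqs}); then it specializes back via \Cref{fact:hellDB}. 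You instead work directly with fidelity and invoke its (joint) concavity in the second argument to get $\Fid{\rho}{\brho'}\ge(1-2\eps)\Fid{\rho}{\wh{\brho}}\ge(1-2\eps)(1-\eps)$, then pass to $\DHellsq{\rho}{\brho'}$ via \Cref{fact:hellDB}. Your route is slightly shorter and even yields a marginally sharper constant ($12\eps$ rather than $16\eps$ in front of $2+\ln(d/2\eps)$), while the paper's route is organized to isolate a reusable Hellinger-only lemma and avoids invoking concavity of fidelity, relying only on facts already collected in \Cref{sec:prelims}. Both are correct.
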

Applying this theorem with the previously known result of Haah--Harrow--Ji--Wu--Yu,  \Cref{thm:hhjwy}, we immediately conclude \Cref{cor:1.5}, that there is a state tomography algorithm with respect to quantum relative entropy that has copy complexity $n = O(rd/\eps) \cdot \log^2(d/\eps)$ (using collective measurements).

\Cref{thm:learn-KL-easy} is immediate from the following (together with the fact that Hellinger-squared is upper bounded by $4$~times infidelity (\Cref{fact:hellDB})):
\begin{proposition}
    Suppose $\rho, \sigma \in \C^{d \times d}$ are quantum states with $\DHellsq{\rho}{\sigma} \leq \eps$.  Then for $\sigma' = \Delta_{\eps/2}(\sigma)$ we have $\DKL{\rho}{\sigma'} \leq 4\eps \cdot (2 + \ln(2d/\eps))$.
\end{proposition}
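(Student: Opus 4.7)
The plan is to apply Theorem~\ref{thm:quantum-rev-ineq} (the reverse quantum Pinsker inequality) to the pair $(\rho,\sigma')$. This gives
\begin{equation}
    \DKL{\rho}{\sigma'} \leq (2 + \Dren{\infty}{\rho}{\sigma'}) \cdot \DHellsq{\rho}{\sigma'},
\end{equation}
so I need two things: a bound on $\Dren{\infty}{\rho}{\sigma'}$, and a bound on $\DHellsq{\rho}{\sigma'}$ in terms of the original~$\DHellsq{\rho}{\sigma}$.

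The first bound is immediate. Since $\sigma' = (1-\eps/2)\sigma + (\eps/2)(\Id/d) \succeq (\eps/2d)\Id$, we have $\|(\sigma')^{-1}\| \leq 2d/\eps$, and Fact~\ref{fact:Dren2} gives $\Dren{\infty}{\rho}{\sigma'} \leq \ln(2d/\eps)$.

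The more delicate step is controlling $\DHellsq{\rho}{\sigma'}$. The natural idea is to use the operator monotonicity of the square-root function: from $\sigma' \succeq (1-\eps/2)\sigma$ we get $\sqrt{\sigma'} \succeq \sqrt{1-\eps/2}\cdot\sqrt{\sigma}$. Pairing against the positive operator $\sqrt{\rho}$ (and using $\tr(X \cdot (Y-Y')) \geq 0$ whenever $X \succeq 0$ and $Y \succeq Y'$) yields
\begin{equation}
    \DHella{\rho}{\sigma'} = \tr(\sqrt{\rho}\sqrt{\sigma'}) \geq \sqrt{1-\eps/2} \cdot \tr(\sqrt{\rho}\sqrt{\sigma}) = \sqrt{1-\eps/2} \cdot \DHella{\rho}{\sigma}.
\end{equation}
Since $\DHellsq{\rho}{\sigma} \leq \eps$ gives $\DHella{\rho}{\sigma} \geq 1-\eps/2$, this leads to $\DHella{\rho}{\sigma'} \geq (1-\eps/2)^{3/2}$, and hence $\DHellsq{\rho}{\sigma'} \leq 2\bigl(1 - (1-\eps/2)^{3/2}\bigr)$. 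The elementary inequality $1-(1-x)^{3/2} \leq \tfrac32 x$ for $x \in [0,1]$ then gives $\DHellsq{\rho}{\sigma'} \leq \tfrac32\eps$.

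Plugging both bounds into the reverse Pinsker inequality yields $\DKL{\rho}{\sigma'} \leq \tfrac32\eps \cdot (2 + \ln(2d/\eps)) \leq 4\eps \cdot (2 + \ln(2d/\eps))$, as desired. The main (and really only) obstacle is the $\DHellsq{\rho}{\sigma'}$ estimate, where one must resist the temptation to invoke joint convexity of a classical $f$-divergence and instead use operator monotonicity of the square root to compare $\sqrt{\sigma'}$ to $\sqrt{\sigma}$ directly; everything else is a two-line computation.
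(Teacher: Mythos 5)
Your proof is correct, and it takes a genuinely different route on the one nontrivial step. Both you and the paper begin identically: apply \Cref{thm:quantum-rev-ineq} to $(\rho,\sigma')$ and bound $\Dren{\infty}{\rho}{\sigma'}\leq \ln(2d/\eps)$ from the minimum eigenvalue of $\sigma'$. The difference is in bounding $\DHellsq{\rho}{\sigma'}$. The paper goes through the metric triangle inequality (\Cref{rem:triangle}), reducing to $\DHellsq{\sigma}{\sigma'}\leq\eps$, which it establishes via the chain $\DHellsq{\sigma}{\sigma'}\leq 2\Dtr{\sigma}{\sigma'}=\|\sigma-\sigma'\|_1=(\eps/2)\|\sigma-\Id/d\|_1\leq\eps$; this gives $\DHellsq{\rho}{\sigma'}\leq 2\eps+2\eps=4\eps$. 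You instead compare $\sigma'$ and $\sigma$ directly at the operator level: $\sigma'\succeq(1-\eps/2)\sigma$, operator monotonicity of $\sqrt{\cdot}$ yields $\sqrt{\sigma'}\succeq\sqrt{1-\eps/2}\,\sqrt{\sigma}$, and pairing against $\sqrt{\rho}$ gives $\DHella{\rho}{\sigma'}\geq\sqrt{1-\eps/2}\,\DHella{\rho}{\sigma}\geq(1-\eps/2)^{3/2}$, whence $\DHellsq{\rho}{\sigma'}\leq\tfrac32\eps$. Your route avoids the triangle inequality entirely and yields a tighter constant ($\tfrac32\eps$ versus $4\eps$), so you could actually state the proposition with $\tfrac32\eps$ in place of $4\eps$. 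The paper's approach is arguably a touch more elementary — no operator monotonicity, only a $\|\cdot\|_1$ estimate — and is the kind of argument that generalizes more readily when one replaces the depolarizing channel by a less structured perturbation, but for this specific statement your method is both valid and sharper.
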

\begin{proof}
    Since $\Delta_{\eps/2}(\sigma)$ has smallest eigenvalue at least $\eps/2d$, we have $\Dren{\infty}{\rho}{\sigma'} \leq \ln(2d/\eps)$ and hence from \Cref{thm:quantum-rev-ineq} it suffices to show $\DHellsq{\rho}{\sigma'} \leq 4\eps$.  In turn, since $\DHell{{\cdot}}{{\cdot}}$ is a metric, by \Cref{rem:triangle} it suffices to prove $\DHellsq{\sigma}{\sigma'} \leq \eps$.  
    But using \Cref{prop:quantum-ineqs}, we indeed have 
    \begin{equation}
        \DHellsq{\sigma}{\sigma'} \leq \|\sigma - \sigma'\|_1 = (\epsilon/2) \|\sigma - \Id/d\|_1 \leq (\epsilon/2)(\|\sigma\|_1 + \|\Id/d\|_1) = \epsilon. \qedhere
    \end{equation}
\end{proof}

\section{Quantum state tomography}  \label{sec:qubit}

We give a guide to this section:
\begin{itemize}
    \item  \Cref{subsec:qubit} gives a simple $\chi^2$-tomography algorithm for qubits; it achieves copy complexity $n = O(1/\eps)$ (no logs) using single-copy measurements with one round of adaptivity. 
    It serves as a small warmup for our main algorithm.
    \item \Cref{sec:frob} begins the main exposition of our reduction from Frobenius-squared-tomography to $\chi^2$-tomography. 
    This section shows how to give several useful black-box ``upgrades'' to any Frobenius-squared estimator.
    \item In \Cref{sec:sketch} we give a high-level sketch of the central estimation routine for our main theorem, which takes Frobenius-squared-tomography and turns it into a $\chi^2$-tomography algorithm ``except for very small eigenvalues''.
    \item The most involved \Cref{sec:central} follows; it fills in all the technical details for the preceding sketch.
    \item \Cref{sec:conclude} shows how to take the newly-established central estimation routine and massage its output to achieve either good $\chi^2$-accuracy (with one set of parameters) or good relative entropy accuracy (with another set of parameters).  It is in this section that we establish all the theorems and corollaries from \Cref{sec:results}.
    \item Finally, for the convenience of the reader, \Cref{sec:simple-frob} gives a simple Frobenius-squared-tomography algorithm using single-copy measurements with complexity $n = O(d^2/\eps)$.
\end{itemize}

\subsection{Qubit tomography with single-copy measurements} \label{subsec:qubit}

As mentioned in \Cref{sec:prior}, it has long been known that  one can learn a single qubit state~$\rho$ to infidelity~$\eps$ using \emph{single-copy} measurements on $O(1/\eps)$ copies of~$\rho$, combined with one ``round'' of adaptivity. 
In this section we give a short proof of the same result but with a stronger conclusion: $\eps$~accuracy with respect to Bures $\chi^2$-divergence.

We first repeat \Cref{prop:learn-chi2} in the simpler context of $d = 2$, and at the same time achieving a concentration bound:
\begin{lemma}   \label{lem:bit-chi2}
    There is a simple classical estimation algorithm that, given $n = O(\log(1/\delta)/\eps)$ samples from an unknown probability distribution $p = (p_0,p_1)$ on $\{0,1\}$, outputs an estimate $\wh{\bp}$ satisfying $\dchisq{p}{\wh{\bp}} \leq \eps$ except with probability at most~$\delta$.
\end{lemma}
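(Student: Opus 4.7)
The plan is to use a clipped empirical estimator. Take $n = C\log(1/\delta)/\eps$ samples, let $\bB$ be the number of $0$s, and output $\wh{\bp}_0 = \max(c\eps,\, \min(1-c\eps,\, \bB/n))$ for a small constant $c > 0$ (and $\wh{\bp}_1 = 1 - \wh{\bp}_0$). The clipping guarantees $\wh{\bp}_0\wh{\bp}_1 \geq c\eps/2$ always, preventing the denominator of
\[
    \dchisq{p}{\wh{\bp}} = \frac{(\wh{\bp}_0 - p_0)^2}{\wh{\bp}_0\wh{\bp}_1}
\]
from being catastrophically small even when one of the $p_i$ is tiny. (The identity above uses $p_0 + p_1 = \wh{\bp}_0 + \wh{\bp}_1 = 1$.)

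The analysis proceeds via Bernstein's inequality on the Binomial $\bB \sim \mathrm{Binomial}(n, p_0)$, which gives
\[
    |\bB/n - p_0| \leq O\bigl(\sqrt{p_0(1-p_0)\log(1/\delta)/n}\bigr) + O(\log(1/\delta)/n) = O\bigl(\sqrt{\eps\,p_0(1-p_0)}\bigr) + O(\eps)
\]
with probability at least $1 - \delta$. Assuming WLOG that $p_0 \leq 1/2$, I split into two cases. In the ``bulk'' regime $p_0 \geq c'\eps$ for a suitable $c' < c$, no clipping occurs and Bernstein yields $\wh{\bp}_0 \in [p_0/2,\,2p_0]$, so $(\wh{\bp}_0 - p_0)^2 = O(\eps p_0)$ while $\wh{\bp}_0\wh{\bp}_1 = \Omega(p_0)$, giving a $\chi^2$-ratio of $O(\eps)$. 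In the ``tail'' regime $p_0 < c'\eps$, the multiplicative Chernoff bound forces $\bB \leq O(\log(1/\delta))$, hence $\bB/n < c\eps$ and clipping forces $\wh{\bp}_0 = c\eps$; then $(\wh{\bp}_0 - p_0)^2 \leq (c\eps)^2$ while $\wh{\bp}_0\wh{\bp}_1 \geq c\eps/2$, again producing an $O(\eps)$ ratio.

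The only real obstacle is careful bookkeeping of constants so that the hidden $O(\cdot)$ factors are absorbed into $C$ and $c$, yielding the clean bound $\dchisq{p}{\wh{\bp}} \leq \eps$ rather than merely $O(\eps)$. No machinery beyond a single Bernstein inequality and a two-case split is needed, and the estimator is a one-pass computation---genuinely the ``simple classical estimation algorithm'' the lemma statement promises.
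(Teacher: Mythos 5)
Your proof is correct in spirit, and it takes a genuinely different route from the paper's. The paper proves \Cref{lem:bit-chi2} in two steps: it invokes the add-one estimator analysis from \Cref{prop:learn-chi2} to get $\E[\dchisq{p}{\wh{\bq}}] \leq \eps/4$ from $m = O(1/\eps)$ samples, then applies Markov's inequality and the median trick (running $O(\log(1/\delta))$ independent copies, using the fact that ``$\chi^2$-goodness'' is monotone in $\abs{p_1 - \wh{\bq}_1}$) to boost to confidence $1-\delta$. You instead design a single one-shot estimator---the empirical frequency clipped into $[c\eps,\, 1-c\eps]$---and analyze it directly via Bernstein's inequality with a bulk/tail split. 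Both achieve the same $n = O(\log(1/\delta)/\eps)$ sample complexity. Your route avoids the median trick entirely and yields a single deterministic post-processing of the count $\bB$, at the cost of more case-checking; the paper's route re-uses \Cref{prop:learn-chi2} (which it needs elsewhere) and gives a slicker proof.

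One caution on your intermediate claims: in the bulk regime you write $c' < c$ and say ``no clipping occurs,'' but these two assertions are in tension. If $p_0$ sits just above $c'\eps$ with $c' < c$, Bernstein only forces $\bB/n \geq p_0/2$, which can still lie below the clipping threshold $c\eps$, so the low-clip \emph{can} fire; moreover the claim $\wh{\bp}_0 \leq 2p_0$ fails if clipping pushes $\wh{\bp}_0$ up to $c\eps > 2p_0$ (possible when $p_0 < c\eps/2$). The resolution is that whenever the low-clip fires with $p_0 \geq c'\eps$, you automatically get $\abs{\wh{\bp}_0 - p_0} \leq c\eps$ and $\wh{\bp}_0 \wh{\bp}_1 \geq c\eps/2$, so the $\chi^2$-ratio is $\leq 2c\eps$ anyway---i.e., the clipped case falls back onto essentially the tail analysis. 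So the final bound survives, but your stated two-case decomposition should really be three cases (clip fires / clip doesn't fire, overlaid on the $p_0$-threshold), or else the bulk threshold $c'$ should be taken \emph{larger} than $c$ (e.g.\ $c' = 2c$) so that ``no clipping'' is actually guaranteed under the Bernstein event. This is exactly the kind of constant-bookkeeping you flagged, and it does not invalidate the approach.
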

\begin{proof}
    As shown in \Cref{prop:learn-chi2}, if $\wh{\bq}$ is the ``add-one estimator'' formed from~$m \geq 4/\eps$ samples, then $\E[\dchisq{p}{\wh{\bq}}] \leq \frac{1}{m+1} \leq \eps/4$. 
    By Markov's inequality, the estimator is ``good'', meaning $\dchisq{p}{\wh{\bq}} \leq \eps$, except with probability at most~$1/4$. 
    If we now use $n = O(\log(1/\delta)/\eps)$ samples to produce~$O(1/\delta)$ independent such estimators, a Chernoff bound tells us that, except with probability at most~$\delta$, at least a~$2/3$ fraction of them are ``good''. 
    If we now associate each of our estimates~$\wh{\bq} = (\wh{\bq}_0, \wh{\bq}_1)$ with the point $\wh{\bq}_1$ in the interval~$[0,1]$, we see that all of the ``good'' points appear consecutively. 
    (That is, ``reading left-to-right'', the $\wh{\bq}_1$ values consist of some ``bad'' points, followed by some ``good'' points, followed by some ``bad'' points.)
    The reason for this is that $\dchisq{p}{\wh{\bq}}$ is a monotonic function of $|p_1 - \wh{\bq}_1|$.  
    Thus if the algorithm now selects the median $\wh{\bq}_1$ point (and its associated estimate $\wh{\bq} = (\wh{\bq}_0, \wh{\bq}_1)$), this will be among the $2/3$-fraction ``good'' points except with probability at most~$\delta$.
\end{proof}

\begin{theorem} \label{thm:qubit-learning}
    There is an efficient quantum state tomography algorithm that uses $n = O(\log(1/\delta)/\eps)$ copies of an unknown qubit state $\rho \in \C^{2 \times 2}$ and outputs an estimate $\wh{\brho}$ satisfying $\DBchi{\rho}{\wh{\brho}} \leq \eps$ except with probability at most~$\delta$.  
    Moreover, the algorithm is simple to implement in the following sense:  The first $n/4$ copies of~$\rho$ are separately measured in the Pauli~$X$ basis, the next $n/4$ in the Pauli~$Y$ basis, the next $n/4$ in the Pauli~$Z$ basis, and the final~$n/4$ in a fixed basis determined by the first~$3n/4$ measurement outcomes.
\end{theorem}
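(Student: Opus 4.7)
The plan is to follow the algorithm specified in the theorem verbatim: use the first $3n/4$ copies, in three groups of $n/4$, to estimate the Bloch vector of $\rho$ by simple averages; diagonalize the resulting approximate state to get a candidate eigenbasis; then use the final $n/4$ copies, measured in that basis, together with Lemma~\ref{lem:bit-chi2} to obtain a classical $\chi^2$-accurate estimate of the induced two-outcome distribution.

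Write $\rho = \tfrac{1}{2}(I + r_x X + r_y Y + r_z Z)$. Averaging the $\pm 1$ outcomes of the Pauli $X$, $Y$, $Z$ measurements yields unbiased estimates $\wh{\br}_x, \wh{\br}_y, \wh{\br}_z$, and Hoeffding plus a union bound gives $(r_x - \wh{\br}_x)^2 + (r_y - \wh{\br}_y)^2 + (r_z - \wh{\br}_z)^2 \leq \eps/10$ except with probability at most $\delta/2$, once the hidden constant in $n = O(\log(1/\delta)/\eps)$ is large enough. Let $\ket{\wh{\bv}_0}, \ket{\wh{\bv}_1}$ be the eigenvectors of $\wh{\br}_x X + \wh{\br}_y Y + \wh{\br}_z Z$ (breaking ties arbitrarily if this matrix is zero). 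Measuring the final $n/4$ copies of $\rho$ in this basis samples from the classical distribution $p_i = \bra{\wh{\bv}_i}\rho\ket{\wh{\bv}_i}$; applying Lemma~\ref{lem:bit-chi2} conditionally on the basis gives $\wh{\bp}$ with $\dchisq{p}{\wh{\bp}} \leq \eps/2$ except with probability $\delta/2$. Output $\wh{\brho} = \wh{\bp}_0 \ketbra{\wh{\bv}_0}{\wh{\bv}_0} + \wh{\bp}_1 \ketbra{\wh{\bv}_1}{\wh{\bv}_1}$.

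For the error analysis, apply formula~\eqref{eqn:chi-formula} in the $\wh{\bv}$-basis, in which $\wh{\brho}$ is diagonal with entries $\wh{\bp}_0, \wh{\bp}_1$. The diagonal entries of $\rho$ in this basis are precisely $p_0, p_1$, so the $i=j$ terms contribute exactly $\dchisq{p}{\wh{\bp}} \leq \eps/2$. For the off-diagonal contribution, decompose the Bloch vector as $\vec{r} = (\vec{r} \cdot \wh{\vec{u}}) \wh{\vec{u}} + \vec{r}_\perp$, where $\wh{\vec{u}}$ is the unit vector parallel to $(\wh{\br}_x, \wh{\br}_y, \wh{\br}_z)$; since $\wh{\vec{u}} \cdot \vec{\sigma}$ is diagonal in the $\wh{\bv}$-basis, the off-diagonal entry of $\rho$ arises solely from $\tfrac{1}{2}\vec{r}_\perp \cdot \vec{\sigma}$. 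A direct computation in a rotated Pauli basis gives $|\rho_{01}|^2 = |\vec{r}_\perp|^2/4$, and since projection onto $\wh{\vec{u}}^{\perp}$ is a contraction, $|\vec{r}_\perp|^2 \leq \|\vec{r} - \wh{\vec{r}}\|_2^2 \leq \eps/10$. Using $\wh{\bp}_0 + \wh{\bp}_1 = 1$, the off-diagonal contribution equals $4|\rho_{01}|^2 \leq \eps/2$, and summing gives $\DBchi{\rho}{\wh{\brho}} \leq \eps$ as desired.

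The main conceptual point making this work is that the potentially problematic $1/q$ weighting in~\eqref{eqn:chi-formula} causes no trouble for the off-diagonal terms, because the two eigenvalues of $\wh{\brho}$ always sum to~$1$; the remaining step, the elementary bound $|\rho_{01}|^2 \leq \|\vec{r} - \wh{\vec{r}}\|_2^2/4$, then reduces the off-diagonal error to the crude Bloch-vector Frobenius error. By contrast, the diagonal $1/q$ weighting genuinely requires the classical $\chi^2$-estimator with its built-in add-one smoothing and median boosting, for which Lemma~\ref{lem:bit-chi2} is invoked as a black box.
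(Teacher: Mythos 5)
Your proposal is correct and follows essentially the same two-phase strategy as the paper's proof: use $3n/4$ copies to learn the state in Frobenius norm (equivalently, Bloch-vector $\ell_2$-norm), rotate to the estimated eigenbasis so the off-diagonal entries of $\rho$ are small, then invoke Lemma~\ref{lem:bit-chi2} on the final $n/4$ copies measured in that basis to handle the diagonal $\chi^2$-contribution. Your accounting of the off-diagonal term is a bit more explicit (via the $|\vec{r}_\perp|$ computation and the contraction argument) than the paper's one-liner appealing to unitary invariance of the Frobenius norm, but both yield the bound $4|\rho_{01}|^2 \leq O(\eps)$ and the proofs are otherwise the same; the paper simply cites its generic Frobenius-squared estimator (\Cref{prop:simple} with \Cref{prop:learn-L2-high}) in place of your direct Hoeffding-on-Bloch-coordinates argument.
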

\begin{proof}
    The first phase of the algorithm (using $3n/4$ copies) 
    can employ any standard single-copy quantum state tomography routine; the specific one we describe in \Cref{sec:simple-frob} has the stated Pauli format, and (using \Cref{prop:learn-L2-high}) will return a PSD matrix~$\brho'$ (not necessarily a state) satisfying
    \begin{equation}     \label[ineq]{ineq:phase1}
        \|\rho - \brho'\|_{\mathrm{F}}^2 \leq \eps/4
    \end{equation} 
    except with probability at most~$\delta/2$.
    Next, the algorithm employs a change of basis so as to make~$\brho'$ diagonal.
    It suffices to estimate~$\rho$ in this new basis. 
    Since Frobenius-distance is unitarily invariant, in the new basis \Cref{ineq:phase1} implies
    \begin{equation}    \label[ineq]{ineq:phase1a}
        |\rho_{01}|^2 = |\rho_{10}|^2 \leq \eps/8.
    \end{equation}
    As for the diagonal entries $p = (\rho_{00}, \rho_{11})$ of~$\rho$, the algorithm measures its remaining $n/4$ copies of~$\rho$ in the diagonal basis and employs \Cref{lem:bit-chi2}.
    For $n = O(\log(1/\delta)/\eps)$, this produces an estimate $\wh{\bp}$ satisfying $\dchisq{p}{\wh{\bp}} \leq \eps/2$ except with probability at most~$\delta/2$.
    The final estimate of~$\rho$ (in the new basis) will be~$\wh{\brho} = \diag(\wh{\bp})$.

    Except with probability at most~$\delta/2 + \delta/2 = \delta$, both components of the preceding algorithm produce good estimates.  Then using \Cref{eqn:chi-formula} we may decompose $\DBchi{\rho}{\wh{\brho}}$ into the on-diagonal contribution,  which is 
    $\dchisq{p}{\wh{\bp}} \leq \eps/2$, and the off-diagonal contribution, which is $2|\rho_{01}|^2 + 2|\rho_{10}|^2 \leq \eps/2$ (by \Cref{ineq:phase1a}.  
    This completes the proof.
\end{proof}

\subsection{Upgrading Frobenius-squared tomography algorithms}   \label{sec:frob}

\begin{definition}
    A function $f$ mapping quantum states to numbers at least~$1$ will be called a \emph{rate function}.  
\end{definition}
\begin{definition}
    We say a quantum state estimation algorithm~$\calA$ has \emph{Frobenius-squared rate~$f$} if the following holds:
    Whenever $\calA$ is given $m \in \N^+$ as well as $\rho^{\otimes m}$ for some quantum state~$\rho \in \C^{d \times d}$, it outputs a matrix $\wh{\brho} \in \C^{d \times d}$ (not necessarily a state) satisfying $\E[\|\rho - \wh{\brho}\|_\Fro^2] \leq f(\rho)/m$.
\end{definition}
\Cref{thm:ow,thm:krt} may be restated as follows:
\begin{theorem} \label{thm:ow2}
    There is an estimation algorithm with Frobenius-squared rate~$O(d)$ on $d$-dimensional states.
\end{theorem}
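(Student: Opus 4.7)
The statement is essentially a notational repackaging of \Cref{thm:ow} in the ``rate function'' language just introduced, so my plan is to give a short deduction rather than a fresh proof. The plan is to invoke \Cref{thm:ow} verbatim: it produces, from $\rho^{\otimes m}$, an estimator $\wh{\brho}$ (which is in fact a state, hence in particular a matrix in $\C^{d \times d}$) with $\E[\|\rho - \wh{\brho}\|_\Fro^2] \leq C \cdot d/m$ for some absolute constant $C > 0$. Setting $f(\rho) = \max\{Cd, 1\}$, this is exactly the bound $\E[\|\rho - \wh{\brho}\|_\Fro^2] \leq f(\rho)/m$ required by the definition of Frobenius-squared rate, and $f(\rho) \geq 1$ by construction, so $f$ qualifies as a rate function. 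Since $f$ is constant in~$\rho$ and equals~$O(d)$, this establishes the theorem.

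The only minor points to verify are cosmetic: first, the rate-function framework allows the algorithm to output arbitrary matrices, whereas \Cref{thm:ow} outputs a state, but this is only a strengthening of the hypothesis; second, the guarantee in \Cref{thm:ow} is stated without reference to an input parameter~$m$, but inspecting the statement one sees that it is parameterized by the number of copies~$n$, which we simply rename to~$m$ to match the rate-function convention.

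There is no technical obstacle here — the only ``work'' is the trivial observation that the rate-function definition was tailored so that \Cref{thm:ow} (and the analogous \Cref{thm:krt}) slot into it directly. The reason to bother with the restatement is that subsequent sections of \Cref{sec:frob} will build black-box ``upgrades'' applicable to any algorithm with a given Frobenius-squared rate, and it is convenient to have a uniform interface. I would therefore keep the proof to one or two sentences, perhaps adding a parenthetical remark that exactly the same restatement applied to \Cref{thm:krt} yields a Frobenius-squared rate of $O(rd)$ on rank-$\leq r$ states, which is the form used in the reductions that follow.
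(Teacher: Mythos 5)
Your proposal is correct and matches the paper exactly: the paper itself offers no separate proof, introducing \Cref{thm:ow2} with the phrase that \Cref{thm:ow,thm:krt} ``may be restated as follows,'' which is precisely the one-line restatement you give. Your checks (the algorithm may output any matrix, $f\geq 1$, renaming $n$ to $m$) are the right minor sanity checks and nothing more is needed.
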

\begin{theorem} \label{thm:krt2}
    There is an estimation algorithm using single-copy measurements with Frobenius-squared rate~$O(rd)$ on $d$-dimensional states of rank at most~$r$.
\end{theorem}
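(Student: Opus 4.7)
This theorem is a direct restatement of \Cref{thm:krt} in the ``rate function'' language just introduced: the earlier formulation ``$n = O(rd/\eps)$ copies suffice for Frobenius-squared accuracy~$\eps$'' unfolds (by taking $\eps = f(\rho)/m$ and applying Markov's inequality in the reverse direction) to the in-expectation statement $\E[\|\rho-\wh{\brho}\|_\Fro^2] \leq O(rd)/n$, which is exactly the claim that the Frobenius-squared rate is $O(rd)$. The proof is thus essentially a citation to \cite{KUENG201788}, together with the observation that their algorithm is single-copy by design, since it measures each copy of $\rho$ with an independently drawn rank-one POVM.

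For orientation, let me sketch the structure of the KRT argument. The algorithm performs on each copy of $\rho$ a uniformly random rank-one projective measurement drawn from a complex projective $4$-design (e.g., a Haar-random pure state, a uniformly random stabilizer state, or a random Pauli basis). From the $n$ outcomes one forms an unbiased linear estimator $\widetilde{\brho}$ of $\rho$, and then returns the best rank-$r$ approximation $\wh{\brho}$ of $\widetilde{\brho}$ in Frobenius norm (equivalently, hard-thresholding the singular values). The unbiased estimator on its own has Frobenius variance $O(d^2/n)$; the rank-$r$ truncation recovers a factor of roughly $r/d$ because the residual $\wh{\brho}-\rho$, having rank at most $2r$, is supported on the $2r$-dimensional tangent space to the rank-$r$ manifold at $\rho$, on which the $4$-design ensemble acts as a near-isometry.

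The main technical obstacle, were one to reprove this from scratch, is establishing this restricted near-isometry property on the low-rank tangent space with high probability; this requires a matrix Bernstein / noncommutative Rudelson-type inequality applied to the random measurement ensemble, which is where the $4$-design property (rather than just a $2$-design) is needed. Since we only require the in-expectation Frobenius-squared bound for the downstream reduction in \Cref{sec:central}, we import the KRT result as a black box and do not reprove the concentration bound here.
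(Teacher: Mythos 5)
Correct, and essentially the same as the paper: \Cref{thm:krt2} is just \Cref{thm:krt} (the cited KRT result) restated in the rate-function language of the preceding definitions, with no independent proof given in the paper. One fix to your framing, though: there is no need to invoke ``Markov's inequality in the reverse direction,'' and indeed that is not a valid inference in general (a high-probability bound does not yield an in-expectation bound without tail control). \Cref{thm:krt} \emph{already states} the in-expectation bound --- ``achieving expected Frobenius-squared error $O(rd/n)$'' --- which is verbatim the rate-function claim $\E[\|\rho-\wh{\brho}\|_\Fro^2]\leq O(rd)/m$; the ``$n=O(rd/\eps)$ samples suffice'' clause is the \emph{forward}-Markov corollary of it, so the direction of derivation is the opposite of what you wrote. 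Your background sketch of the KRT mechanism also has a couple of small inaccuracies (random Pauli bases and uniformly random stabilizer states are not complex projective $4$-designs, and the KRT estimator is a constrained least-squares / nuclear-norm fit rather than hard singular-value thresholding of a linear estimator), but since you import KRT as a black box these do not affect the soundness of the proposal.
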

Finally, \Cref{prop:simple} gives a simple single-copy measurement algorithm that has Frobenius-squared rate~$O(d^2)$ (matching matching \Cref{thm:krt2} in the high-rank case).\\

We will now successively describe several black-box ``upgrades'' one may make to a Frobenius-squared estimation algorithm.
\emph{All of these will have the feature that they preserve the single-copy measurement property.} 
Our ultimate goal will be to upgrade to closeness guarantees with respect to much stronger distance measures, with minimal loss in rate.
To illustrate the idea, we start with a very simple upgrade (that most natural algorithms are unlikely to need):
\begin{proposition} \label{prop:hermit}
    A Frobenius-squared estimation  algorithm may be transformed to one that always outputs Hermitian estimates, with no loss in rate.
\end{proposition}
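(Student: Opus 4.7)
The plan is simply to run $\calA$ and then Hermitize its output by averaging with its conjugate transpose. Define $\calA'$ as follows: on input $\rho^{\otimes m}$, run $\calA$ to obtain $\wh{\brho} \in \C^{d \times d}$, and return $\wh{\brho}' \coloneqq \tfrac12(\wh{\brho} + \wh{\brho}^*)$. This is manifestly Hermitian, and the additional step is purely classical post-processing, so the single-copy measurement structure (if any) is preserved; only the rate bound needs verification.

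For that, I would invoke the orthogonality of Hermitian and anti-Hermitian matrices under the Frobenius inner product. Concretely, for any matrix $M \in \C^{d \times d}$, write $M = M_H + M_A$ with $M_H = \tfrac12(M+M^*)$ Hermitian and $M_A = \tfrac12(M-M^*)$ anti-Hermitian. Then
\begin{equation}
    \|M\|_\Fro^2 = \tr(M^* M) = \tr\bigl((M_H - M_A)(M_H + M_A)\bigr) = \tr(M_H^2) - \tr(M_A^2),
\end{equation}
since the cross-terms cancel by cyclicity. Using $M_H^* = M_H$ and $M_A^* = -M_A$, this equals $\|M_H\|_\Fro^2 + \|M_A\|_\Fro^2$.

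Now apply this with $M = \rho - \wh{\brho}$. Since $\rho$ is Hermitian, its anti-Hermitian part is zero, and hence $M_H = \rho - \wh{\brho}'$ while $M_A = -\tfrac12(\wh{\brho} - \wh{\brho}^*)$. Therefore
\begin{equation}
    \|\rho - \wh{\brho}'\|_\Fro^2 \;\leq\; \|\rho - \wh{\brho}'\|_\Fro^2 + \|M_A\|_\Fro^2 \;=\; \|\rho - \wh{\brho}\|_\Fro^2.
\end{equation}
Taking expectations yields $\E[\|\rho - \wh{\brho}'\|_\Fro^2] \leq \E[\|\rho - \wh{\brho}\|_\Fro^2] \leq f(\rho)/m$, so $\calA'$ attains rate~$f$ as required. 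There is no real obstacle here; the only content is the Pythagorean decomposition, which is the reason Hermitization can only improve (never worsen) Frobenius-squared error against a Hermitian target.
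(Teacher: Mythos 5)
Your proposal is correct and matches the paper's argument: both Hermitize by averaging with the conjugate transpose and then invoke the Hilbert--Schmidt orthogonality of Hermitian and anti-Hermitian parts to get the Pythagorean inequality $\|\rho - \wh{\brho}\|_\Fro^2 = \|\rho - \wh{\brho}'\|_\Fro^2 + \|M_A\|_\Fro^2 \geq \|\rho - \wh{\brho}'\|_\Fro^2$. The only cosmetic difference is that you verify the cross-terms vanish directly via trace cyclicity, whereas the paper gestures at picking an orthogonal basis (e.g.\ generalized Paulis); the substance is identical.
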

\begin{proof}
    Given algorithm~$\calA$, let~$\calA'$ be the algorithm that on input~$\rho$ runs~$\calA$, producing~$\wh{\brho}$, and then outputs $\wh{\brho}_H \coloneqq (\wh{\brho} + \wh{\brho}^\dagger)/2$, so that $\wh{\brho}_A \coloneqq (\wh{\brho} - \wh{\brho}^\dagger)/2$ and $\wh{\brho} = \wh{\brho}_H+\wh{\brho}_A$. 
    The Hermitian matrices are a real vector space, so by picking a (Hilbert--Schmidt) orthogonal basis (for example, the generalized Pauli matrices), is it easy to verify that for Hermitian $\rho$, we always have
    $\|\wh{\brho} - \rho\|_\Fro^2 = \|\wh{\brho}_H - \rho\|_\Fro^2 + \|\wh{\brho}_A\|_\Fro^2 \geq \|\wh{\brho}_H - \rho\|_\Fro^2$. 
    The claim then follows by taking expectations.
\end{proof}

The next upgrade is not a change in algorithm, but rather in terminology.
\begin{definition} \label{def:diag}
    Say that an  estimation algorithm with Frobenius-squared rate~$f$ \emph{returns diagonal estimates} if, when run on~$\rho \in \C^{d \times d}$, it returns a unitary~$\bU$ and a (real) diagonal matrix~$\wh{\brho} = \diag(\bq)$ with $\bq_1 \leq \bq_2 \leq \cdots \leq \bq_d$  such that $\E[\|\bU \rho \bU^\dagger - \wh{\brho}\|_\Fro^2] \leq f(\rho)$.
\end{definition}
Given such an algorithm, we can get an Frobenius-squared estimator with rate~$f$ for~$\rho$ just by returning $\bU^\dagger \wh{\brho} \bU$. 
But we will prefer the interpretation that the algorithm is allowed to ``revise''~$\rho$ to state $\bU \rho \bU^\dagger$ (with $\bU$ of its choosing), and then try to estimate this new state.
\begin{proposition} \label{prop:dig}
    A Frobenius-squared estimation  algorithm may be transformed to one that returns diagonal estimates, with no loss in rate.
\end{proposition}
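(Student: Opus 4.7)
The plan is to observe that this is essentially a change in bookkeeping: we take the Hermitian output of $\calA$, classically diagonalize it, and report the diagonalizing unitary along with the (sorted) diagonal of eigenvalues. Since this is purely classical post-processing of the classical output of $\calA$, it preserves the single-copy measurement property automatically.

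First I would invoke \Cref{prop:hermit} to assume without loss of generality that the underlying Frobenius-squared algorithm $\calA$ always returns a Hermitian estimate $\wh{\brho}$ with $\E[\|\rho - \wh{\brho}\|_\Fro^2] \leq f(\rho)/m$. Define the new algorithm $\calA'$ as follows: run $\calA$ to produce $\wh{\brho}$, compute a spectral decomposition $\wh{\brho} = \bV \diag(\bq)\bV^\dagger$ with eigenvalues sorted so that $\bq_1 \leq \bq_2 \leq \cdots \leq \bq_d$, and output the unitary $\bU \coloneqq \bV^\dagger$ together with the diagonal matrix $\diag(\bq)$. This exists and is computable from $\wh{\brho}$ alone since $\wh{\brho}$ is Hermitian, and the sorted order can always be arranged by permuting the columns of $\bV$.

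To verify the rate bound, I would use unitary invariance of the Frobenius norm:
\begin{equation}
    \|\bU \rho \bU^\dagger - \diag(\bq)\|_\Fro^2 = \|\bU \rho \bU^\dagger - \bU \wh{\brho} \bU^\dagger\|_\Fro^2 = \|\bU(\rho - \wh{\brho})\bU^\dagger\|_\Fro^2 = \|\rho - \wh{\brho}\|_\Fro^2.
\end{equation}
Taking expectations, $\E[\|\bU \rho \bU^\dagger - \diag(\bq)\|_\Fro^2] = \E[\|\rho - \wh{\brho}\|_\Fro^2] \leq f(\rho)/m$, matching the rate of $\calA$ exactly. Since all of the extra work is classical post-processing of $\calA$'s output, the single-copy measurement property carries over unchanged.

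There is no real obstacle here; the only thing to notice is the interpretive point made in the paragraph preceding the definition, namely that we are reinterpreting the output not as an estimate of $\rho$ itself but as an estimate of the ``rotated state'' $\bU\rho\bU^\dagger$ in the eigenbasis of $\wh{\brho}$. Unitary invariance of $\|\cdot\|_\Fro$ is what makes this reinterpretation free of charge.
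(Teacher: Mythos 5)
Your proof is correct and takes essentially the same approach as the paper's: reduce to Hermitian outputs via \Cref{prop:hermit}, diagonalize, sort, and invoke unitary invariance of the Frobenius norm. (If anything, your convention $\bU = \bV^\dagger$ is slightly cleaner than the paper's bookkeeping, which writes $\wh{\brho} = \bU\diag(\bq)\bU^\dagger$ and says it ``returns $\bU$'' when strictly speaking it should return $\bU^\dagger$ to match \Cref{def:diag}.)
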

\begin{proof}
    First we transform the algorithm to output Hermitian estimates, using \Cref{prop:hermit}.
    Then, given output~$\wh{\brho}$, 
    the algorithm simply chooses a unitary~$\bU$ such that $\wh{\brho} = \bU \diag(\bq) \bU^\dagger$ with $\bq_1 \leq \cdots \leq \bq_d$, and returns the unitary~$\bU$ along with diagonal estimate~$\diag(\bq)$.  The proof is complete because Frobenius-squared distance is unitarily invariant.
\end{proof}

\begin{proposition} \label{prop:fixlearner}
    With only constant-factor rate loss, a Frobenius-squared estimation algorithm may be transformed to one that outputs diagonal estimates~$\wh{\brho} = \diag(\bq)$ that are genuine quantum states, meaning that~$\bq$ is a probability vector.
\end{proposition}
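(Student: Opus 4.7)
The plan is to take the diagonal-estimate algorithm from \Cref{prop:dig}, which outputs a unitary $\bU$ along with $\diag(\bq)$ where $\bq_1 \leq \cdots \leq \bq_d$, and post-process $\bq$ by Euclidean projection onto the probability simplex $\Delta_d = \{q \in \R^d : q_i \geq 0,\ \sum_i q_i = 1\}$, obtaining a probability vector $\bq'$. The algorithm then returns $\bU$ and $\diag(\bq')$. The key observation that makes this work is that for any quantum state $M = \bU\rho\bU^\dagger$, its diagonal $p = \diag(M)$ is automatically a probability vector (nonnegative entries summing to $\tr(M) = 1$), so the target of projection contains the ``truth''.

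To check that the output still satisfies the sorted-order requirement in \Cref{def:diag}, I would invoke the standard water-filling characterization of the projection onto $\Delta_d$: there is a unique constant $c$ (depending on $\bq$) such that $\bq'_i = \max(0,\, \bq_i - c)$, and then $\sum_i \bq'_i = 1$. Since $x \mapsto \max(0, x - c)$ is monotone nondecreasing, applying it coordinatewise to a sorted sequence yields a sorted sequence, so $\bq'_1 \leq \cdots \leq \bq'_d$ automatically.

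For the rate analysis, write $M = \diag(p) + O$ where $O$ collects the off-diagonal entries, and note that for any real vector $\bv$,
\begin{equation}
    \|M - \diag(\bv)\|_\Fro^2 = \|p - \bv\|_2^2 + \|O\|_\Fro^2
\end{equation}
by orthogonality of diagonal and off-diagonal components under the Frobenius inner product. Now $\Delta_d$ is a closed convex set containing $p$, so the Euclidean projection is nonexpansive and gives $\|p - \bq'\|_2 \leq \|p - \bq\|_2$; therefore $\|M - \diag(\bq')\|_\Fro^2 \leq \|M - \diag(\bq)\|_\Fro^2$. Taking expectations shows the new algorithm inherits the Frobenius-squared rate of the old one, in fact with \emph{no} constant-factor loss at all (stronger than the stated claim); it also clearly preserves the single-copy measurement property since no further measurements are made.

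There is no real obstacle here --- the only mild subtlety is the need to maintain the sorted condition and the probability-vector condition simultaneously, which the water-filling characterization handles in one stroke. An alternative post-processing such as ``zero out negatives and renormalize'' would also work but would only give a constant-factor rate guarantee, whereas simplex projection preserves the rate exactly.
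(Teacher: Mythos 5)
Your proof is correct, and it is a genuinely different route from the paper's. The paper re-estimates the diagonal from scratch: it splits the $m$ copies in half, uses the first half to run the diagonalizing estimator, and then uses single-copy standard-basis measurements on the second half to form the empirical estimate $\bq$ of the diagonal of $\bU\rho\bU^\dagger$ (which is a probability vector by construction). Your approach instead keeps the original diagonal estimate and post-processes it by Euclidean projection onto the simplex, using the orthogonal Frobenius decomposition into on- and off-diagonal parts together with the contraction property of projections onto convex sets. Your route is cleaner for this proposition in isolation --- you incur no rate loss whatsoever, whereas the paper pays a factor of $2$ for splitting the samples plus an additive $O(1/m)$, and you sidestep the need to re-sort by a permutation. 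However, the paper's ``re-measure the diagonal with fresh samples'' structure is not incidental: the very next result, \Cref{prop:subhigh}, extends \Cref{prop:fixlearner} to obtain \emph{per-entry multiplicative} high-probability guarantees on $\wh{\brho}_{ii}$ (e.g., $\wh{\brho}_{ii}$ within a $1.01$-factor of $\rho_{ii}$ when $\rho_{ii}$ is large), and it does so precisely by replacing the empirical estimate of the diagonal with the high-probability version from \Cref{prop:learn-L2-high}. Your simplex-projection output has no such per-entry guarantee, since a small $\ell_2$ error on the diagonal does not control the relative error of small entries. So the two approaches trade off: yours is optimal for the Frobenius-rate statement itself, while the paper's buys a plug-in point for the per-entry concentration needed downstream.
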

\begin{proof}
    First we apply \Cref{prop:dig}, obtaining algorithm~$\calA'$ with diagonal estimates and rate~$f$.
    Now our transformed algorithm, when given $m$~copies of~$\rho$, will start by running~$\calA'$  on the first~$m/2$ copies (we may assume~$m$ is even), yielding a diagonal estimate~$\brho'$ (and, to be formal, a unitary~$\bU$ which should be used to conjugate the remaining copies of~$\rho$).  Say that $\|\brho' - \rho\|_\Fro^2 = \boldsymbol{\eta}$, and recall that $\E[\boldsymbol{\eta}] \leq 2f(\rho)/m$.  The next step of the algorithm is to use single-copy standard basis measurements with the remaining~$m/2$ copies of~$\rho$ to make a new estimate~$\bq$ of the diagonal of~$\rho$.
    Applying \Cref{prop:learn-L2}, the  empirical estimator~$\bq$ is a genuine probability distribution, and the algorithm will finally output~$\wh{\brho} = \diag(\bq)$. (Actually, since $\bq$ might not have nondecreasing entries, we should finally ``revise'' by a permutation matrix.)
    The Frobenius-squared error of~$\wh{\brho}$ is its off-diagonal Frobenius-squared error plus its diagonal Frobenius-squared error; the former is at most~$\boldsymbol{\eta}$ and the latter is, in expectation, at most~$2/m$ by \Cref{prop:learn-L2}.  Since $\E[\boldsymbol{\eta}] \leq 2f(\rho)/m$, the total expected Frobenius-squared error is at most $2f(\rho)/m + 2/m = O(f(\rho))/m$, as needed.
\end{proof}

We will also need a high-probability version of the preceding result, with some extra properties.  The reader should recall the $m_\delta$ notation from \Cref{prop:learn-L2-high}.
\begin{proposition} \label{prop:subhigh}
    The algorithm from \Cref{prop:fixlearner} may be modified so that, given $0 < \delta < 1/2$, its output satisfies each of the following statements except with probability at most~$\delta$ (for any fixed $i \in [d]$):
    \begin{itemize}
        \item $\|\rho - \wh{\brho}\|_\Fro^2 \leq f/m_\delta$;
        \item if $\rho_{ii} \geq 1/m_\delta$ then $\wh{\brho}_{ii}$ is within a $1.01$-factor of $\rho_{ii}$;
        \item if $\rho_{ii} \leq 1/m_\delta$ then $\wh{\brho}_{ii} \leq 1.01/m_\delta$.
    \end{itemize}
\end{proposition}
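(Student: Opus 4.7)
The plan is to upgrade each of the two phases of the algorithm of \Cref{prop:fixlearner} to a high-probability guarantee and combine them via a union bound.

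For Phase~1 (the diagonalization step), I would apply standard confidence boosting to the estimator $\calA'$ from \Cref{prop:dig}. Split the $m/2$ Phase-1 copies into $K = \Theta(\log(1/\delta))$ independent blocks, run $\calA'$ once per block to produce candidate pairs $(\bU_j, \brho'_j)_{j=1}^K$, and identify each with the full matrix estimate $\widetilde{\brho}_j \coloneqq \bU_j^\dagger \brho'_j \bU_j$ of $\rho$. Each $\widetilde{\brho}_j$ has $\E[\|\rho - \widetilde{\brho}_j\|_\Fro^2] \le O(K f(\rho)/m)$, so by Markov's inequality each is ``good'' with probability at least $3/4$, and by a Chernoff bound at least $2/3$ of them are good except with probability at most $\delta/3$. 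To select a winner I would use a standard tournament rule: pick the $j^\ast$ minimizing the median of the Frobenius-distances from $\widetilde{\brho}_{j^\ast}$ to the other $\widetilde{\brho}_k$. A short triangle-inequality argument then shows that whenever the good-fraction event holds, $\widetilde{\brho}_{j^\ast}$ is itself good up to constant factors, giving Phase-1 output $(\bU, \brho') \coloneqq (\bU_{j^\ast}, \brho'_{j^\ast})$ with $\|\bU \rho \bU^\dagger - \brho'\|_\Fro^2 \le O(f(\rho)/m_\delta)$.

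For Phase~2 I would measure the remaining $m/2$ copies of $\rho$ in the basis defined by the Phase-1 unitary $\bU$, obtaining i.i.d.\ samples from the distribution $\bp$ of diagonal entries of $\bU \rho \bU^\dagger$. Let $\bq$ be the empirical distribution and $\wh{\brho} = \diag(\bq)$, composing $\bU$ with a sorting permutation to enforce nondecreasing diagonals as in \Cref{prop:fixlearner}. Applying parts (a), (b), (c) of \Cref{prop:learn-L2-high} with $S = [d]$ or $S = \{i\}$ then yields directly $\|\bp - \bq\|_2^2 \le 1/m_\delta$ and the two multiplicative per-coordinate guarantees on $\wh{\brho}_{ii}$, each with failure probability $\delta/3$ and each unaffected by the sorting permutation (which just relabels coordinates).

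Decomposing the total Frobenius-squared error in the $\bU$-basis and using that both $\brho'$ and $\diag(\bq)$ are diagonal there, so $\|\bU\rho\bU^\dagger - \brho'\|_\Fro^2$ already bounds all off-diagonal contributions of $\bU\rho\bU^\dagger$, we get
\begin{equation}
\|\rho - \wh{\brho}\|_\Fro^2 \;\le\; \|\bU\rho\bU^\dagger - \brho'\|_\Fro^2 + \|\bp - \bq\|_2^2 \;\le\; O(f(\rho)/m_\delta) + 1/m_\delta \;=\; O(f(\rho)/m_\delta).
\end{equation}
Absorbing constant factors into the universal constant $c$ hidden in $m_\delta$ gives the stated $f/m_\delta$ bound, and a union bound over the three failure events completes the argument. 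The main obstacle is the Phase-1 tournament step: one must verify that a geometric-median selection over matrix estimates genuinely produces the desired high-probability Frobenius guarantee with only the mandatory $\log(1/\delta)$ blow-up in sample complexity; the rest is bookkeeping and constant-chasing.
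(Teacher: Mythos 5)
Your proof is correct and takes essentially the same route as the paper: boost the Phase-1 Frobenius estimate to high confidence via the median-of-estimates trick (which the paper delegates to a citation, while you spell out the tournament rule), then re-estimate the diagonal using \Cref{prop:learn-L2-high} and union-bound, absorbing constants into $c$ in the $m_\delta$ notation. The only minor difference is that you expand the median-trick argument inline rather than invoking it as a black-box lemma.
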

\begin{proof}
    The first statement may be obtained in a black-box way using the ``median trick'', which  upgrades estimation-in-expectation  to estimation-with-confidence-$(1-\delta)$ at the expense of only an $O(\log(1/\delta))$ sample complexity factor.  This trick may be applied whenever the loss measure is a metric (as Frobenius distance is); see, e.g., \cite[Prop.~2.4]{HKOT23} for details.  It is sufficient to prove this statement with the~$O(\cdot)$, because we may  then remove it by raising~$c$ in the $m_\delta$ notation.  (Similarly, we may tolerate achieving $2\delta$ failure probabilities, rather than~$\delta$.)

    To get the other two conclusions, we need to re-estimate the diagonal of~$\rho$, just as we did in \Cref{prop:fixlearner}.
    For this we use \Cref{prop:learn-L2-high}.  As in \Cref{prop:fixlearner}, this re-estimation contributes some new on-diagonal Frobenius-squared distance, but  only at most~$1/m_\delta \leq f/m_\delta$; thus the proposition's first statement remains okay.
    The remaining statements follow from \Cref{prop:learn-L2-high} by taking its~``$S$'' to be~$\{i\}$.
\end{proof}

Now we come to a most important reduction: being able to estimate \emph{subnormalized states}.  
Let us define terms, and make the simplifying assumption that rate functions for proper states only depend on dimension and rank, and that they are nondecreasing functions of these parameters.
We also assume for simplicity that our subnormalized states arise just from submatrices, but they could just as well arise from any given projector~$\Pi$.
\begin{definition}
    We say a \emph{subnormalized} state estimation algorithm~$\calA$ has Frobenius-squared rate~$f(d,r,\tau)$ if the following holds:
    Whenever $\calA$ is given a subset $S \subseteq [d]$, as well as $\rho^{\otimes m}$ for some quantum state~$\rho \in \C^{d \times d}$ of rank at most~$r$, it outputs an estimate $\wh{\brho}[S] \in \C^{S \times S}$ such that $\E[\|\rho[S] - \wh{\brho}[S]\|_\Fro^2] \leq f(d,r,\tau)/m$, where $\tau$ denotes $\tr \rho[S]$. 
\end{definition}
\begin{remark}
    In the above definition, we may also include the condition of ``returning diagonal estimates'' as in \Cref{def:diag}, with the returned unitary~$\bU$ being in~$\C^{S \times S}$.
    Moreover, for linguistic simplicity we will henceforth assume that ``diagonal estimates'' are also required to have nonnegative (diagonal) entries.
\end{remark} 
\begin{remark}
    Our subnormalized state estimation algorithms will actually achieve improved rate $f(d',r',\tau)$, where $d' = |S| \leq d$ and $r' = \rank \rho[S] \leq r$, but we will not try to squeeze anything out of this, for simplicity.
\end{remark}
\begin{proposition} \label{prop:subby}
    A state estimation algorithm  $\calA$ with Frobenius-squared rate~$f(d,r)$ may be transformed to a subnormalized state estimation algorithm~$\calA'$, returning diagonal estimates, and having Frobenius-squared rate $f(d,r,\tau) = O(\tau \cdot f(d,r))$.
\end{proposition}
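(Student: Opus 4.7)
The plan is to isolate the $S$-block of $\rho$ by a two-outcome projective measurement, thereby obtaining (conditionally) clean copies of the normalized block state, and then invoke the base algorithm $\calA$ on those copies. First I would measure each of the $m$ copies of $\rho$ using the single-copy, two-outcome POVM $\{\Pi_S, \Id - \Pi_S\}$, where $\Pi_S = \sum_{i \in S}\ketbra{i}{i}$. Let $\bM$ denote the (random) number of ``success'' outcomes; then $\E[\bM] = \tau m$, and conditional on each success the post-measurement state is the normalized block state $\sigma \coloneqq \rho[S]/\tau$, an honest density matrix on $\C^{S \times S}$ of rank at most $r$. Conditional on $\bM = k \geq 1$, we thus have $k$ i.i.d.\ copies of $\sigma$; we run the Frobenius-squared algorithm $\calA$ on them (first upgraded via \Cref{prop:fixlearner} so that its output takes the form of a unitary $\bU \in \C^{S \times S}$ together with a proper diagonal state $\diag(\bq)$ with $\bq_1 \leq \cdots \leq \bq_{|S|}$) to obtain an estimate $\bsigma'$ of $\sigma$. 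Finally, output $\wh{\brho}[S] = (\bM/m) \bsigma'$, which has the required diagonal form and total mass at most~$1$. If $\bM = 0$, output the zero matrix.

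For the error analysis I would split via the triangle-style inequality of \Cref{rem:triangle} applied entry-by-entry:
\begin{equation}
\bigl\|\rho[S] - (\bM/m)\bsigma'\bigr\|_\Fro^2 \leq 2(\tau - \bM/m)^2 \|\sigma\|_\Fro^2 + 2(\bM/m)^2 \|\sigma - \bsigma'\|_\Fro^2.
\end{equation}
The first term is easy: $\|\sigma\|_\Fro^2 \leq \tr(\sigma) = 1$ since $\sigma$ is a density matrix, and $\E[(\tau - \bM/m)^2] = \Var[\bM/m] = \tau(1-\tau)/m \leq \tau/m$. For the second term I would condition on $\bM$ and use the guarantee of $\calA$: on $\{\bM = k \geq 1\}$ the inner expectation is at most $f(|S|, \rank \sigma)/k \leq f(d,r)/k$ by monotonicity of $f$, so
\begin{equation}
\E\bigl[(\bM/m)^2 \|\sigma - \bsigma'\|_\Fro^2 \cdot \mathbf{1}\{\bM \geq 1\}\bigr] \leq \E\bigl[(\bM/m)^2 \cdot f(d,r)/\bM\bigr] = \frac{f(d,r)\,\E[\bM]}{m^2} = \frac{\tau f(d,r)}{m}.
\end{equation}
The degenerate event $\{\bM = 0\}$ contributes at most $\Pr[\bM = 0]\|\rho[S]\|_\Fro^2 \leq (1-\tau)^m \tau^2 \leq e^{-\tau m}\tau^2 \leq \tau/m$ (using $e^{-x} \leq 1/x$). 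Summing gives $\E[\|\rho[S] - \wh{\brho}[S]\|_\Fro^2] = O(\tau f(d,r)/m)$, and single-copy measurements are evidently preserved since we only used a single-copy POVM followed by whatever measurements $\calA$ made.

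The only real (minor) obstacle is the fact that we do not actually know $\tau$, so we cannot directly rescale an estimate of $\sigma$ by $\tau$ to get an estimate of $\rho[S]$; instead we rescale by the empirical frequency $\bM/m$, and we must control both the resulting bias (via the variance of $\bM/m$) and the pathological $\bM = 0$ event. Both are handled above with room to spare.
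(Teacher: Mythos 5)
Your proof is correct and follows the same overall strategy as the paper: measure each copy with the two-outcome PVM $\{\Pi_S, \Id - \Pi_S\}$, run the (upgraded-to-genuine-diagonal-states) algorithm $\calA$ on the $\bM$ successful copies, and output the rescaled estimate $(\bM/m)\bsigma'$, treating $\bM = 0$ as a degenerate case. The one technical difference is the algebraic split. You write $\rho[S] - \wh{\brho}[S] = (\tau - \bM/m)\sigma + (\bM/m)(\sigma - \bsigma')$, whereas the paper factors out $\tau^2$ first and uses $\rho_{\mid S} - (\bM/(\tau m))\wh{\brho}_{\mid S} = (\rho_{\mid S} - \wh{\brho}_{\mid S}) + (1 - \bM/(\tau m))\wh{\brho}_{\mid S}$. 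Your grouping is slightly cleaner: since the estimation-error term carries the factor $(\bM/m)^2$, the conditional bound $f(d,r)/\bM$ cancels to give $\E[(\bM/m)^2 \cdot f/\bM \cdot \mathbf{1}\{\bM\geq 1\}] = f\,\E[\bM]/m^2 = \tau f/m$ directly, avoiding the paper's use of the identity $\E[1/(\mathrm{Bin}(m,\tau)+1)] \leq 1/(\tau m)$ and the accompanying $f/\bm' \leq 2f/(\bm'+1)$ step. The two analyses are equally rigorous; the paper's handling of $\bM = 0$ is folded into the $\E[1/(\mathrm{Bin}+1)]$ trick (using $f \geq 1$), whereas you bound that event explicitly by $(1-\tau)^m\tau^2 \leq \tau/m$, which also works. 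Both correctly note that the "diagonal estimates" and single-copy-measurement properties are preserved.
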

\begin{proof}
    We first apply \Cref{prop:fixlearner} so that~$\calA$ may be assumed to output diagonal, genuine quantum states. 
    This only changes bounds by constant factors on~$m$, to which the statement of this proposition is anyway insensitive. 
    
    Given $S \subseteq [d]$ and $\rho^{\otimes m}$, let us write $\tau = \tr \rho[S]$ and also introduce the quantum state $\rho_{\mid S} = \rho[S]/\tau$ (when $\tau > 0$).
    The first step of the new algorithm~$\calA'$ is to measure each copy of~$\rho$ using the two-outcome PVM~$(\Id_S, \Id_{[d] \setminus S})$.
    It retains all copies that have outcome~$S$ and discards the rest.
    In this way, $\calA'$ obtains $(\rho_{\mid S})^{\otimes \bm'}$, where $\bm' \sim \textrm{Binomial}(m, \tau)$.
    If $\bm' = 0$ then the algorithm will return the $0$~matrix. 
    Otherwise, if $\bm' \neq 0$ the algorithm applies $\calA$ to $\rho_{\mid S}$ and obtains an estimate $\wh{\brho}_{\mid S}$ with expected Frobenius-squared error at most $f(d',r')/{\bm'} \leq f(d,r)/{\bm'}$, where $d' = |S|$, $r' = \rank \rho[S]$.
    The final estimate that~$\calA'$ produces for~$\rho[S]$ will be $\wh{\brho}[S] \coloneqq (\bm'/m) \wh{\brho}_{\mid S}$; indeed, we can use this expression even in the $\bm' = 0$ case. 
    We now have
    \begin{equation}    \label{eqn:yuk1}
        \E\bigl[\|\rho[S] - \wh{\brho}[S]\|^2_\Fro\bigr] = 
        \E\bigl[\|\tau \rho_{\mid S} - (\bm'/m) \wh{\brho}_{\mid S}\|^2_\Fro\bigr] = 
        \tau^2 \E\bigl[\|\rho_{\mid S} - (\bm'/(\tau m)) \wh{\brho}_{\mid S}\|^2_\Fro\bigr].
    \end{equation}
    We write
    \begin{equation} \label{eqn:yuk2}
        \rho_{\mid S} - (\bm'/(\tau m)) \wh{\brho}_{\mid S} = \bDelta + \bR, \quad \bDelta \coloneqq \rho_{\mid S} - \wh{\brho}_{\mid S}, \quad \bR \coloneqq (1-\bm'/(\tau m)) \wh{\brho}_{\mid S},
    \end{equation}
    and use 
    \begin{equation}    \label{eqn:yuk3}
        \E\bigl[\|\bDelta + \bR\|_\Fro^2\bigr] \leq 
        2\E\bigl[\|\bDelta\|_\Fro^2\bigr] +   2\E\bigl[\|\bR\|_\Fro^2\bigr].
    \end{equation}
    By assumption on~$\calA$, for $m' > 0$ we have
    \begin{equation}
        \E\bigl[\|\bDelta\|_\Fro^2 \mid \bm' = m'\bigr] \leq f(d,r)/m' \leq 2f(d,r)/(m'+1),
    \end{equation}
    and this is also true even for $m' = 0$ (recall we always assume $f \geq 1$).
    Using the elementary fact $\E\bigl[\frac{1}{\mathrm{Bin}(m,\tau) + 1}\bigr] = \frac{1 - (1-\tau)^{m+1}}{\tau(m+1)} \leq 1/(\tau m)$, we conclude
    \begin{equation}
        \E\bigl[\|\bDelta\|_\Fro^2\bigr] \leq 2f(d',r')/(\tau m).
    \end{equation}
    As for $\E\bigl[\|\bR\|_\Fro^2\bigr]$, let us first observe that conditioned on any $\bm' = m$ (including $m = 0$), we have $\|\wh{\brho}_{\mid S}\|_\Fro \leq 1$ with certainty, simply because $\calA$ always outputs a genuine quantum state. 
    Thus
    \begin{equation}
        \E\bigl[\|\bR\|_\Fro^2\bigr] \leq \E\bigl[(1-\bm'/(\tau m))^2\bigr] = (1-\tau)/(\tau m) \leq 1/(\tau m).
    \end{equation}
    Combining all of the above (and using $f \geq 1$ again), we conclude
    $\E\bigl[\|\rho[S] - \wh{\brho}[S]\|^2_\Fro\bigr] \leq \tau^2 \cdot O(f(d,r)/(\tau m))$, as needed.
\end{proof}

Finally, we use \Cref{prop:subhigh} to obtain some high-probability guarantees:
\begin{proposition} \label{prop:final-upgrade}
    A state estimation algorithm having Frobenius-squared rate~$f(d,r)$ may be transformed (preserving the single-copy measurement property) into a subnormalized state estimation algorithm returning diagonal estimates with the following properties:  
    
    Given parameters~$r$,~$\delta$, and $S \subseteq [d]$, as well as $\rho^{\otimes m}$ for some quantum state $\rho \in \C^{d \times d}$ of rank at most~$r$, 
    the algorithm outputs a number~$\wh{\btau}$ and a (diagonal) estimate $\wh{\brho}[S] \in \C^{S \times S}$ such that, writing $\tau = \tr \rho[S]$ and recalling the notation $m_{\delta} = m/(c \ln(1/\delta))$ 
    (where $c \geq 1$ is some universal constant), we have the following:
    \begin{enumerate}[label=(\roman*)]
        \item \label{enum:c} if $\tau \leq 1/m_\delta$ then $\wh{\btau} \leq 1.1/m_\delta$ except with probability at most~$\delta$;
        \item \label{enum:z} if $\wh{\btau} \leq 1.1/m_\delta$, then $\|\rho[S] - \wh{\brho}[S]\|_{\Fro}^2 \leq O(\tau \cdot f(d,r)/m)$ except with probability at most~$.0001$;
    \item[] and if $\tau \geq 1/m_\delta$ then the following hold:
        \item \label{enum:b} the quantities $\tau$, $\wh{\btau}$, and $\tr \wh{\brho}[S]$ are all within a $1.1$-factor, except with probability at most~$\delta$;
        \item \label{enum:a} $\|\rho[S] - \wh{\brho}[S]\|_\Fro^2 \leq \tau \cdot f(d,r)/m_\delta$, except with probability at most~$\delta$;
        \item \label{enum:d} simultaneously for all $i \in S$ with $\rho_{ii} \geq \theta \coloneqq \max\{\tau/(100r), 1/m_{\delta/d}\}$, we have that $\wh{\brho}_{ii}$ is within a $1.1$-factor of~$\rho_{ii}$, except with probability at most~$\delta$.
        \item \label{enum:e} simultaneously for all $i \in S$ with $\rho_{ii} \leq \theta$, we have that $\wh{\brho}_{ii} \leq 1.1\theta$, except with probability at most~$\delta$.
    \end{enumerate}
\end{proposition}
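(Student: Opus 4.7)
The plan is to layer a preliminary $\tau$-estimate and a diagonal re-estimation phase on top of the subnormalized estimator of \Cref{prop:subby}, borrowing the diagonal re-estimation device of \Cref{prop:subhigh} and the median trick to upgrade expectation bounds to high-probability bounds.

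I would partition the $m$ copies into three equal batches $A$, $B$, $C$, absorbing the factor of~$3$ into the universal constant~$c$ in $m_\delta$. Batch~$A$ is used solely to estimate~$\tau$: measure each copy with the two-outcome PVM $(\Id_S, \Id_{[d]\setminus S})$ and take $\wh{\btau}$ to be the empirical frequency of $S$-outcomes; a multiplicative Chernoff bound with failure probability~$\delta$ yields property~\ref{enum:c} as well as the ``$\wh{\btau}$ within a $1.05$-factor of $\tau$'' half of~\ref{enum:b}. Batch~$B$ is fed to the subnormalized estimator of \Cref{prop:subby}, which returns a unitary $\bU \in \C^{S\times S}$ and a diagonal estimate $\brho'$ of $\bU \rho[S] \bU^\dagger$ with expected Frobenius-squared error $O(\tau f(d,r)/m)$. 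A single application of Markov's inequality to this expectation gives the $.0001$-probability bound~\ref{enum:z}, independently of the value of $\wh{\btau}$; and for the stronger bound~\ref{enum:a}, running the estimator on $O(\log(1/\delta))$ independent sub-batches of~$B$ and taking the Frobenius-distance median (the candidate minimizing total Frobenius distance to the others, as in the proof of \Cref{prop:subhigh}) promotes the expectation bound to a $(1-\delta)$-probability bound, at the cost of a $\log(1/\delta)$ factor absorbed into $m_\delta$.

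Batch~$C$ is used to re-estimate the diagonal of $\bU \rho[S] \bU^\dagger$: apply the now-fixed~$\bU$ to each copy, measure in the standard basis, and replace the $i$-th diagonal entry of $\brho'$ by the empirical frequency of outcome~$i$ for each $i \in S$. Invoke \Cref{prop:learn-L2-high}\ref{item:22}--\ref{item:33} with singleton $\{i\}$ at failure probability~$\delta/d$, and union-bound over the at most~$d$ coordinates in~$S$: simultaneously for every $i \in S$, the resulting $\wh{\brho}_{ii}$ is within a $1.01$-factor of $\rho_{ii}$ whenever $\rho_{ii} \geq 1/m_{\delta/d}$, and is at most $1.01/m_{\delta/d}$ otherwise. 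Since $\theta \geq 1/m_{\delta/d}$ by definition, a short case split (on whether $\rho_{ii}$ lies above or below $1/m_{\delta/d}$, then comparing to~$\theta$) converts these into the required~\ref{enum:d} and~\ref{enum:e}; and because $\tr \wh{\brho}[S]$ is itself an empirical $\Id_S$-estimate of $\tau$ from Batch~$C$, one more Chernoff bound supplies the remaining ``$\tr \wh{\brho}[S]$ vs.\ $\tau$'' clause of~\ref{enum:b}. A brief check using \Cref{prop:learn-L2} confirms that overwriting the diagonal only adds $O(\tau/m) \leq O(\tau f(d,r)/m)$ to the Frobenius-squared error, so~\ref{enum:z} and~\ref{enum:a} transfer intact to the final output $\wh{\brho}[S]$.

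The main obstacle is purely bookkeeping: five $\delta$-level bad events (Batch~$A$'s Chernoff, Batch~$B$'s median concentration, and three Batch~$C$ union bounds for~\ref{enum:b},~\ref{enum:d},~\ref{enum:e}) must be absorbed into a constant-factor inflation of the constant~$c$ in~$m_\delta$, which is harmless. A minor secondary subtlety is that the number of filtered copies used inside \Cref{prop:subby} is itself binomially random, so when aggregating sub-batch outputs for the median trick one should verify that the conditional expectation bounds still sum correctly; this is routine in the spirit of the proof of \Cref{prop:subby}.
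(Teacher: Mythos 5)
Your proposal follows the same high-level structure as the paper (preliminary $\tau$-estimate via the two-outcome PVM, a \Cref{prop:subby} run, diagonal re-estimation, median trick), but there are two genuine gaps, both concerning how (ii) and (iv) are simultaneously met.

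First, the algorithm outputs a single estimate $\wh{\brho}[S]$, yet you invoke two different estimates for (ii) versus (iv): the bare \Cref{prop:subby} output (whose $O(\tau f/m)$-expectation feeds Markov for (ii)) and the median-of-$O(\log(1/\delta))$-sub-batch estimate (for the $(1-\delta)$-confidence bound in (iv)). One output cannot serve both: the median estimate has expected Frobenius-squared error $\Theta(\tau f\log(1/\delta)/m) = \Theta(\tau f/m_\delta)$, since each sub-batch sees only $m/O(\log(1/\delta))$ copies, so Markov on the median does not recover (ii)'s sharper $O(\tau f/m)$ bound. The paper resolves this by \emph{branching on $\wh{\btau}$}: when $\wh{\btau} \leq 1.1/m_\delta$ it outputs the raw \Cref{prop:subby} estimate (Markov then gives (ii)); otherwise it outputs a \Cref{prop:subhigh}-based estimate carrying the high-confidence guarantees (iii)--(vi). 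You need this branch or an equivalent device.

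Second, your diagonal overwrite uses the \emph{unconditioned} standard-basis measurements, so the on-diagonal error is $\|p[S]-\bq[S]\|_2^2$ for a distribution $p$ over all of $[d]$. The expectation bound $O(\tau/m)$ from \Cref{prop:learn-L2} suffices for (ii), but (iv) needs a $(1-\delta)$-probability bound of order $\tau/m_\delta$, and the only high-probability $\ell_2^2$ statement available, \Cref{prop:learn-L2-high}\ref{item:11}, gives $\|p - \bq\|_2^2 \leq 1/m_\delta$ with no $\tau = \|p[S]\|_1$ factor. When $\tau f < 1$ (possible, since a priori only $\tau \geq 1/m_\delta$), this $1/m_\delta$ contribution overwhelms the target $\tau f/m_\delta$. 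The paper sidesteps this by re-estimating the diagonal of the \emph{normalized} conditional state $\rho_{\mid S}$ using the $\bm'\approx\tau m$ filtered copies (inside \Cref{prop:subhigh}), giving error $\leq 1/\bm'_\delta$ at that scale and hence $\approx \tau/m_\delta$ after rescaling by $\tau^2$. You should either filter before re-estimating the diagonal, or prove a restricted high-probability variant $\|p[S]-\bq[S]\|_2^2 \leq O(\|p[S]\|_1/m_\delta)$. Your remaining steps --- the Chernoff bounds for (i) and (iii), and the coordinatewise union bound over $d$ indices at level $\delta/d$ for (v) and (vi) (the paper union-bounds over $O(r)$ indices at level $\delta/r$, but both work given $\theta \geq 1/m_{\delta/d}$) --- are fine.
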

\begin{proof}
    Since the definition of~$m_\delta$ anyway contains an unspecified constant~$c$, it is sufficient to prove the proposition with constant losses on various bounds (and then raise~$c$'s value to compensate).  In particular, for notational simplicity we assume that we get~$2m$ rather than~$m$ copies of~$\rho$.
    
    The algorithm begins by using the first $m$~copies of~$\rho$ to obtain $(\rho_{\mid S})^{\otimes \bm'}$ as in \Cref{prop:subby}; this is done just to get~$\bm'$. The algorithm's output~$\wh{\btau}$ is~$\bm'/m$, and the proposition's conclusion \Cref{enum:c} follows straightforwardly from  Chernoff bounds
    (assuming~$c$ is sufficiently large).
    Similarly, the $\tau$-vs.-$\wh{\btau}$ part of \Cref{enum:b} follows from a Chernoff bound, and we will actually ensure $1.01$-factor closeness for later convenience. 
    
    If $\wh{\btau} \leq 1.1/m_\delta$, then the algorithm runs \Cref{prop:subby} on the second $m$~copies of~$\rho$, outputting the result.
    The conclusion in \Cref{enum:z} then holds except with probability at most~$.0001$, by applying Markov's inequality to \Cref{prop:subby}'s guarantee.

    We now describe how the remainder of the algorithm proceeds, when $\wh{\btau} \geq 1.1/m_\delta$.
    Note that since it only remains to prove \Cref{enum:b,enum:a,enum:d,enum:e}, we may as well assume $\tau \geq 1/m_{\delta}$.  
    The algorithm proceeds similarly to \Cref{prop:subby}, using the second $m$~copies of~$\rho$ to get $(\rho_{\mid S})^{\otimes \bm'}$ for a new value of~$\bm'$.  Since we are now assuming $\tau \geq 1/m_\delta$, a Chernoff bound implies that except with probability at most~$\delta$ we'll have
    \begin{equation}    \label{eqn:factory}
        \abs{\frac{\bm'}{m} - \tau} \leq .01\sqrt{\tau/m_\delta}  \leq .01 \tau \quad\implies\quad \frac{\bm'}{m} \text{ is within a $1.02$-factor of } \tau
    \end{equation}
    (as always, assuming $c$ is large enough).
    The algorithm now applies \Cref{prop:subhigh} in place of \Cref{prop:subby}, getting an estimate~$\wh{\brho}_{\mid S}$ of~$\rho_{\mid S}$ that satisfies the conclusions of \Cref{prop:subhigh}.  Finally, as before, the algorithm produces $\wh{\brho}[S] \coloneqq (\bm'/m)\wh{\brho}_{\mid S}$ as its final estimate.
    Let us now verify \Cref{enum:b,enum:a,enum:d,enum:e}.

    First, $\tr \wh{\brho}[S] = \bm'/m$, which by \Cref{eqn:factory} is within a $1.02$-factor of~$\tau$, thereby completing the proof of \Cref{enum:b} (recall that $\tau$~and~$\wh{\btau}$ are within a $1.01$-factor).

    Next, we verify \Cref{enum:a} up to a constant factor (as is sufficient).
    Following \Cref{eqn:yuk1,eqn:yuk2,eqn:yuk3} (but without expectations), we have
    \begin{equation}
        \|\rho[S] - \wh{\brho}[S]\|_{\Fro}^2 \leq \tau^2 (2 \|\rho_{\mid S} - \wh{\brho}_{\mid S}\|_{\Fro}^2 + 2\|\bR\|_{\Fro}^2) \leq 2\tau^2 \cdot f(d,r)/{\bm'_{\delta}} + 2(\tau-\bm'/m)^2.
    \end{equation}
    But \Cref{eqn:factory} (and using $f \geq 1$) we can bound the above by $2.03 \tau \cdot f(d,r)/m_{\delta}$, establishing \Cref{enum:a}.

    To show \Cref{enum:d}, let $B$ denote the set of all $i \in S$ with $\rho_{ii} \geq \theta$. 
    Since $\theta \geq \tau/(100r) = (\tr \rho[S])/(100 r)$, we know that $|B| \leq 100r$. Moreover, for any $i \in B$ we may use
    \begin{equation}
        (\rho_{\mid S})_{ii} \geq \theta/\tau \geq 1/(\tau m_{\delta/d}) \geq 1/(\tau m_{\delta/r}) \geq 1/(1.02 \bm'_{\delta/r}),
    \end{equation}
    (employing \Cref{eqn:factory}).
    (We weakened $\delta/d$ to $\delta/r$ just to illustrate this is all we need for \Cref{enum:d}.) 
    So by using the second bullet point of \Cref{prop:subhigh} in a union bound over the at most $O(r)$ indices in~$B$, we conclude that (except with probability at most~$O(\delta)$) for all $i \in B$  it holds that $(\wh{\brho}_{\mid S})_{ii}$ is within a $1.01$-factor of $(\rho_{\mid S})_{ii}$, and hence (by \Cref{eqn:factory}) $\wh{\brho}_{ii}$ is within a $1.1$-factor of $\rho_{ii}$.  This completes the verification of \Cref{enum:d}.

    Finally, verifying \Cref{enum:e} is similar; for simplicity, we just union-bound over all $i \in S \subseteq [d]$, using the fact that~$\theta \geq 1/m_{\delta/d}$.
\end{proof}

\subsection{The plan for learning in~\texorpdfstring{$\chi^2$}{chi squared}: refining diagonal estimates on submatrices}  \label{sec:sketch}
Suppose we have come up with a diagonal estimate~$\sigma_1$ of~$\rho \in \C^{d \times d}$ having some Frobenius-squared distance $\eta_1 = \|\rho -\sigma_1\|_\Fro^2$. 
(Here we will have ``revised'' some original~$\rho$ by the unitary that makes~$\sigma_1$ diagonal; this revision will be taken into account in all future uses of~$\rho$.)
Suppose we now choose some~$d_2 \leq d_1 \coloneqq d$, define $\rho_2$ to be the top-left~$d_2 \times d_2$ submatrix of~$\rho$, and apply \Cref{prop:final-upgrade} to it.  The idea is that we hope to improve the top-left part of our estimate~$\sigma_1$.

Recall that \Cref{prop:final-upgrade} affords us a diagonal estimate~$\sigma_2 \in \C^{d_2 \times d_2}$; let us understand a little more carefully what this means. 
The algorithm will give us a unitary~$U_2 \in \C^{d_2 \times d_2}$ such that $\|U_2 \rho_2 U_2^\dagger - \sigma_2\|_\Fro^2 = \eta_2$ for some small value~$\eta_2$. 
The idea now is to ``revise''  both~$\rho_1 \coloneqq \rho$ and~$\sigma_1$ by the unitary $U_2 \oplus \Id$, where here $\Id$ has dimension~$d_1-d_2$.
By design, the revised version of~$\rho_2$ will have Frobenius-squared distance $\eta_2$ from~$\sigma_2$.
Moreover, after revision, the fact that $\|\rho_1 -\sigma_1\|_\Fro^2 = \eta_1$ is unchanged (since Frobenius distance is unitarily invariant). 
On the other hand, although~$\sigma_1$ was previously diagonal, it no longer will be after revision. 
But it's easy to see that it will remain diagonal \emph{except} on its top-left $d_2 \times d_2$ block, which we are intending to replace by~$\sigma_2$ anyway. 
In particular, the \emph{off-diagonal} $d_2 \times (d_1-d_2)$ and $(d_1-d_2) \times d_2$ blocks of~$\sigma_1$ remain zero.

Let us summarize. 
We will first obtain a diagonal estimate $\sigma_1$ of~$\rho_1$ with some error~$\eta_1$. 
Then after choosing some $d_2 \in [d_1]$, we will obtain a further diagonal estimate~$\sigma_2$ of the top-left $d_2 \times d_2$ block of~$\rho$, with some error~$\eta_2$. 
We might then take as final estimate~$\wh{\rho}$ the diagonal matrix formed by replacing the top-left $d_2 \times d_2$ block of~$\sigma_1$ by~$\sigma_2$.

Naturally, this plan can be iterated (meaning we can try to improve the estimate's top-left $d_3 \times d_3$ block for some $d_3 \in [d_2]$) but let us pause here to discuss error.  
If we're interested in the Frobenius-squared error of our current estimate~$\wh{\rho}$, we can't say more than that it is bounded by~$\eta_1 + \eta_2$. 
Here we're decomposing the error into the contribution from the top-left $d_2 \times d_2$ block (which is $\eta_2$) plus the contribution from the remaining \raisebox{.3ex}{$\pmb{\lrcorner}$}-shaped region (consisting of the bottom-right $(d_1-d_2) \times (d_1-d_2)$ block plus the two off-diagonal blocks). 
We will just bound this second error contribution by the whole Frobenius-squared distance of $\sigma_1$~from~$\rho_1$, which is~$\eta_1$.

It would seem that this scheme of refining our estimate for the top-left block hasn't helped, since it took us from Frobenius-squared error $\eta_1$ to Frobenius-squared error (at most) $\eta_1 + \eta_2$. 
But the idea is that our new estimate~$\wh{\rho}$ may have improved \emph{Bures $\chi^2$-divergence}.
Recall the formula for $\chi^2$-divergence, \Cref{eqn:chi-formula} (which we will apply even though~$\wh{\rho}$ might not precisely be a state, meaning of trace~$1$).
Recall also that our diagonal estimates $\sigma_1 = \diag(q^{(1)})$ and $\sigma_2 = \diag(q^{(2)})$ are chosen to have nondecreasing entries along the diagonal. 
(We moreover expect that~$\wh{\rho}$ will also have nondecreasing entries, meaning $q^{(2)}_{d_2} \leq q^{(1)}_{d_2+1}$, but we won't rely on this.)
Now we can use the bound
\begin{multline}
    \DBchi{\rho}{\wh{\rho}} \leq \sum_{i,j=1}^d \frac{2}{q_{\max(i,j)}} |\rho_{ij} - \wh{\rho}_{ij}|^2 
    \leq \frac{2}{q^{(2)}_1} \sum_{i,j=1}^{d_2} |\rho_{ij} - \wh{\rho}_{ij}|^2 +  \frac{2}{q^{(1)}_{d_2+1}} \sum_{\substack{i,j : \\ \max(i,j) > d_2}} |\rho_{ij} - \wh{\rho}_{ij}|^2  \\
    \leq \frac{2}{q^{(2)}_1} \eta_2 + \frac{2}{q^{(1)}_{d_2+1}} \eta_1.
\end{multline}
The idea here is that if, perhaps
\begin{equation} \label{eqn:opti}
    q^{(2)}_1 \approx \cdots q^{(2)}_{d_2} \approx (\tr \sigma_2)/r; \quad \text{and} \quad q^{(1)}_{d_2+1} \approx \cdots \approx q^{(1)}_{d} \approx (\tr \sigma_1)/r,
\end{equation}
then hopefully from \Cref{prop:subby} with $m$~copies we will have $\eta_1 \approx O(\tau_1 \cdot f(d,r))/m$ and $\eta_2 \approx O(\tau_2 \cdot f(d,r))/m$, where $\tau_i = \tr \rho_i \approx \tr \sigma_i$. 
Then the total $\chi^2$-error would be approximately
\begin{equation}
    \frac{2r}{\tau_2} \cdot O(\tau_2 \cdot f(d,r))/m + \frac{2r}{\tau_1} \cdot O(\tau_1 \cdot f(d,r))/m = O(r \cdot f(d,r)) / m.
\end{equation}
This would mean we have converted Frobenius-squared rate $O(f(d,r))$ to $\chi^2$-divergence rate $O(r \cdot f(d,r))$.
Now \Cref{eqn:opti} might seem a little optimistic, but our idea will be that no matter what $\rho$'s eigenvalues are, we can break them up into logarithmically many groups where they only differ by a constant factor, and thereby achieve the desired $\chi^2$-divergence rate of $O(r \cdot f(d,r))$ up to logarithmic losses. 
Unfortunately, we will have to deal separately with any extremely small eigenvalues of~$\rho$, which causes some additional losses.

\begin{remark}  \label{rem:itshard}
    Ideally this plan suggests we might be able to achieve sample complexity $n = \wt{O}(rd/\eps)$ for tomography with respect to Bures $\chi^2$-divergence (for collective measurements).  
    But the ``small eigenvalue'' issue causes problems for this.
    Without explicitly claiming a lower bound, let us sketch why it seems difficult to significantly beat the $n = \wt{O}(r^{.5}d^{1.5}/\eps)$ complexity from \Cref{cor:1}, even in the case $r = 1$.  
    
    So suppose~$\rho = \ketbra{v}{v}$ for an unknown unit vector $\ket{v} \in \C^d$.
    The best known tomography algorithm for a pure state is extremely natural and simple~\cite{Hayashi1998}; it outputs a pure state $\ketbra{\bu}{\bu}$ and achieves $\abs{\braket{\bu}{\bv}} \geq 1-\eta$, i.e.\ infidelity~$\eta$, with high probability using $n = O(d/\eta)$ copies.
    Moreover, $n = \Omega(d/\eta)$ is a known lower bound~\cite{haah2017sample}.  
    However, with certainty we will have $\abs{\braket{\bu}{\bv}} \neq 1$ and hence $\DBchi{\rho}{\ketbra{\bu}{\bu}} = \infty$.
    To achieve $\DBchi{\rho}{\wh{\brho}} < \infty$ we will have to output a full-rank hypothesis~$\wh{\brho}$, and to achieve $\DBchi{\rho}{\wh{\brho}} \leq O(\eps)$ it's hard to imagine what to try besides something like $\wh{\brho} = \Delta_{\eps}(\ketbra{\bu}{\bu})$.
    But with this choice it's not hard to compute that $\DBchi{\rho}{\wh{\brho}} \geq (d/\eps)\cdot \Omega(\eta^2)$, seemingly forcing us to choose $\eta = \Theta(\eps/\sqrt{d})$ and thereby use $n = \Omega(d^{1.5}/\eps)$ copies.
\end{remark}

\subsection{The central estimation algorithm}   \label{sec:central}

\begin{theorem} \label{thm:central}
    A state estimation algorithm $\calA$ having Frobenius-squared rate~$f = f(d,r)$ satisfying\footnote{This mild assumption is made to keep parameter-setting simpler.} $f \gg \log d$ may be transformed (preserving the single-copy measurement property) into a state estimation algorithm~$\calA'$ returning diagonal estimates with the following properties:  

    Given $r$ and~$m \geq r$, the algorithm sets the following parameters:
    \begin{align*}
        \delta &= \frac{.0001}{\log_2(m/rf)}, &\tilde{\eps} &= C r f/m_\delta,  &\ell_{\max} &= \lceil\log_2(1/\tilde{\eps}) \rceil, &\eps &= \tilde{\eps} \ell_{\max} &M &= 2m \ell_{\max}.
    \end{align*}
    (Here $C$ is a large universal constant, and it is assumed that~$\eps$ is at most some small universal constant.) 
    
    Then, given $\rho^{\otimes M}$, where $\rho \in \C^{d \times d}$ is a quantum state of rank at most~$r$, 
    the algorithm~$\calA'$ outputs (with probability at least~$.99$) a partition $[d] = \bL \sqcup \bR$ (with $\bL = [\bd']$ for some $\bd' \leq d$), together with a quantum state $\wt{\brho} = \diag(\bq) \in \C^{d \times d}$ satisfying:
    \begin{enumerate} [label=(\alph*)]
        \item $|\bR| \leq O(r \ell_{\max})$; \label{enum:0}
        \item $\btau, \beps' \leq O(\tilde{\eps})$, where  $\btau \coloneqq \tr \rho[\bL]$ and $\beps' \coloneqq \tr \wt{\brho}[\bL]$; \label{enum:A}
        \item $\|\rho[\bL] - \wt{\brho}[\bL]\|_{\Fro}^2 \leq O(\frac{\tilde{\eps}^2}{r \ln(1/\delta)}) \leq O(\tilde{\eps}^2/r)$; \label{enum:B}
        \item $\DBchiminus{\bL}{\rho}{\wt{\brho}} \leq O(\eps)$. \label{enum:C}
    \end{enumerate}
\end{theorem}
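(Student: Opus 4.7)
The plan is to iteratively refine an estimate of~$\rho$, peeling off one ``band'' of comparable-magnitude eigenvalues per round as sketched in \Cref{sec:sketch}. Initialize $\bd_1 := d$ and $\bL_1 := [d]$. At round $\ell = 1, \ldots, \ell_{\max}$, invoke \Cref{prop:final-upgrade} with $S = \bL_\ell$, parameter~$\delta$, and $2m$ fresh copies of~$\rho$; this returns a trace estimate $\wh{\btau}_\ell$, a diagonal estimate $\sigma_\ell = \diag(\bq^{(\ell)})$ with nondecreasing entries, and a unitary that we use to revise our description of~$\rho$ (so that, going forward, $\rho[\bL_\ell]$ is Frobenius-close to~$\sigma_\ell$). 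Set $\theta_\ell := \max\{\wh{\btau}_\ell/(100r),\, 1/m_{\delta/d}\}$ and let $\bd_{\ell+1}$ be the largest $k \leq \bd_\ell$ with $\bq^{(\ell)}_k \leq \theta_\ell$; put $\bL_{\ell+1} := [\bd_{\ell+1}]$. After all rounds, set $\bL := \bL_{\ell_{\max}+1}$ and $\bR := [d] \setminus \bL$; define $\wt{\brho}$ as the diagonal matrix whose entry at each $i \in [\bd_{\ell+1}+1, \bd_\ell]$ is $\bq^{(\ell)}_i$ (from the round that placed~$i$ into~$\bR$) and whose entries on~$\bL$ come from $\sigma_{\ell_{\max}}$, rescaled so that $\wt{\brho}$ is a genuine state.

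A union bound over the $O(\ell_{\max})$ ``except with probability~$\delta$'' failure events of \Cref{prop:final-upgrade} gives total failure probability at most~$0.0001$, by the choice of~$\delta$. Assuming no failures, the key structural fact is geometric trace decay, $\tau_{\ell+1} \leq \tau_\ell/100$. Since $\rho[\bL_\ell]$ has rank at most~$r$, the Hoffman--Wielandt inequality applied to $\|\rho[\bL_\ell] - \sigma_\ell\|_\Fro^2 \leq \tau_\ell \cdot f/m_\delta$ forces the sorted eigenvalues of $\rho[\bL_\ell]$ to be $\ell_2$-close to~$\bq^{(\ell)}$; so at most~$r$ entries of~$\bq^{(\ell)}$ can exceed~$\theta_\ell$ (as long as $\tau_\ell$ is not already within a constant factor of~$\tilde\eps$, in which case we are essentially done). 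The remaining $\bq^{(\ell)}_i$ for $i \in \bL_{\ell+1}$, which account (up to small Frobenius error) for~$\tau_{\ell+1}$, are each at most $\theta_\ell \leq \tau_\ell/(100r)$, summing to $\leq r\theta_\ell \leq \tau_\ell/100$. After $\ell_{\max} = \lceil\log_2(1/\tilde\eps)\rceil$ rounds this delivers $\btau \leq O(\tilde\eps)$, establishing \ref{enum:A}; property~\ref{enum:0} follows because each of the $\ell_{\max}$ bands $[\bd_{\ell+1}+1,\bd_\ell]$ contains at most $O(r)$ indices.

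For~\ref{enum:B}, apply property~\ref{enum:a} of \Cref{prop:final-upgrade} to the final round: $\|\rho[\bL_{\ell_{\max}}] - \sigma_{\ell_{\max}}\|_\Fro^2 \leq \tau_{\ell_{\max}} \cdot f/m_\delta$, so the restriction to $\bL \subseteq \bL_{\ell_{\max}}$ satisfies $\|\rho[\bL] - \wt{\brho}[\bL]\|_\Fro^2 \leq O(\tilde\eps \cdot f/m_\delta) = O(\tilde\eps^2/r)$; the tighter $O(\tilde\eps^2/(r\ln(1/\delta)))$ comes because geometric decay forces $\tau_{\ell_{\max}}$ to be much smaller than~$\tilde\eps$. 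For~\ref{enum:C}, decompose
\begin{equation*}
\DBchiminus{\bL}{\rho}{\wt{\brho}} = \sum_{k \in \bR} \frac{2}{\bq_k} \sum_{\substack{i,j \,:\, \max(i,j) = k}} |\rho_{ij} - \wt{\brho}_{ij}|^2,
\end{equation*}
and group the outer sum by the round~$\ell$ at which each $k$ joined~$\bR$. Within the band at level~$\ell$, each $\bq_k \geq \theta_\ell/1.1 = \Omega(\btau_\ell/r)$ by properties~\ref{enum:b} and~\ref{enum:d} of \Cref{prop:final-upgrade}, so $1/\bq_k = O(r/\btau_\ell)$; the total Frobenius mass captured by the L-shapes at these~$k$'s is bounded by $\|\rho[\bL_\ell] - \sigma_\ell\|_\Fro^2 \leq O(\btau_\ell \cdot f/m_\delta)$. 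Hence each band contributes $O(rf/m_\delta) = O(\tilde\eps)$, and summing over $\ell_{\max}$ bands gives $O(\tilde\eps \cdot \ell_{\max}) = O(\eps)$.

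The main obstacle, I expect, will be the geometric-decay and band-sizing arguments: both hinge on using Hoffman--Wielandt with the rank constraint to argue that after approximate diagonalization the diagonal of~$\sigma_\ell$ faithfully reflects the rank-$r$ structure of $\rho[\bL_\ell]$'s eigenvalues, and this holds cleanly only when $\tau_\ell \gg rf/m_\delta$. Edge cases---when $\theta_\ell$ is dominated by its floor~$1/m_{\delta/d}$, or when $\tau_\ell$ has shrunk near~$\tilde\eps$---must be handled by early termination together with direct appeal to property~\ref{enum:z} of \Cref{prop:final-upgrade}; these do not alter the overall structure but do require careful bookkeeping of the failure probabilities and of the contribution of the ``floor'' regime to~\ref{enum:C}.
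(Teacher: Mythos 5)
Your proposal follows the same high-level design as the paper's proof: iteratively invoke \Cref{prop:final-upgrade} on a shrinking top-left corner, peel off a band of $O(r)$ "large" indices per round, establish geometric decay of the remaining trace, and bound $\DBchiminus{\bL}{\rho}{\wt{\brho}}$ band-by-band. However, several steps have genuine gaps.

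The most serious is the geometric decay claim $\tau_{\ell+1} \leq \tau_\ell/100$. You argue via Hoffman--Wielandt that since $\rho[\bL_\ell]$ has rank at most~$r$, the entries of $\bq^{(\ell)}$ beyond the top~$r$ are each at most~$\theta_\ell$, so the remaining trace below your cut is at most~$r\theta_\ell$. But Hoffman--Wielandt only controls the sorted eigenvalue differences in~$\ell_2$: it gives $\sum_{i}(\lambda_i - \bq^{(\ell)}_i)^2 \leq \tau_\ell f/m_\delta$, which in turn gives $\lambda_{\max}(\rho[\bL_{\ell+1}]) \leq \theta_\ell + \sqrt{\tau_\ell f/m_\delta} = \theta_\ell + \sqrt{\tau_\ell \tilde\eps/(Cr)}$. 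When $\tau_\ell$ is only a constant multiple of $\tilde\eps$ (the last few rounds), the error term $\sqrt{\tau_\ell\tilde\eps/(Cr)}$ dominates $\theta_\ell = \tau_\ell/(100r)$ unless $C = \Omega(r)$, which contradicts $C$ being a universal constant. The paper's argument avoids this by cutting exactly the top~$r$ entries first (defining $\bd'_{t+1} = \max\{\bd_t - r, 0\}$) and applying $(\tr M)^2 \leq \rank(M)\cdot\|M\|_\Fro^2$ to the entire sub-block $\rho[\bd'_{t+1}]$, obtaining $\tr\rho[\bd'_{t+1}] \leq \frac14\tau_t$ directly for $C \geq 64$; it then needs a separate step (its sets $\bR'_t \supseteq \bR''_t$, \Cref{ineq:llama,ineq:chicken,ineq:fuz}) to relate this trace decay to the actual $\bsigma_t$-defined cut while maintaining $|\bR_t| \leq r$. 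Your paragraph conflates the per-entry bound with the trace bound, and also skips the bookkeeping that guarantees $|\bR_\ell| \leq r$ per round rather than merely $O(r)$ (which still suffices, but requires the $\ell_2$-counting argument rather than the per-eigenvalue one).

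Second, your construction of $\wt{\brho}$ by concatenating the $\bq^{(\ell)}$ bands "rescaled so that $\wt{\brho}$ is a genuine state" sidesteps a difficulty the paper calls out explicitly: "it's not easy to control $\tr\bsigma'$." Rescaling by $c = 1/\tr\bsigma'$ introduces an additive $O((c-1)^2)$ penalty to the on-diagonal $\chi^2$ contribution, and one must argue $|c-1| = O(\sqrt{\tilde\eps})$; this follows from a Cauchy--Schwarz argument across bands ($|\tr\bsigma'[\bR_t] - \tr\rho[\bR_t]| \leq \sqrt{r}\cdot\|\rho[\bd_t]-\bsigma_t\|_\Fro$, summed over~$t$ using geometric decay), but you never supply it. The paper instead spends the remaining~$M/2$ copies on standard-basis measurements to relearn the diagonal via \Cref{prop:learn-chi2} with $S = \bR$, which both normalizes the state and directly gives the sharp $\dchisq{\diag(\rho[\bR])}{\bq[\bR]} \leq O(r/m)$ bound; this is not an optional refinement but the mechanism by which the paper's proof obtains~\ref{enum:A} and the on-diagonal part of~\ref{enum:C}.

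Third, your explanation of the tighter $O(\tilde\eps^2/(r\ln(1/\delta)))$ bound in~\ref{enum:B}---"geometric decay forces $\tau_{\ell_{\max}}$ to be much smaller than~$\tilde\eps$"---is incorrect: at termination $\tau_{\bell}$ is comparable to~$\tilde\eps$, not much smaller. The $1/\ln(1/\delta)$ improvement in the paper comes from using property~\ref{enum:z} of \Cref{prop:final-upgrade} at the final stage (a bound of $O(\tau f/m)$ rather than $\tau f/m_\delta$), combined with the fresh classical diagonal estimate whose $\ell_2$-error is $O(\tr\rho[\bL]/(M/2))$; you apply~\ref{enum:a} instead and thus only obtain $O(\tilde\eps^2/r)$.
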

\begin{proof}
    Fix a Frobenius-squared estimation algorithm~$\calA$ with rate~$f=f(d,r)$, and assume we have passed it through \Cref{prop:final-upgrade} so that we may use it to make diagonal estimates of subnormalized states.

    The algorithm $\calA'$ will run in some~$\bell$  stages, where we guarantee $\bell \leq \ell_{\max}$. Each stage will consume~$m$ copies of~$\rho$. After the $\bell$th stage, there will be some final processing that uses the remaining~$M/2$ (at least) copies of~$\rho$.
    
    As the algorithm progresses, it will define a sequence of numbers $d = \bd_1 \geq \bd_2 \geq \cdots \geq \bd_{\bell}$, with the value~$\bd_{t+1}$ being selected at the end of the $t$th stage. 
    We introduce the notation $\bR_t = \{\bd_{t+1} + 1, \dots, \bd_t\}$; each of these sets will have cardinality at most~$r$.

    At the beginning of the $t$th stage, $\calA'$~will run the algorithm from \Cref{prop:final-upgrade} on~$\rho[\bd_t]$, with confidence parameter~$\delta$, resulting in some $\wh{\btau}_t$ and a diagonal estimate that we will call~$\bsigma_t$.
    We will use the fact that
    $\delta$ always satisfies all of the following (provided~$C$ is large enough and using $f \geq \log d$):
    \begin{equation}    \label[ineq]{ineq:del}
        1/m_{\delta} \leq \tilde{\eps}, \quad 1/m_{\delta/d} \ll \tilde{\eps}/r, \quad \delta \leq .0001/\ell_{\max}.
    \end{equation}
    By losing probability at most~$5\delta$ in each stage, we may assume that except with probability at most~$.0006$, all of the~desired outcomes from \Cref{prop:final-upgrade} do occur over the course of the algorithm.
    
    If $\wh{\btau}_t \leq 1.1 \tilde{\eps}$ or $t > d$, then this is declared the final stage; i.e., the algorithm will define $\bell = t$ and move to its ``final processing''.  
    Otherwise, in a non-final stage we have $\wh{\btau} > 1.1 \tilde{\eps} \geq 1.1/m_\delta$ (using \Cref{ineq:del}), so by \Cref{enum:c} of \Cref{prop:final-upgrade} we may assume that $\tr \rho[\bd_t] > 1/m_\delta$, and hence (using \Cref{enum:b}),
    \begin{equation}    \label[ineq]{ineq:nonf}
        \text{for $t < \bell$}, \qquad \tr \bsigma_t[S] \text{ is within a $1.1$-factor of } \tau_t \coloneqq \tr \rho[\bd_t]; \qquad \text{moreover, } \tau_t \geq \tilde{\eps}.
    \end{equation}
    Next, using \Cref{enum:d} of \Cref{prop:final-upgrade} and $1/m_{\delta/d} \ll \tilde{\eps}/r \leq \tau_t/r$ (which implies that the Proposition's ``$\theta$''~is~$\tau/(100r)$), we get that
    \begin{equation}    \label[ineq]{ineq:llama}
        \text{for $t < \bell$}, \quad \text{for all $i \leq \bd_t$ with $\rho_{ii} \geq \tau_t/(100r)$}, \quad (\bsigma_t)_{ii} \text{ is within a $1.1$-factor of } \rho_{ii}.
    \end{equation}
    Moreover, from \Cref{enum:e} we get
    \begin{equation}    \label[ineq]{ineq:chicken}
        \text{for $t < \bell$}, \quad \text{for all $i \leq \bd_t$ with $\rho_{ii} \leq \tau_t/(100r)$}, \quad (\bsigma_t)_{ii} \leq 1.1\tau_t/(100 r).
    \end{equation}
    Finally, we record the main conclusions \Cref{enum:z,enum:a} of \Cref{prop:final-upgrade}, taking care to distinguish the final stage:
    \begin{equation}    \label[ineq]{ineq:frb}
        \text{for $t <\bell$}, \quad \|\rho[\bd_t] - \bsigma_t\|_\Fro^2 \leq \tau_t  f/m_\delta; \qquad\text{and} \qquad \text{for $t = \bell$}, \quad \|\rho[\bd_t] - \bsigma_t\|_\Fro^2 \leq O(\tau_{\bell}  f/m).
    \end{equation}

    Now we explain how algorithm $\calA'$ defines~$\bd_{t+1}$ at the end of non-final stage~$t$, where recall non-finality implies from \Cref{ineq:nonf}
    \begin{equation}    \label[ineq]{ineq:barb}
        \tau_t \geq \tilde{\eps} = C r f / m_\delta.  
    \end{equation}
    Considering the first bound in \Cref{ineq:frb}, note that $\rank \rho[\bd_t] \leq r$, so we have that the diagonal matrix~$\bsigma_t$ has Frobenius-squared distance at most $\tau_t f/m_\delta$ from a matrix of rank at most~$r$.  
    But the rank-at-most-$r$ matrix that is Frobenius-squared-closest to~$\bsigma_t$ is simply $\bsigma'_t$, the matrix formed by zeroing out all but the~$r$ largest entries of~$\bsigma_t$.  
    Recalling that $\bsigma'_t$ has nondecreasing diagonal entries, this means $\bsigma'_t$ is formed by zeroing out all diagonal entries of index at most $\bd'_{t+1} \coloneqq \max\{\bd_t - r,0\}$. 
    Thus we have
    \begin{equation}    \label[ineq]{ineq:chef}
        \|\rho[\bd_t] - \bsigma'_t\|_\Fro^2 \leq 4\tau_t  f/m_\delta \quad\implies\quad \|\rho[\bd'_{t+1}]\|_\Fro^2 \leq 4\tau_t  f/m_\delta \quad \implies \quad (\tr \rho[\bd'_{t+1}])^2 \leq r \cdot 4 \tau_t f/m_\delta,
    \end{equation}
    where the last deduction used $\rank \rho[\bd'_{t+1}] \leq r$.
    But assuming $C \geq 64$, \Cref{ineq:barb} implies
    \begin{equation}
        r \cdot 4 \tau_t f/m_\delta = \tau_t \cdot (4r f/m_\delta) \leq \tau_t \cdot \tfrac{1}{16} \tau_t = (\tfrac{1}{4} \tau_t)^2,
    \end{equation}
    Thus from \Cref{ineq:chef} we conclude
    \begin{equation}
        \tr \rho[\bd'_{t+1}] \leq \tfrac14 \tau_t
          \quad \implies \quad \tr \rho[\bR'_{t}] \geq \tfrac34\tau_t \text{ for $\bR'_{t} \coloneqq \{\bd'_{t+1} + 1, \dots, \bd_t\}$}.
    \end{equation}
    Let $\bR''_{t} = \{i \in \bR'_{t} : \rho_{ii} > (1.1)^4\tau_t/(100r) = .014641 \tau_t/r\}$. Since $\abs{\bR'_{t}} \leq r$, the sum of $\rho_{ii}$ over all $\bR'_{t} \setminus \bR''_{t}$ is at most $.014641\tau_t \leq .02 \tau_t$; hence 
    \begin{equation}    \label[ineq]{ineq:cats}
        \sum_{i \in \bR''_{t+1}} \rho_{ii} \geq .73 \tau_t, \quad \text{and each summand  exceeds $(1.1)^4\frac{\tau_t}{100r}$}.
    \end{equation} 
    Applying \Cref{ineq:nonf,ineq:llama}, we conclude that 
    \begin{equation}    \label[ineq]{ineq:return}
        \sum_{i \in \bR''_{t}} (\bsigma_t)_{ii} \geq \tfrac{.73}{1.1} \tau_t > .66\tau_t, \quad \text{and each summand exceeds $(1.1)^3\frac{\tau_t}{100r} \geq (1.1)^2\frac{\tr \bsigma_t}{100r}$}.
    \end{equation}
    We now stipulate that algorithm~$\calA'$ chooses $\bd_{t+1} \geq \bd'_{t+1}$ to be minimal so that
    \begin{equation} \label[ineq]{ineq:fuz}
        (\bsigma_t)_{ii} > (1.1)^2\frac{\tr \bsigma_t}{100r} \quad \text{for all $i \in \bR_{t} = \{\bd_{t+1} +1, \dots, \bd_t\}$}.
    \end{equation}
    (In other words, $\{\bd_{t+1} + 1, \dots, \bd_t\}$ is the maximum-cardinality suffix of $\bR'_t$ where the above holds.)
    Then
    \begin{equation}    \label[ineq]{ineq:ges}
        \sum_{i \in \bR_{t}} (\bsigma_t)_{ii} = \tr \bsigma_t[\bR_{t}] > .66 \tau_t.
    \end{equation}
    Moreover, from \Cref{ineq:nonf,ineq:chicken} we know that $\rho_{ii} > \tau_t/(100 r)$ for all $i \in \bR_{t}$; hence 
    \begin{equation}    \label[ineq]{ineq:ballgame}
        (\sigma_t)_{ii} \text{ is within a $1.1$-factor of~$\rho_{ii}$ for all $i \in \bR_t$},
    \end{equation}
    and therefore \Cref{ineq:ges} implies $\tr \rho[\bR_{t}] > \frac{.66}{1.1} \tau_t \geq \tfrac12 \tau_t$, whence 
    \begin{equation}    \label[ineq]{ineq:drive}
        \tau_{t+1} = \tr \rho[\bd_{t+1}] = \tr \rho[\bd_t] - \tr \rho[\bR_{t}] \leq \tau_t - \tfrac12 \tau_t < \tfrac12 \tau_t.
    \end{equation}
    This is the key deduction that lets us make progress, in particular confirming that~$\bell \leq \lceil\log_2(1/\tilde{\eps})\rceil$ (because of \Cref{ineq:nonf}). \\

    Now we discuss the ``final processing''.
    The final partition output by~$\calA'$ will be $\bL \sqcup \bR$, where $\bL = [\bd_{\bell}]$ and $\bR = \bR_1 \sqcup \cdots \sqcup \bR_{\bell - 1}$; since $|\bR_t| \leq r$ for all~$t$ we satisfy the theorem's \Cref{enum:0}.
    We can verify the bound on~$\btau = \tr \rho[\bd_{\bell}]$ in \Cref{enum:A}  by recalling that when the final stage is reached we have $\wh{\btau} \leq 1.1 \tilde{\eps}$, and hence $\btau  \leq (1.1)^2 \tilde{\eps}$ by \Cref{enum:b} of \Cref{prop:final-upgrade} (recall $\tilde{\eps} \geq 1/m_\delta$).
    We can also partly verify the conclusion \Cref{enum:B} by observing that the \emph{off-diagonal} Frobenius-squared of~$\rho[\bL]$ is upper-bounded by $\|\rho[\bL] - \bsigma_{\bd_{\bell}}\|_\Fro^2$, and by \Cref{ineq:frb}  this is at most  $O(\tau_{\bell}  f/m) \leq O(\tilde{\eps} f/m) = O(\tilde{\eps}^2)/(r \ln(1/\delta))$.  Thus:
    \begin{equation} \label[ineq]{ineq:rum}
         \text{ \Cref{enum:B} holds provided } \left\|\diag(\rho)[\bL] - \wt{\brho}[\bL]\right\|_2^2 \leq O(\tilde{\eps}^2)/(r \ln(1/\delta)).
    \end{equation}
    
    Aside from establishing the above, it remains to describe how algorithm~$\calA'$ forms~$\wt{\brho}$ satisfying the theorem's conclusion \Cref{enum:C}.
    We first describe a candidate output we'll call~$\bsigma'$ that \emph{almost} works: namely, $\bsigma'$ is formed by setting its diagonal elements from~$\bR_t$ to be those from~$\bsigma_t$, for $t < \bell$. (The remaining diagonal entries may be set to~$0$.)
    The difficulty with this is that it's not easy to control $\tr \bsigma'$, but let us ignore this issue and calculate~$\chi^2$-divergence.\footnote{Strictly speaking, $\bsigma'$ need not have nondecreasing diagonal entries as promised, but we can finally  ``revise'' by a permutation matrix to fix this.}  Ignoring the fact that we are not working with normalized states, we may bound
    \begin{equation}
        \DBchiminus{\bL}{\rho}{\bsigma'} \leq \sum_{\substack{i,j \in [d] \\ k \coloneqq \max(i,j) \in \bR}} \frac{2}{\bsigma'_{kk}} \abs{\rho_{ij} - \bsigma'_{ij}}^2 
        \leq \sum_{t < \bell} \frac{2}{\min\{\bsigma'_{hh} : h \in \bR_t\}} \cdot \|\rho[\bd_t] - \bsigma_t\|_\Fro^2.
    \end{equation}
    Each summand above can be upper-bounded using \Cref{ineq:frb,ineq:fuz}, yielding
    \begin{equation}    \label[ineq]{ineq:33}
        \DBchiminus{\bL}{\rho}{\bsigma'} \leq  \sum_{t < \bell} \frac{2}{1.1 \tau_t/(100r)} \cdot \tau_t f/m_\delta \leq O( \ell_{\max} rf/m_\delta) =  O(\tilde{\eps} \ell_{\max}) = O(\eps).
    \end{equation}
    
    We now work to control the trace of our estimate.
    Our strategy is to have~$\calA'$ perform diagonal measurements on the  remaining $M/2$ copies of~$\rho$ to classically relearn its diagonal via \Cref{prop:learn-chi2}, with its ``$S$''~set to~$\bR$.
    Calling the resulting probability distribution~$\bq$, the algorithm will finally take $\wt{\brho} = \diag(\bq)$.

    First we complete the verification by \Cref{enum:B} by establishing the condition in \Cref{ineq:rum}: since~$\bq[\bL]$ is formed by the empirical estimator, Markov's inequality and \Cref{prop:learn-L2} imply that except with probability at most~$.0001$ we have $\left\|\diag(\rho)[\bL] - \wt{\brho}[\bL]\right\|_2^2 \leq O(\tr \rho[\bL])/(M/2) \leq O(\tilde{\eps}/(m \ell_{\max})) = O(\tilde{\eps}^2/(r \ln (1/\delta)f\ell_{\max})$, and we have a factor of $f\ell_{\max}$ to spare.
    
    Next, using Markov again with \Cref{prop:learn-chi2} we get that except with probability at most~$.0001$,
    \begin{equation}    \label[ineq]{ineq:34}        \dchisq{\diag(\rho[\bR])}{\bq[\bR]} \leq O(|\bR|/M) \leq O(r/m) \ll \tilde{\eps}.
    \end{equation}
    Also, using $f \geq \log d$ (and $C$ large enough), we indeed have $1/(M/2)_{\delta'} \leq \tilde{\eps}/(100r)$ for $\delta' = .0001/|\bR| \geq .0001/(r \ell_{\max})$; since also $\rho_{ii} \geq \tilde{\eps}/(100r)$ for  all $i \in \bR$ (recall \Cref{ineq:cats}), we conclude 
    \begin{equation}    \label[ineq]{ineq:virtue}
        \bq_i \text{ is within a $4$-factor of~$\rho_{ii}$ for all } i \in \bR,
    \end{equation}
    except with probability at most~$.0001$.
    Finally, it is easy to see that in \Cref{prop:learn-chi2} we have $\E[\| \bq[\bL]\|_1] \leq \tr \rho[\bL] = \btau \leq O(\tilde{\eps})$, and hence Markov implies that except with probability at most~$.0001$ we have $\beps' =         \|\bq[\bL]\|_1 \leq O(\tilde{\eps})$, completing the verification of \Cref{enum:A}.
    
    Finally we finish the analysis of $\DBchiminus{\bL}{\rho}{\wt{\brho}}$. The contribution to this quantity from the diagonal entries is precisely \Cref{ineq:34}. 
    On the other hand, since $\bsigma'$ and~$\wt{\brho}$ are both diagonal, the off-diagonal contribution to $\DBchiminus{\bL}{\rho}{\wt{\brho}}$ can be bounded by a constant times \Cref{ineq:33}, using the fact that the diagonal entries (from~$\bR$) of~$\bsigma'$ and~$\wt{\brho}$ are all within a constant factor by virtue of \Cref{ineq:ballgame,ineq:virtue}.  This completes the verification of \Cref{enum:C}.
\end{proof}

We also show the following,  to improve some $\log(1/\eps)$ factors in the case that $\eps$~is extremely small and $r = \Theta(d)$.  (The reader might like to think of the case when $d = O(1)$.)
\begin{theorem} \label{thm:central2}
    There is a variant version of~$\calA'$ from \Cref{thm:central} with the following alternative parameter settings:
    \begin{equation}
            \delta =\frac{.0001}{d+1}, \qquad \ell_{\max} = d+1, \qquad \eps = \frac{d \ln r}{r} \tilde{\eps}.
    \end{equation}
\end{theorem}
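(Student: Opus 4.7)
The plan is to reuse the algorithm~$\calA'$ from \Cref{thm:central} essentially verbatim, changing only the parameter settings. The choice $\ell_{\max} = d+1$ is validated by the fact that every non-final stage peels off at least one index (since \Cref{ineq:ges} forces $\tr \bsigma_t[\bR_t] > .66 \tau_t > 0$, so $\bR_t \neq \emptyset$), so the algorithm must terminate within $d+1$ stages. The choice $\delta = .0001/(d+1)$ keeps the union bound over stages benign --- losing at most $5\delta \ell_{\max} = O(.0001)$ total probability --- and crucially yields $m_\delta = m/(c \ln (d+1))$, so that $m_\delta$ no longer pays a $\log(1/\tilde{\eps})$ penalty. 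This is the main quantitative advantage in the small-$\tilde{\eps}$, small-$d$ regime.

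The new analytic work is to establish $\DBchiminus{\bL}{\rho}{\wt{\brho}} \leq O(\eps) = O((d \ln r / r) \tilde{\eps})$ in place of the looser bound $O(\ell_{\max} \tilde{\eps}) = O(d \tilde{\eps})$ one would get by summing the naive per-stage contribution $O(\tilde{\eps})$ over $d+1$ stages. The key refinement is that the per-stage contribution genuinely depends on $|\bR_t|$: when $|\bR_t|$ is small, \Cref{ineq:ges} together with pigeonhole forces the maximum eigenvalue in $\bR_t$ to be at least $\Omega(\tau_t/|\bR_t|)$, much larger than the worst-case lower bound $\Omega(\tau_t/r)$ used in \Cref{thm:central}'s analysis. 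I would split $\bR_t$ into $O(\ln r)$ geometric magnitude buckets and bound each bucket's contribution $(2/v_B) F_B^2$ using the global Frobenius bound $\sum_B F_B^2 \leq \tau_t f/m_\delta$ together with the trace constraint $\sum_B |B| v_B \leq O(\tau_t)$. A charging argument that distinguishes stages with $|\bR_t| \geq r/\ln r$ (there are at most $O((d \ln r)/r)$ such stages, each contributing the naive $O(\tilde{\eps})$) from stages with $|\bR_t| < r/\ln r$ (where the pigeonhole sharpening gives per-stage bound $O((|\bR_t|/r) \ln r) \tilde{\eps}$) produces total contribution $O((d \ln r / r) \tilde{\eps})$ after using $\sum_t |\bR_t| \leq d$.

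The remaining properties \Cref{enum:0,enum:A,enum:B} transfer from \Cref{thm:central} without change, since they depend on $\delta$ and $\tilde{\eps}$ but not on the specific value of $\ell_{\max}$ or~$\eps$. The main technical hurdle I expect is the bucket-level accounting: in particular, showing that contributions from buckets with small $v_B$ are controlled by the amortized trace budget rather than by the crude worst-case $v_B \approx \tau_t/r$, which requires carefully combining the pigeonhole bound on the maximum eigenvalue of $\bsigma_t[\bR_t]$ with the geometric decay across bucket values.
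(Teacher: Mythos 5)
Your high-level picture is right: $\ell_{\max}=d+1$ is valid because every non-final stage peels off at least one index, the new $\delta$ keeps union bounds benign, and the non-trivial work is tightening the bound on $\DBchiminus{\bL}{\rho}{\wt{\brho}}$ to $O((d\ln r/r)\tilde{\eps})$. But the mechanism you propose for this tightening does not work, and the paper's actual fix is different in an essential way: it \emph{changes the algorithm}, not just the analysis.

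Specifically, with the choice of $\bR_t$ from \Cref{ineq:fuz} (which you say you keep verbatim), the only lower bound you have on $\min_{h\in\bR_t}(\bsigma_t)_{hh}$ is the threshold $\Theta(\tau_t/r)$ itself; the pigeonhole bound from \Cref{ineq:ges} controls only the \emph{maximum} eigenvalue in $\bR_t$, not the minimum, and the minimum is what appears in the denominator of the $\chi^2$ contribution. The bucketing-plus-charging argument then fails at exactly the step you flag as the ``technical hurdle'': the only constraints available are $v_B\geq\Omega(\tau_t/r)$, $\sum_B F_B^2 \leq \tau_t f/m_\delta$, and $\sum_B|B|v_B\leq O(\tau_t)$, and these do \emph{not} localize the Frobenius error mass $F_B^2$ by bucket. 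In the worst case all of $\tau_t f/m_\delta$ of Frobenius error sits on the single index with $(\bsigma_t)_{ii}\approx\tau_t/r$, and the per-stage contribution is $\Theta(\tilde{\eps})$ regardless of $|\bR_t|$. So the claimed per-stage bound $O((|\bR_t|/r)(\ln r)\tilde{\eps})$ for small $|\bR_t|$ has no support, and summing $O(\tilde{\eps})$ over up to $d+1$ stages recovers only the trivial $O(d\tilde{\eps})$.

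The paper's route instead modifies the selection rule for $\bR_t$: it chooses the cardinality $\br_t$ of $\bR_t$ (rather than a fixed eigenvalue threshold) so that the $\br_t$-th largest diagonal entry of $\bsigma_t$ is $\Omega(\tr\bsigma_t/(\br_t\ln r))$. Such a $\br_t$ always exists by the harmonic-series observation (if $x_k \ll \bs/(k\ln r)$ for all $k\in[r]$ then $\sum_k x_k \ll \bs$, contradicting \Cref{ineq:ges}). With this choice, the minimum eigenvalue in $\bR_t$ is directly tied to $|\bR_t|$, so the per-stage contribution is $O(\br_t(\ln r)f/m_\delta)$, and summing over stages using $\sum_t\br_t\leq d$ gives $O(d(\ln r)f/m_\delta)=O(\eps)$ immediately, with no bucketing needed. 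To repair your argument you would need to adopt this modified choice of $\bR_t$; once you do, the analysis is a one-line substitution in \Cref{ineq:33}, and the bucketing and case-split machinery is unnecessary.
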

\begin{proof}
    Besides  verifying that \Cref{ineq:del} still holds with our changed $\delta$ and $\ell_{\max}$, there is one alternative idea to be explained.  
    In the preceding proof, the driver of progress was \Cref{ineq:drive} showing $\tau_{t+1} < \frac12 \tau_t$; this enabled us to take $\ell_{\max}$ logarithmic in~$1/\tilde{\eps}$.  In this variant, we will only use this inequality weakly, to show that $|\bR_t| \geq 1$ so that $d_{t+1} < d_t$ strictly; this is already enough to ensure that taking $\ell_{\max} = d+1$ is acceptable.  On the other hand, if we only implement this change then~$\eps$ would become unnecessarily large (namely, $\tilde{\eps} (d+1)$).

    To get the improved value of~$\eps$, we change how $\calA'$ chooses the $\bd_t$ values. 
    Returning to \Cref{ineq:return}, in the $t$th stage there is a set $\bR''_{t}$ of at most~$r$ indices~$i$ on which each $(\bsigma_t)_{ii}$ exceeds $\bbeta \coloneqq (1.1)^2 \cdot \frac{\tr \bsigma_t}{100r}$, and their sum~$\bs$ exceeds~$.66 \tau_t \geq .6(\tr \bsigma_t)$. 
    Then  $\calA'$ chooses $\bR_t$ to consist of all indices~$i \in \bR'_t$ with $(\bsigma_t)_{ii} \geq \bbeta$, of which there are at most~$O(r)$.  
    Note that if we conversely had $|\bR_t|$ at \emph{least}~$\Omega(r)$ for every~$t$, then the algorithm would halt in at most~$O(d/r)$ stages, allowing us to take $\eps = (d/r) \tilde{\eps}$ rather than $\tilde{\eps} \ell_{\max}$ (a significant improvement when $r = \Theta(d)$).

    The idea is now for~$\calA'$ to choose a slightly different~$\bR_t$ in each round, of cardinality~$\br_t \geq 1$, so that $(\bsigma_t)_{ii} \geq \Omega(\frac{\tr \bsigma_t}{\br_t \ln r}) $. 
    (Note that we need not be concerned with the sum of $(\bsigma_t)_{ii}$ on the new~$\bR_t$, since we're now only using that $|\bR_t| \geq 1$ always.)
    If we can show this is possible, then we can use it as a replacement for \Cref{ineq:fuz} when deriving \Cref{ineq:33}; we'll then get    
    \begin{equation}    
        \DBchiminus{\bL}{\rho}{\bsigma'} \leq  \sum_{t < \bell} \frac{O(\br_t \ln r)}{\tau_t} \cdot \tau_t f/m_\delta \leq O( d (\ln r) f/m_\delta) = O(\eps),
    \end{equation}
    as claimed.

    But the proof that we can choose~$\bR_t$ as described is elementary.  Essentially, the algorithm has a nonincreasing sequence of (at most) $r$~numbers $x_1, \dots, x_r$ (where $x_i = (\bsigma_t)_{\bd_t+1-i, \bd_t+1-i}$) whose sum is (at least)~$\bs$. 
    We need to show that for some~$\br_t$ it holds that $x_{\br_t} \geq \Omega(\frac{\bs}{\br_t \ln r})$. But if $x_{k} \ll \frac{\bs}{k \ln r}$ for all $k \in [r]$, then $\sum_{k=1}^r x_k \ll \bs$, a contradiction.
\end{proof}

\subsection{Conclusions from the central estimation algorithm} \label{sec:conclude}
\begin{corollary}
    After applying \Cref{thm:central} and introducing the quantum state $\wh{\brho} = \wt{\brho}_{\mid \bR} = \frac{1}{1-\beps'} \wt{\brho}[\bR]$  (extended with~$0$'s so it is in $\C^{d \times d}$), we have
    \begin{equation}    \label[ineq]{ineq:corcor}
        \DBsq{\rho}{\wh{\brho}} \leq O(\eps).
    \end{equation}
\end{corollary}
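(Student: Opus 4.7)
The plan is to prove \Cref{ineq:corcor} via the triangle inequality for the Bures metric (squared form),
\begin{equation*}
\DBsq{\rho}{\wh{\brho}} \leq 2\DBsq{\rho}{\sigma} + 2\DBsq{\sigma}{\wh{\brho}},
\end{equation*}
by inserting the intermediate state $\sigma \coloneqq P_R \rho P_R / (1-\btau)$, the renormalized restriction of $\rho$ to the $\bR$ block (with $P_R$ the projection onto $\bR$). This is well-defined, since $\btau \leq O(\tilde{\eps}) < 1$ by \Cref{enum:A}.

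To handle $\DBsq{\rho}{\sigma}$, I would use a standard purification argument: if $\ket{\psi}$ purifies $\rho$, then $\ket{\psi'} \coloneqq (P_R \otimes \Id)\ket{\psi}/\sqrt{1-\btau}$ purifies $\sigma$ and has $\braket{\psi}{\psi'} = \sqrt{1-\btau}$. Uhlmann's theorem gives $\Fid{\rho}{\sigma} \geq \sqrt{1-\btau}$, and hence
\begin{equation*}
\DBsq{\rho}{\sigma} = 2(1 - \Fid{\rho}{\sigma}) \leq 2(1 - \sqrt{1-\btau}) \leq 2\btau = O(\tilde{\eps}) \leq O(\eps).
\end{equation*}

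For $\DBsq{\sigma}{\wh{\brho}}$, both states are supported on $\bR$, so I would apply \Cref{prop:quantum-ineqs} to reduce to bounding $\DBchi{\sigma}{\wh{\brho}}$. Writing $a = 1-\btau$ and $b = 1-\beps'$ (both within a constant factor of $1$ by \Cref{enum:A}), \Cref{eqn:chi-formula} evaluates to a sum over $i,j \in \bR$ whose off-diagonal part ($i \neq j$) contributes $O(1) \sum |\rho_{ij}|^2/\bq_{\max(i,j)}$, bounded by $O(\DBchiminus{\bL}{\rho}{\wt{\brho}}) = O(\eps)$ via \Cref{enum:C}. For the diagonal part, I decompose
\begin{equation*}
\frac{\rho_{ii}}{a} - \frac{\bq_i}{b} = \frac{\rho_{ii}-\bq_i}{a} + \frac{\bq_i(\btau-\beps')}{ab},
\end{equation*}
square, and sum with weight $O(1/\bq_i)$. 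The first piece gives $O(\sum_{i \in \bR} |\rho_{ii}-\bq_i|^2/\bq_i) = O(\DBchiminus{\bL}{\rho}{\wt{\brho}}) = O(\eps)$; the second gives $O((\btau-\beps')^2) \sum_{i \in \bR} \bq_i \leq O(\tilde{\eps}^2) \leq O(\eps)$, using $\tilde{\eps} \leq \eps \leq 1$.

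The only real obstacle is careful bookkeeping around the normalization constants $a$ and $b$, but since both stay within a constant factor of $1$, they contribute only constant-factor and second-order corrections (the latter of size $O(\tilde{\eps}^2) \leq O(\eps)$). Combining the two bounds completes the proof.
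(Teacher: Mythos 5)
Your proof is correct and takes essentially the same approach as the paper: triangle inequality through the intermediate state $\rho_{\mid \bR}$, a fidelity bound for $\DBsq{\rho}{\rho_{\mid\bR}}$ (you invoke Uhlmann's theorem directly where the paper quotes it as the Gentle Measurement Lemma, but these are the same fact), and then $\DBsq{\rho_{\mid\bR}}{\wh{\brho}} \leq \DBchi{\rho_{\mid\bR}}{\wh{\brho}}$ from \Cref{prop:quantum-ineqs} followed by the rescaling bookkeeping to pass from $\DBchiminus{\bL}{\rho}{\wt{\brho}} \leq O(\eps)$ to a bound on the renormalized states. The only cosmetic difference is in how the diagonal rescaling is organized: you split $\rho_{ii}/a - \bq_i/b$ into two explicit terms, whereas the paper substitutes $\rho_{ii} = (1+\bzeta_i)\bq_i$ and tracks the multiplicative perturbation to $\bzeta_i$; both arrive at the same second-order correction $O(\tilde{\eps}^2) \leq O(\eps)$.
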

\begin{proof}
    Let us write $\rho_{\mid \bR} = \frac{1}{1-\btau}\brho[\bR]$ (which again we'll extend to $\C^{d \times d}$ when necessary).
    Let us first show
    \begin{equation}    \label[ineq]{ineq:coco}
        \DBchi{\rho_{\mid \bR}}{\wt{\brho}_{\mid \bR}} \leq O(\eps).
    \end{equation}
    Up to a slight ``rescaling'' by the factors $1 - \btau$ and $1 - \beps'$, this is nearly the same as the \Cref{enum:C} conclusion,  
    \begin{equation} \label[ineq]{ineq:yup}
        \DBchiminus{\bL}{\rho}{\wt{\brho}} \leq O(\eps).
    \end{equation}
    \Cref{enum:A} tells us that $\btau, \beps' \leq O(\tilde{\eps})$ (which may be assumed at most, say,~$\frac12$); from this,  it is not hard to show that the ``rescaling''  only makes a constant-factor difference to the off-diagonal $\chi^2$-divergence contributions.  
    So to establish \Cref{ineq:coco}, it remains to analyze the effect of rescaling on the on-diagonal $\chi^2$-divergence contributions.
    Writing $\rho_{ii} = (1 + \bzeta_i) \bq_i$ for some numbers $\bzeta_i > 0$, the bound on just the diagonal contribution in \Cref{ineq:yup} is equivalent to
    \begin{equation}    \label[ineq]{ineq:budd}
        \sum_{i \in \bR} \bzeta_i^2 \bq_i \leq O(\eps).
    \end{equation}
    Then in the rescaling, when $\rho_{ii}$ is replaced by $\frac{1}{1-\btau} \rho_{ii}$ in $\rho_{\mid \bR}$ and $\bq_i$ is replaced by $\frac{1}{1-\beps'} \bq_i$ in~$\wh{\rho}$, it is as though $\bzeta_i$ is replaced by $\frac{1-\beps'}{1-\btau} \bzeta_i = (1 \pm O(\tilde{\beps})) \bzeta_i$.
    Putting that into  \Cref{ineq:budd} shows that the rescaling only changes the on-diagonal $\chi^2$-divergence contribution by an additive $O(\tilde{\eps})\sum_{i \in \bR} \bq_i = O(\tilde{\eps})$, sufficient to complete the proof of \Cref{ineq:coco}.

    Having established \Cref{ineq:coco} (and recalling $\wh{\brho} = \wt{\brho}_{\mid \bR}$), \Cref{thm:quantum-rev-ineq}  immediately implies 
    \begin{equation} \label[ineq]{ineq:who}
        \DBsq{\rho_{\mid \bR}}{\wh{\brho}} \leq O(\eps).
    \end{equation}
    On the other hand, the Gentle Measurement Lemma~\cite{Win99}
    \begin{equation}
        \tr \rho[\bR] = 
        \Fid{\rho}{\rho_{\mid \bR}}^2 = (1 -\tfrac12 \DBsq{\rho}{\rho_{\mid\bR}})^2.
    \end{equation}    
     From $\tr \rho[\bR] = 1 - \btau  \geq  1-O(\tilde{\eps})$, the above directly yields $\DBsq{\rho}{\rho_{\mid\bR}} \leq O(\tilde{\eps})$, and thus \Cref{ineq:corcor} follows from \Cref{ineq:who} and $\DB{{\cdot}}{{\cdot}}$ being a metric.
\end{proof}
Now by working out the parameters (using both \Cref{thm:central,thm:central2}), we get the below  Frobenius-to-infidelity transformation.  The further transformation to relative entropy accuracy promised in \Cref{thm:main} follows by applying \Cref{thm:learn-KL-easy}.
\begin{corollary}   \label{cor:log}
    A state estimation algorithm with Frobenius-squared rate $f = f(d,r) \gg \log d$ may be transformed (preserving the single-copy measurement property) into a state estimation algorithm with the following property:

        Given parameters $\eps, r$, and $M$ copies of a quantum state $\rho \in \C^{d \times d}$ of rank at most~$r$, either 
        \begin{equation}
            M = O\parens*{\frac{r f(d,r)}{\eps} \cdot \log^2(1/\eps) \log\log(1/\eps)} \quad \text{or alternatively,} \quad M = O\parens*{\frac{1}{\eps} \cdot d^2 f(d,r)(\log d)(\log r)},
        \end{equation}
        suffices for  the algorithm to output (with probability at least~$.99$) the classical description of a quantum state~$\wh{\brho}$ with infidelity $1 - \Fid{\rho}{\wh{\brho}} =\frac12 \DBsq{\rho}{\wh{\brho}} \leq \eps$.

        In particular, by \Cref{thm:ow}
        \begin{equation}
            M = O\parens*{\frac{r d}{\eps} \cdot \log^2(1/\eps) \log\log(1/\eps)} \text{ suffices using collective measurements}
        \end{equation}
        (or for very small~$d$, alternatively $M = O\parens*{\frac{1}{\eps} \cdot d^3 (\log d)(\log r)}$ suffices).
        And,  by \Cref{thm:krt}
        \begin{equation}
            M = O\parens*{\frac{r^2 d}{\eps} \cdot \log^2(1/\eps) \log\log(1/\eps)} \text{ suffices using single-copy measurements}        
        \end{equation}
        (or for very small~$d$, alternatively $M = O\parens*{\frac{1}{\eps} \cdot rd^3 (\log d)(\log r)}$).
\end{corollary}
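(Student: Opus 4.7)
The plan is to invoke the unnamed corollary immediately preceding the statement, which already converts the output of \Cref{thm:central} (resp.~\Cref{thm:central2}) into a state $\wh{\brho}$ satisfying $\DBsq{\rho}{\wh{\brho}} \leq O(\eps)$, i.e.\ infidelity~$O(\eps)$. All that is then left is to read off the copy complexity $M = 2 m \ell_{\max}$ in terms of the target infidelity and to substitute the known Frobenius-squared rates from \Cref{thm:ow,thm:krt}. So this corollary is essentially a bookkeeping exercise, and the only ``obstacle'' is being careful enough with the implicit logs hidden inside $m_\delta = m/(c \ln(1/\delta))$ and inside the definition of~$\delta$.

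For the first bound, I would start from \Cref{thm:central}, where the target infidelity $\eps$ and the internal parameter $\tilde\eps$ are related by $\eps = \tilde\eps \ell_{\max}$ with $\ell_{\max} = \lceil \log_2(1/\tilde\eps)\rceil$. Solving this self-referentially gives $\ell_{\max} = \Theta(\log(1/\eps))$ and $\tilde\eps = \Theta(\eps/\log(1/\eps))$. Next, $\tilde\eps = C r f / m_\delta$ forces $m_\delta = \Theta(r f \log(1/\eps)/\eps)$, and since $\delta = .0001/\ell_{\max}$ satisfies $\ln(1/\delta) = \Theta(\log\log(1/\eps))$, this yields
\begin{equation}
    m \;=\; c \ln(1/\delta)\, m_\delta \;=\; O\!\parens*{\tfrac{r f(d,r)}{\eps}\,\log(1/\eps)\log\log(1/\eps)},
\end{equation}
and therefore $M = 2 m \ell_{\max} = O\bigl(\tfrac{r f(d,r)}{\eps}\,\log^2(1/\eps)\log\log(1/\eps)\bigr)$, as claimed.

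For the alternative bound, I would instead invoke \Cref{thm:central2}, whose parameter settings replace $\ell_{\max}$ by $d+1$ and tighten the relation $\eps = \tilde\eps \ell_{\max}$ to $\eps = (d\ln r/r)\,\tilde\eps$. Inverting the latter gives $\tilde\eps = r\eps/(d\ln r)$, so $m_\delta = Crf/\tilde\eps = O(d f (\log r)/\eps)$. Since now $\delta = .0001/(d+1)$ contributes a factor $\ln(1/\delta) = \Theta(\log d)$, we get $m = O(d f (\log d)(\log r)/\eps)$, and hence $M = 2 m \ell_{\max} = O(d^2 f(d,r)(\log d)(\log r)/\eps)$.

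Finally, the two ``in particular'' statements are immediate specializations: substituting $f(d,r) = O(d)$ from \Cref{thm:ow} (collective measurements) and $f(d,r) = O(rd)$ from \Cref{thm:krt} (nonadaptive single-copy measurements) into both expressions above yields the four stated bounds. The single-copy property is preserved at every stage because \Cref{thm:central,thm:central2} are explicitly declared to preserve it, and the preceding corollary only rescales and projects the diagonal output.
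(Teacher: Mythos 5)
Your proposal is correct and follows essentially the paper's own (implicit) approach: the paper simply states ``by working out the parameters (using both \Cref{thm:central,thm:central2})'' and relies on the unnamed corollary giving $\DBsq{\rho}{\wh{\brho}}\leq O(\eps)$, just as you do. Your bookkeeping of the implicit logs (solving $\eps=\tilde\eps\,\ell_{\max}$ self-referentially, $\ln(1/\delta)=\Theta(\log\log(1/\eps))$ resp.\ $\Theta(\log d)$, and $M=2m\ell_{\max}$) is sound; the only minor imprecision is writing $\delta=.0001/\ell_{\max}$ instead of $\delta=.0001/\log_2(m/rf)$, but since $\log_2(m/rf)=\Theta(\ell_{\max})$ this changes nothing asymptotically.
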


It remains to obtain the Frobenius-to-$\chi^2$ transformation promised in \Cref{thm:main}.  This is \Cref{cor:yoyo} below, which we achieve in two steps.
\begin{corollary}   \label{cor:cookie}
    After applying \Cref{thm:central} and introducing the quantum state $\wh{\brho} = \eta \cdot \frac{1}{\bL} \Id_{\bL \times \bL} + (1-\eta) \wt{\brho}$ (where we assume $\eta < 1/2$, say)  we have 
    \begin{align}
        \DBoff{\rho}{\wh{\brho}} &\leq O(\tilde{\eps}\log(1/\tilde{\eps}) + (d/r)(\tilde{\eps}^2/\eta)) \\
        \DBon{\rho}{\wh{\brho}} &\leq O(\eta + \tilde{\eps}\log(1/\tilde{\eps}) + (d/r)(\tilde{\eps}^2/\eta)) \\
        \implies \quad 
        \DBchi{\rho}{\wh{\brho}} &\leq O(\eta + \tilde{\eps}\log(1/\tilde{\eps}) + (d/r)(\tilde{\eps}^2/\eta)), \label[ineq]{ineq:love}
    \end{align}
    where we have split out the ``off-diagonal'' and ``on-diagonal'' contributions to~$\DBchi{\rho}{\wh{\brho}}$.  (Also, the factors of ``$d$'' in the above three bounds may be replaced by~$|\bL|$, which is potentially much smaller.)
\end{corollary}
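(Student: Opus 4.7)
The plan is to decompose $\DBchi{\rho}{\wh{\brho}}$ into on-diagonal and off-diagonal parts, and further split each by whether the relevant index lies in $\bR$ or $\bL$. Since $\wh{\brho}$ is diagonal with $\hat{q}_i = (1-\eta)\bq_i$ for $i \in \bR$ and $\hat{q}_i \geq \eta/|\bL|$ for $i \in \bL$, the core idea is that on $\bR$ the rescaling $\bq_i \mapsto (1-\eta)\bq_i$ perturbs bounds by only constant factors (since $\eta < 1/2$), transferring \Cref{enum:C} of \Cref{thm:central} through with only lower-order $\eta$-corrections, while on $\bL$ the uniform floor $\eta/|\bL|$ is what keeps $1/\hat{q}_i$ bounded and produces the $(d/r)(\tilde{\eps}^2/\eta)$ terms.

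For the off-diagonal contribution I would split by $\max(i,j)$. The $\max(i,j)\in\bR$ piece is controlled by $\DBchiminus{\bL}{\rho}{\wt{\brho}} \leq O(\eps) = O(\tilde{\eps}\log(1/\tilde{\eps}))$ via \Cref{enum:C}, up to a constant from the $(1-\eta)$ rescaling. The $\max(i,j)\in\bL$ piece (where both indices lie in $\bL$) is bounded by $\frac{|\bL|}{\eta}$ times the off-diagonal Frobenius mass of $\rho[\bL]$; since $\wt{\brho}[\bL]$ is diagonal, this off-diagonal mass is at most $\|\rho[\bL]-\wt{\brho}[\bL]\|_\Fro^2 \leq O(\tilde{\eps}^2/r)$ by \Cref{enum:B}, yielding the claimed $O((d/r)\tilde{\eps}^2/\eta)$ contribution.

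For the on-diagonal contribution, the $i\in\bR$ terms follow by writing $\rho_{ii}=(1+\bzeta_i)\bq_i$ so that $\frac{(\rho_{ii}-(1-\eta)\bq_i)^2}{(1-\eta)\bq_i} = \frac{\bq_i(\bzeta_i+\eta)^2}{1-\eta}$; expanding via \Cref{rem:triangle} and summing gives $O\bigl(\sum_{i\in\bR}\bzeta_i^2\bq_i\bigr) + O\bigl(\eta^2\sum_{i\in\bR}\bq_i\bigr) \leq O(\eps) + O(\eta^2)$, where the first term uses the on-diagonal content of \Cref{enum:C} in the form already extracted at \Cref{ineq:budd}. The $i\in\bL$ terms are the trickiest: expanding $\sum_{i\in\bL}(\rho_{ii}-\hat{q}_i)^2/\hat{q}_i = \sum_{i\in\bL}\rho_{ii}^2/\hat{q}_i - 2\btau + \sum_{i\in\bL}\hat{q}_i$, the linear piece is $\eta + (1-\eta)\beps' = O(\eta)$, but the naive bound $\sum \rho_{ii}^2/\hat{q}_i \leq (|\bL|/\eta)\btau^2 = O(d\tilde{\eps}^2/\eta)$ is a factor of $r$ too weak. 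I would cure this by partitioning $\bL$ into a ``good'' set where $\bq_i \geq \rho_{ii}/2$ (so $\hat{q}_i \geq \rho_{ii}/4$, hence $\rho_{ii}^2/\hat{q}_i \leq 4\rho_{ii}$, summing to $O(\btau)=O(\tilde{\eps})$) and a ``bad'' set where $\bq_i < \rho_{ii}/2$; on the latter $(\rho_{ii}-\bq_i)^2 \geq \rho_{ii}^2/4$, so $\sum_{\text{bad}}\rho_{ii}^2 \leq 4\|\rho[\bL]-\wt{\brho}[\bL]\|_\Fro^2 \leq O(\tilde{\eps}^2/r)$ by \Cref{enum:B}, and dividing by $\hat{q}_i \geq \eta/|\bL|$ gives the required $O((d/r)\tilde{\eps}^2/\eta)$.

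The main obstacle I anticipate is precisely this good/bad split in the $\bL$-analysis: without it, the naive bound $\sum\rho_{ii}^2 \leq \btau^2$ loses the factor of $r$ one gets ``for free'' in the Frobenius-squared guarantee \Cref{enum:B}, and the desired $(d/r)$ scaling is lost. Once the split is in place, combining all four pieces immediately yields the stated bounds on $\DBoff{\rho}{\wh{\brho}}$ and $\DBon{\rho}{\wh{\brho}}$, and hence the implied bound on $\DBchi{\rho}{\wh{\brho}} = \DBoff{\rho}{\wh{\brho}} + \DBon{\rho}{\wh{\brho}}$.
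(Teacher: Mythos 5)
Your proposal is correct, and the overall skeleton (split by $\bR$ vs.\ $\bL$, then on- vs.\ off-diagonal; bound the $\bR$ pieces by absorbing the $(1-\eta)$ rescaling into constants; bound the $\bL$ off-diagonal piece by dividing \Cref{enum:B} by the floor $\eta/|\bL|$) matches the paper's. Where you genuinely diverge is the on-diagonal $\bL$ analysis. The paper writes $p_i - \wh{\bq}_i = (p_i - \bq_i) + \eta\bq_i - \eta/|\bL|$ and applies the three-term triangle inequality, bounding $\sum(p_i-\bq_i)^2/\wh{\bq}_i$ by $(|\bL|/\eta)\|p[\bL]-\bq[\bL]\|_2^2$ (which uses \Cref{enum:B}), bounding $\sum\eta^2\bq_i^2/\wh{\bq}_i$ by $O(\eta^2)$ via $\wh{\bq}_i\geq(1-\eta)\bq_i$, and bounding $\sum(\eta/|\bL|)^2/\wh{\bq}_i$ by $O(\eta)$. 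You instead expand $\sum(p_i-\wh{\bq}_i)^2/\wh{\bq}_i = \sum p_i^2/\wh{\bq}_i - 2\btau + (\eta+(1-\eta)\beps')$ and control the first sum with a good/bad index split: on the good set ($\bq_i \geq p_i/2$) you get $p_i^2/\wh{\bq}_i \leq 4p_i$ which sums to $O(\btau)=O(\tilde{\eps})$; on the bad set you have $p_i^2 \leq 4(p_i-\bq_i)^2$, so $\sum_{\text{bad}}p_i^2 \leq 4\|p[\bL]-\bq[\bL]\|_2^2$, and only there do you pay the worst-case $\eta/|\bL|$ denominator. Both routes invoke \Cref{enum:B} once to supply the factor of $1/r$, and both produce the same final bound; yours is arguably more transparent about \emph{where} the factor of $r$ is being used (it enters purely through the ``bad'' indices where $\bq_i$ badly underestimates $p_i$), whereas the paper's version is shorter because the triangle inequality lets it hit all indices uniformly in the first term. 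Neither has an advantage in generality; this is a stylistic fork and your version is sound.

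One small remark: when you bound the linear pieces you write ``$\eta + (1-\eta)\beps' = O(\eta)$''; strictly $\beps' = O(\tilde{\eps})$ need not be $O(\eta)$, but the combined contribution is $O(\eta+\tilde{\eps})$, which is absorbed by the $\tilde{\eps}\log(1/\tilde{\eps})$ term already present in the claim, so nothing breaks. Similarly, in the $\bR$ on-diagonal piece you correctly cite the paper's extraction of \Cref{ineq:budd} from \Cref{enum:C}; note that \Cref{ineq:budd} appears in the proof of the \emph{preceding} corollary (the Bures-squared one), not in \Cref{thm:central} itself, so if you were writing this out in full you would want to re-derive that one-line identity rather than cite it.
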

\begin{remark}  \label{rem:cookies}
    Given this corollary, it would be natural to fix
    \begin{equation}
    \eta = \sqrt{d/r} \cdot \tilde{\eps}
    \end{equation} so as to balance the $\eta$~term and the $(d/r)(\tilde{\eps}^2/\eta)$~term,
    making both contribute $O(\sqrt{d/r} \cdot \tilde{\eps})$.
    This swamps the~$\tilde{\eps}\log(1/\tilde{\eps})$ term in \Cref{ineq:love} (up to a log factor).
    Thus with this choice of~$\eta$ we get the bound $\DBchi{\rho}{\wh{\brho}} \leq \wt{O}(\sqrt{d/r} \cdot \tilde{\eps})$, and if we want this to equal some ``$\eps_{\text{final}}$'' then we need to choose 
    \begin{equation}
        \tilde{\eps} = \wt{\Theta}(\sqrt{r/d} \cdot \eps_{\text{final}}),
    \end{equation}
    thereby making the final copy complexity $\wt{O}(rf/\tilde{\eps}) = \wt{O}(\sqrt{rd} \cdot f / \eps_{\text{final}})$, as stated in \Cref{thm:main}.
    Note that with these choices, the smallest eigenvalue of~$\wh{\brho}$ will be~$\wt{\Omega}(\eps_{\text{final}}/d)$, as expected.

    The reason we do not simply directly fix $\eta = \sqrt{d/r} \cdot \tilde{\eps}$ in the proof of \Cref{cor:cookie} is that in our later application to quantum zero mutual information testing (\Cref{subsec:testingzero}), it will be important to allow for a tradeoff between the off-diagonal $\chi^2$-error, the on-diagonal $\chi^2$-error, and the minimum eigenvalue of~$\wh{\brho}$. 
\end{remark}
\begin{proof}[Proof of \Cref{cor:cookie}.]
    Recall that in \Cref{thm:central}, we have 
    \begin{equation}
        \eps = \tilde{\eps}\log(1/\tilde{\eps});
    \end{equation}
    we will use this shorthand throughout the present proof.
    We start by employing  \Cref{ineq:dbl}:
    \begin{equation}    
        \DBchi{\rho}{\wh{\brho}} \leq \DBchi{\rho[\bL]}{\wh{\brho}[\bL]}  +  \DBchiminus{\bL}{\rho}{\wh{\brho}} = \DBoff{\rho[\bL]}{\wh{\brho}[\bL]} + \DBon{\rho[\bL]}{\wh{\brho}[\bL]}  +  \DBchiminus{\bL}{\rho}{\wh{\brho}}.
    \end{equation}
    To bound the off-diagonal contribution, we use: (i)~each diagonal entry of $\wh{\brho}[\bL]$ is at least $\eta/\bL$; (ii)~the off-diagonal Frobenius-squared of $\rho - \wh{\brho}$ is the same as that of $\rho - \wt{\brho}$  (since $\wh{\brho}, \wt{\brho}$ are both diagonal), which is bounded by $O(\tilde{\eps}^2)/(r \ln(1/\delta))$ from \Cref{enum:B}.  Combining these facts yields
    \begin{equation} \label[ineq]{ineq:light1}
        \DBoff{\rho[\bL]}{\wh{\brho}[\bL]}  \leq O((\abs{\bL}/r)(\tilde{\eps}^2/\eta)).
    \end{equation}
    As for the on-diagonal contribution, writing $p$ for the diagonal entries of~$\rho[\bL]$, and $\wh{\bq} = (\eta/\bL)\Id_{\bL} + (1-\eta) \bq$ for those of $\wh{\brho}$,
    \begin{equation}
        \DBon{\rho[\bL]}{\wh{\brho}[\bL]}
        = \sum_{i \in \bL} \frac{(p_i - \wh{\bq}_i)^2}{\wh{\bq}_i}
        \leq O(1) \sum_{i \in \bL} \frac{(p_i - \bq_i)^2}{\wh{\bq}_i} + O(1) \sum_{i \in \bL} \frac{(\eta/\bL)^2}{\wh{\bq}_i} + O(1) \sum_{i \in \bL} \frac{eta^2 \bq_i^2}{\wh{\bq}_i}.
    \end{equation}
    For the first two summands above we use $\wh{\bq}_i \geq \eta^2/\abs{\bL}$ in the denominator; for the third,  $\wh{\bq}_i \geq (1-\eta) \bq_i \geq \frac12 \bq_i$.  Thus
    \begin{equation}    \label[ineq]{ineq:light2}
        \DBon{\rho[\bL]}{\wh{\brho}[\bL]}  \leq O(|\bL|/\eta) \|p[\bL] - \bq[\bL]\|_2^2 + O(\eta) + O(\eta^2) \sum_{i \in \bL} \wh{\bq}_i \leq O((\abs{\bL}/r)(\tilde{\eta}^2/\eta) + \eta),
    \end{equation}
    where we used $\|p[\bL] - \bq[\bL]\|_2^2 \leq \|\rho[\bL] - \wt{\brho}[\bL]\|_\Fro^2$ and then \Cref{enum:B} again.  
    
    In light of \Cref{ineq:light1,ineq:light2}, it suffices  to show  that $\DBchiminus{\bL}{\rho}{\wh{\brho}} = \DBchiminus{\bL}{\rho}{(1-\eta)\wt{\brho}} \leq O(\eta + \eps)$, with the off-diagonal contribution being just~$O(\eps)$.
    This off-diagonal contribution differs from that of \mbox{$\DBchiminus{\bL}{\rho}{\wt{\brho}}$} by a factor of at most $1/(1-\eta) = O(1)$, so we may indeed bound it by $O(\eps)$ from \Cref{enum:C}.
    Finally, the on-diagonal contribution to $\DBchiminus{\bL}{\rho}{\wh{\brho}}$ is
    \begin{equation}
        \sum_{i \not \in \bL} \frac{(\rho_{ii} - (1-\eta)\bq_i)^2}{(1-\eta) \bq_i} \leq O(1) \sum_{i \not \in \bL} \frac{(\rho_{ii} - \bq_i)^2}{\bq_i} + O(1) \sum_{i \not \in \bL}\eta^2 \bq_i \leq O(\DBchiminus{\bL}{\rho}{\wt{\brho}}) + O(\eta^2),
    \end{equation}
    and this is $O(\eta^2 + \eps) \leq O(\eta + \eps)$, as required.
\end{proof}

Working out the parameters (just using \Cref{thm:central}), along the lines of \Cref{rem:cookies}, we get:
\begin{corollary}   \label{cor:yoyo}
    A state estimation algorithm with Frobenius-squared rate $f = f(d,r) \gg \log d$ may be transformed (preserving the single-copy measurement property) into a state estimation algorithm with the following property:

        Given parameters $\eps, r$ (with $\eps \leq 1/2$), and $M$ copies of a quantum state $\rho \in \C^{d \times d}$ of rank at most~$r$, 
        \begin{equation}
            M = \wt{O}\parens*{\frac{\sqrt{rd} \cdot f(d,r)}{\eps}}
        \end{equation}
        suffices\footnote{The hidden $\polylog$ terms are at most $\log^2(d/\eps) \log \log(d/\eps)$, but may be optimized further in special cases~\cite{Tang22}. 
        In case $r = d$, one may take $M = O(\frac{d f(d)}{\eps} \log^2(1/\eps) \log \log(1/\eps))$, so the $\polylog$ terms have no dependence on~$d$. In case $r = O(1)$, if $\eps \geq \exp(-\Omega(\sqrt{d}))$, one may take $M = O(\frac{\sqrt{d} f(d)}{\eps} \log(d/\eps) \log \log(d/\eps))$.}
        for the algorithm to output (with probability at least~$.99$) the classical description of a quantum state~$\wh{\brho}$ with $\DBchi{\rho}{\wh{\brho}} \leq \eps$.
        
        In particular, by \Cref{thm:ow}
        \begin{equation}
            M = \wt{O}\parens*{\frac{r^{.5} d^{1.5}}{\eps}} \text{ suffices using collective measurements}.
        \end{equation}
        And,  by \Cref{thm:krt}
        \begin{equation}
            M = \wt{O}\parens*{\frac{r^{1.5} d^{1.5}}{\eps}} \text{ suffices using single-copy measurements.}
        \end{equation}
\end{corollary}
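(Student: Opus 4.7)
The plan is to compose the central estimation routine with the mixing step of \Cref{cor:cookie}, then tune the internal accuracy $\tilde{\eps}$ and the mixing weight $\eta$ against the target~$\eps$. First, I would run \Cref{thm:central} on $\rho^{\otimes M}$ with rank bound~$r$ and some internal accuracy $\tilde{\eps}$ (to be determined), obtaining a partition $[d] = \bL \sqcup \bR$ and a diagonal estimate $\wt{\brho}$ meeting conclusions~\ref{enum:A}--\ref{enum:C}. Then setting $\wh{\brho} = \eta \cdot \tfrac{1}{|\bL|}\Id_{\bL \times \bL} + (1-\eta)\wt{\brho}$ and invoking \Cref{cor:cookie} yields
\begin{equation}
    \DBchi{\rho}{\wh{\brho}} \;\leq\; O\bigl(\eta + \tilde{\eps}\log(1/\tilde{\eps}) + (d/r)\,\tilde{\eps}^{2}/\eta\bigr).
\end{equation}

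Following the recipe of \Cref{rem:cookies}, I would next choose $\eta = \sqrt{d/r}\,\tilde{\eps}$, which equalizes the first and third terms at $O(\sqrt{d/r}\,\tilde{\eps})$ and dominates the middle term up to a logarithmic factor. Demanding that this bound be at most~$\eps$ forces $\tilde{\eps} = \wt{\Theta}(\sqrt{r/d}\,\eps)$. Since \Cref{thm:central} consumes $M = 2m\ell_{\max}$ copies with $\tilde{\eps} = Crf/m_\delta$, $m_\delta = \wt{\Theta}(m)$, and $\ell_{\max} = O(\log(1/\tilde{\eps}))$, solving for~$m$ gives $m = \wt{\Theta}(rf/\tilde{\eps}) = \wt{\Theta}(\sqrt{rd}\,f/\eps)$, whence $M = \wt{O}(\sqrt{rd}\,f(d,r)/\eps)$ as claimed. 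The single-copy-measurement property is inherited because both \Cref{thm:central} and the auxiliary diagonal re-estimation inside \Cref{cor:cookie} invoke the input algorithm~$\calA$ as a black box and perform only single-copy measurements on their own.

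The two specialized bounds then follow by substitution: $f(d,r) = O(d)$ from \Cref{thm:ow} yields the collective-measurement complexity $\wt{O}(r^{0.5}d^{1.5}/\eps)$, while $f(d,r) = O(rd)$ from \Cref{thm:krt} yields the single-copy complexity $\wt{O}(r^{1.5}d^{1.5}/\eps)$. The main nontrivial bookkeeping lies in the sharpened polylogarithmic factors promised in the footnote: for the $r = \Theta(d)$ corner, I would swap in \Cref{thm:central2} for \Cref{thm:central} to remove a stray $\log d$ so that the polylog depends only on~$1/\eps$; for the $r = O(1)$ regime with $\eps \geq \exp(-\Omega(\sqrt{d}))$, a more careful accounting of $\ell_{\max}$ and~$|\bR|$ shaves one further log factor. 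I expect this last refinement to be the only place where something beyond routine parameter substitution is needed.
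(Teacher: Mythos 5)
Your proposal matches the paper's intended proof almost exactly: the paper's own justification for \Cref{cor:yoyo} is just the one-liner ``Working out the parameters (just using \Cref{thm:central}), along the lines of \Cref{rem:cookies},'' and your parameter balance $\eta = \sqrt{d/r}\,\tilde{\eps}$, $\tilde{\eps} = \wt{\Theta}(\sqrt{r/d}\,\eps)$, $M = 2m\ell_{\max} = \wt{O}(\sqrt{rd}\,f/\eps)$ is precisely what is prescribed there, with the two specializations following by plugging in $f=O(d)$ and $f=O(rd)$.

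One minor divergence worth flagging: for the footnote's $r = d$ case you propose swapping in \Cref{thm:central2}, but the paper explicitly says this corollary uses ``just \Cref{thm:central},'' and indeed when $r = d$ the $\sqrt{d/r}$ factor vanishes so $\tilde{\eps} = \Theta(\eps/\log(1/\eps))$, giving $\log_2(m/rf) = O(\log(1/\eps))$ with no $d$-dependence in $\delta$ and hence the stated $d$-free polylog already drops out of \Cref{thm:central} alone; \Cref{thm:central2} is only invoked for the earlier \Cref{cor:log}. Your handling of the $r = O(1)$, $\eps \geq \exp(-\Omega(\sqrt{d}))$ case is consistent with the paper's note that the polylogs may be tightened by more careful bookkeeping of $\ell_{\max}$. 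These are cosmetic points; the core argument is the same and correct.
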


\subsection{A simple Frobenius-squared estimator} \label{sec:simple-frob}
\begin{proposition}     \label{prop:simple}
    There is estimation algorithm for quantum states, making single-copy measurements, achieving expected Frobenius-squared error~$O(d^2/m)$.
\end{proposition}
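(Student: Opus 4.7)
The plan is to employ the standard ``uniform POVM'' (i.e., Haar-random basis) tomography algorithm.  For each $t = 1, \ldots, m$, I would independently sample a Haar-random unitary $\bU_t$ on~$\C^d$ and measure the $t$th copy of~$\rho$ in the basis formed by the columns of~$\bU_t$, obtaining an outcome $\bi_t \in [d]$; I record the corresponding unit vector $|\boldsymbol{\psi}_t\rangle = \bU_t |\bi_t\rangle$.  This protocol is manifestly single-copy (indeed non-adaptive), so it suffices to establish the Frobenius-squared rate.

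By the Haar-invariance of the joint distribution of $(\bU_t, \bi_t)$, the marginal distribution of $|\boldsymbol{\psi}_t\rangle$ is exactly the ``uniform POVM'' measure on pure states, namely $d \langle\psi|\rho|\psi\rangle\, d\psi$ (with $d\psi$ the Haar measure on the unit sphere of~$\C^d$).  Using the standard Schur--Weyl moment identity
\begin{equation}
    \int |\psi\rangle\langle\psi| \cdot \langle\psi|\rho|\psi\rangle \, d\psi \;=\; \frac{\Id + \rho}{d(d+1)},
\end{equation}
one checks that $\wh{\brho}_t \coloneqq (d+1)|\boldsymbol{\psi}_t\rangle\langle\boldsymbol{\psi}_t| - \Id$ is an unbiased estimator of~$\rho$; the algorithm's output is then the empirical average $\wh{\brho} \coloneqq \tfrac{1}{m}\sum_{t=1}^m \wh{\brho}_t$.

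The rate bound now reduces to a one-line variance computation.  Since the $\wh{\brho}_t$ are i.i.d.\ and unbiased, $\E\bigl[\|\wh{\brho} - \rho\|_\Fro^2\bigr] = \tfrac{1}{m}\bigl(\E\bigl[\|\wh{\brho}_1\|_\Fro^2\bigr] - \|\rho\|_\Fro^2\bigr)$, and the single-sample second moment is in fact \emph{deterministic}: expanding $\|(d+1)|\psi\rangle\langle\psi| - \Id\|_\Fro^2 = (d+1)^2 - 2(d+1) + d = d^2 + d - 1$ for every pure $|\psi\rangle$ gives $\E\bigl[\|\wh{\brho} - \rho\|_\Fro^2\bigr] \leq (d^2+d-1)/m = O(d^2/m)$, as desired.

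There is essentially no hard step here---every ingredient is standard.  The one warning worth flagging is that the na\"ive Pauli-based alternative---measuring each of the $d^2 - 1$ generalized Pauli observables on $m/(d^2-1)$ copies each and reconstructing by linearity---achieves only rate $\Theta(d^3/m)$, because $\sum_{i\geq 1}(1-p_i^2) = \Theta(d^2)$ while each $\hat p_i$ is formed from only $m/(d^2-1)$ samples.  It is the $2$-design property of the Haar-random measurement (equivalently, a SIC-POVM or random Clifford basis in the dimensions where those exist) that saves the extra factor of~$d$.
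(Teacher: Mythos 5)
Your proof is correct, and it takes a genuinely different route from the paper. The paper's construction is entirely deterministic and explicit: it partitions the edges of $K_d$ into $d-1$ matchings, associates to each matching a POVM consisting of rank-$1$ projectors onto $\ket{i}\pm\ket{j}$ (and $\ket{i}\pm\mathrm{i}\ket{j}$), and then exploits the fact that the variance of the resulting estimator for $\Re\rho_{ij}$ or $\Im\rho_{ij}$ is not merely $O(1/m)$ but $O(p_{ij}/m)$ with $p_{ij} = \tfrac12(\rho_{ii}+\rho_{jj})$; summing over a matching makes these weighted variances telescope to $O(1/m)$, and summing over the $d-1$ matchings plus the diagonal gives rate $O(d^2)$. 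You instead achieve the same ``automatic weighting'' by appealing to the $2$-design property of the uniform POVM and the standard linear-inversion estimator $\wh{\brho}_t = (d+1)\ketbra{\bpsi_t}{\bpsi_t} - \Id$, where a single clean Frobenius-norm computation ($\|\wh{\brho}_1\|_\Fro^2 = d^2+d-1$ deterministically) gives the bound. Your argument is shorter and conceptually sharper in that it isolates the $2$-design property as the source of the improvement over the na\"ive $\Theta(d^3/m)$ Pauli scheme; the paper's version buys explicitness and a fixed, efficiently implementable measurement schedule (no Haar sampling, no Clifford randomization) at the cost of a somewhat more hands-on variance calculation. Both proofs are valid, both are nonadaptive single-copy algorithms, and both give rate $O(d^2)$ as required.
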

\begin{proof}
    It suffices to achieve Frobenius-squared error $d/m$ using $O(dm)$ copies.
    Assume for simplicity $d$ is even. (We leave the odd $d$ case to the reader.)
    Then there is a simple way~\cite{Wik} to construct a partition~$\calP$ of the edges of the complete graph on $d$~vertices into $d-1$~matchings.
    Fix a particular matching $M \in \calP$, and associate to it the POVM with elements
    \begin{equation}    \label{eqn:povmmm}
        (X^{\pm}_{ij})_{\{i,j\} \in M}, \quad\text{where}\quad
        X^{\pm}_{ij} = \frac12(\ket{i}\!\bra{i} + \ket{j}\!\bra{j}) \pm \frac12(\ket{i}\!\bra{j} + \ket{j}\!\bra{i}).
    \end{equation}
    When we measure~$\rho$ with this POVM, we obtain outcome $(\{i,j\},\pm)$ with probability
    \begin{equation}    \label{eqn:reallyreal}
        p_{ij} \pm r_{ij}, \quad p_{ij} \coloneqq \avg\{\rho_{ii},\rho_{jj}\}, \quad r_{ij} \coloneqq \Re \rho_{ij}.
    \end{equation}
    If we similarly define a POVM $(Y_{ij}^{\pm})$ but with a factor of~$\mathrm{i} = \sqrt{-1}$ in the off-diagonal elements of \Cref{eqn:povmmm}, we will similarly get outcomes with probabilities $p_{ij} \pm s_{ij}$, where $s_{ij} \coloneqq \Im \rho_{ij}$.
    We focus on analyzing \Cref{eqn:reallyreal}, as the imaginary-part analysis will be identical.

    Suppose we now measure this POVM~$m$ times and form the random variables $\wh{\br}_{ij}$ (for $\{i,j\} \in M$), where 
    \begin{equation}
        \wh{\br}_{ij} = \frac{\boldf^+_{ij} - \boldf^-_{ij}}{2}, \quad\text{with}\quad  \boldf^\pm_{ij} = \text{fraction of outcomes that are $(\{i,j\},\pm)$}.
    \end{equation}
    Then $\E[\wh{\br}_{ij}] = r_{ij}$, and 
    \begin{equation}
        \E[(r_{ij} - \wh{\br}_{ij})^2]
        = \Var[\wh{\br}_{ij}] \leq 
        \frac12 \Var[\boldf^+_{ij}] + \frac12 \Var[\boldf^-_{ij}] 
        \leq \frac{p_{ij} + r_{ij}}{2m} + \frac{p_{ij} - r_{ij}}{2m} = p_{ij}/m.
    \end{equation}
    Repeating the analysis for the imaginary parts, we use $2m$ copies of~$\rho$ to get estimates for all $\rho_{ij}$, $\{i,j\} \in M$, achieving total expected squared-error 
    \begin{equation}
        \sum_{\{i,j\} \in M} 2p_{ij}/m = \sum_{i \in [d]} \rho_{ii}/m = 1/m.
    \end{equation}
    Repeating this for all $M \in \calP$ uses $O(dm)$ copies of~$\rho$ and gives estimates for all off-diagonal elements of~$\rho$, with total expected squared-error $(d-1)/m$.
    Finally, we can use standard basis measurements to estimate the diagonal elements of~$\rho$, using \Cref{prop:learn-L2}: $m$ more copies of~$\rho$ suffice to achieve total expected squared-error~$1/m$.
\end{proof}

\section{Testing zero mutual information}   \label{sec:testing}

We now move on to showing the main application of our $\chi^2$ tomography algorithm: testing zero quantum mutual information. 
We will explain below in \Cref{subsec:testingzero} how our $\chi^2$ tomography algorithm is crucial to achieving this result. 
But first, we introduce and analyze a variant of the quantum mutual information that features in our analysis.

\subsection{Mutual information versus its Hellinger variant}

The goal of this subsection is to prove the below theorem, showing that the standard quantum mutual information is not much larger than the ``Hellinger mutual information'':
\begin{theorem}\label{thm:Hell-mutual-info}
    Let $\rho = \rho_{AB}$ be a bipartite quantum state on $A \otimes B$, where $A \cong B \cong \C^{d}$.
    Writing $\DHellsq{\rho}{\rho_A \otimes \rho_B} = \eta$, it holds that $I(A : B)_\rho \leq \eta \cdot O(\log(d/\eta))$.
\end{theorem}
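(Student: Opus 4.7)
The plan is to exploit the variational identity $\DKL{\rho}{\sigma_A \otimes \sigma_B} = I(A:B)_\rho + \DKL{\rho_A}{\sigma_A} + \DKL{\rho_B}{\sigma_B}$, which follows by a direct calculation using $\log(\sigma_A \otimes \sigma_B) = \log \sigma_A \otimes \Id + \Id \otimes \log \sigma_B$ and $I(A:B)_\rho = S(\rho_A) + S(\rho_B) - S(\rho)$. This identity gives $I(A:B)_\rho \leq \DKL{\rho}{\sigma_A \otimes \sigma_B}$ for \emph{any} product state on the RHS, so I am free to pick a convenient one.

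The reason not to use $\sigma_A \otimes \sigma_B = \rho_A \otimes \rho_B$ directly with \Cref{thm:quantum-rev-ineq} is that the marginals may be rank-deficient, making $\Dren{\infty}{\rho}{\rho_A \otimes \rho_B}$ infinite. The fix is to smooth: set $\sigma_A = \Delta_\epsilon(\rho_A)$ and $\sigma_B = \Delta_\epsilon(\rho_B)$ (see \Cref{not:depolarizing}) for a parameter $\epsilon$ to be chosen. Then $\sigma_A \otimes \sigma_B$ has minimum eigenvalue at least $(\epsilon/d)^2$, so by \Cref{fact:Dren2} we obtain $\Dren{\infty}{\rho}{\sigma_A \otimes \sigma_B} \leq 2\ln(d/\epsilon)$, and \Cref{thm:quantum-rev-ineq} yields
\begin{equation}
    \DKL{\rho}{\sigma_A \otimes \sigma_B} \leq \bigl(2 + 2\ln(d/\epsilon)\bigr) \cdot \DHellsq{\rho}{\sigma_A \otimes \sigma_B}.
\end{equation}

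Next, I would control $\DHellsq{\rho}{\sigma_A \otimes \sigma_B}$ in terms of $\eta$. Using \Cref{rem:triangle} applied to the quantum Hellinger metric, it is at most $2\eta + 2\DHellsq{\rho_A \otimes \rho_B}{\sigma_A \otimes \sigma_B}$. For the latter, the tensorization identity for the Hellinger affinity (\Cref{def:qhell}) combined with $1 - ab \leq (1-a) + (1-b)$ shows $\DHellsq{\rho_A \otimes \rho_B}{\sigma_A \otimes \sigma_B} \leq \DHellsq{\rho_A}{\sigma_A} + \DHellsq{\rho_B}{\sigma_B}$; and as in the proof of \Cref{thm:learn-KL-easy}, $\DHellsq{\rho_X}{\sigma_X} \leq \|\rho_X - \sigma_X\|_1 = \epsilon\|\rho_X - \Id/d\|_1 \leq 2\epsilon$ for $X \in \{A, B\}$. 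Thus $\DHellsq{\rho}{\sigma_A \otimes \sigma_B} \leq 2\eta + O(\epsilon)$.

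Combining, $I(A:B)_\rho \leq \DKL{\rho}{\sigma_A \otimes \sigma_B} \leq O(\log(d/\epsilon)) \cdot (\eta + \epsilon)$. Picking $\epsilon = \eta$ balances the two terms and gives $I(A:B)_\rho \leq O(\eta \log(d/\eta))$, as desired. There is no real obstacle here beyond remembering to invoke the variational identity rather than trying to bound $\DKL{\rho}{\rho_A \otimes \rho_B}$ directly; the rest is assembling ingredients (reverse Pinsker, tensorization of Hellinger affinity, triangle inequality, depolarizing smoothing) that are already set up in \Cref{sec:prelims}.
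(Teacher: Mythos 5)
Your proof is correct, and it takes a genuinely different (and arguably cleaner) route than the paper. The paper smooths the \emph{joint} state $\rho \mapsto \Delta_\eps(\rho)$, then invokes the Shirokov continuity bound for quantum mutual information (\Cref{lem:contmutualinfo}) to transfer from $I(A:B)_\rho$ to $I(A:B)_{\Delta_\eps(\rho)}$, and additionally needs a separate lemma (\Cref{lem:DHellsqdepolbound}) to control how $\DHellsq{{\cdot}}{{\cdot}}$ changes under depolarization of the joint --- incurring a $\sqrt{\eps}$ loss that forces the choice $\eps = \Theta(\eta^2)$. Your approach instead uses the variational characterization $I(A:B)_\rho = \min_{\sigma_A, \sigma_B} \DKL{\rho}{\sigma_A\otimes\sigma_B}$ to replace the reference state $\rho_A\otimes\rho_B$ by a smoothed product $\Delta_\eps(\rho_A)\otimes\Delta_\eps(\rho_B)$ \emph{without any cost}, leaving $\rho$ untouched. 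This sidesteps the continuity-of-MI machinery entirely, and because you never depolarize the joint state, the Hellinger perturbation is only $O(\eps)$ rather than $O(\sqrt{\eps})$, so $\eps = \eta$ suffices to balance. Both arguments land on the same $O(\eta\log(d/\eta))$ bound, but yours requires fewer ingredients (essentially just \Cref{thm:quantum-rev-ineq}, \Cref{fact:Dren2}, the Hellinger triangle inequality plus tensorization, and the elementary $\DHellsq{\rho}{\sigma} \le \norm{\rho-\sigma}_1$ from \Cref{prop:quantum-ineqs}) and involves no awkward square-root tradeoff. One could fairly say your argument is the ``right'' one to have written down. All steps check out: the variational identity is the standard one, $1-ab \le (1-a)+(1-b)$ for $a,b \in [0,1]$ gives subadditivity of $\DHellsq{{\cdot}}{{\cdot}}$ over tensor products, and the marginal perturbation bound $\DHellsq{\rho_X}{\Delta_\eps(\rho_X)} \le 2\eps$ follows exactly as in the proposition following \Cref{thm:learn-KL-easy}.
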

We also observe that by restricting~$\rho_{AB}$ to be diagonal, we immediately obtain the analogous theorem concerning classical mutual information.
We remark that proving this classical version directly is no easier than proving the quantum version.
\begin{corollary}\label{cor:hell-mutual-info}
    Let $p = p_{AB}$ be bipartite classical state on $A \times B$, where $|A| = |B| = d$.
    Writing \mbox{$\dhellsq{p}{p_A \times p_B} = \eta$}, it holds that $I(A : B)_p \leq \eta \cdot O(\log(d/\eta))$.
\end{corollary}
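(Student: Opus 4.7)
The plan is to derive this corollary as an immediate specialization of \Cref{thm:Hell-mutual-info} via the standard classical-to-quantum embedding, since both the Hellinger-squared divergence and the mutual information collapse to their classical counterparts on commuting (diagonal) states. So the work is essentially a verification rather than a new argument.

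First I would embed the bipartite distribution $p$ on $[d]\times[d]$ into a quantum state on $\C^d\otimes\C^d$ by setting
\begin{equation}
\rho \;=\; \sum_{i,j=1}^{d} p_{ij}\,\ketbra{i}{i}\otimes\ketbra{j}{j},
\end{equation}
which is diagonal in the computational product basis. Tracing out one factor gives $\rho_A = \sum_i (p_A)_i \ketbra{i}{i}$ and similarly $\rho_B$, so $\rho_A\otimes\rho_B = \sum_{i,j} (p_A)_i(p_B)_j \ketbra{i}{i}\otimes\ketbra{j}{j}$ is diagonal in the same basis, with eigenvalues exactly $(p_A\times p_B)_{ij}$.

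Next I would record the two one-line reductions for commuting (simultaneously diagonal) density matrices $\mathrm{diag}(u),\mathrm{diag}(v)$ with entries indexed by some common alphabet. Since $\sqrt{\mathrm{diag}(u)} = \mathrm{diag}(\sqrt{u})$ and likewise for $v$, \Cref{def:qhell} gives $\DHellsq{\mathrm{diag}(u)}{\mathrm{diag}(v)} = \sum_k(\sqrt{u_k}-\sqrt{v_k})^2 = \dhellsq{u}{v}$; similarly, $\ln\mathrm{diag}(u) = \mathrm{diag}(\ln u)$, so $\DKL{\mathrm{diag}(u)}{\mathrm{diag}(v)} = \sum_k u_k \ln(u_k/v_k) = \dKL{u}{v}$. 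Applying these identities with the common alphabet $[d]\times[d]$ to the pair $(\rho,\rho_A\otimes\rho_B)$ yields
\begin{equation}
\DHellsq{\rho}{\rho_A\otimes\rho_B} \;=\; \dhellsq{p}{p_A\times p_B} \;=\; \eta,
\qquad
I(A:B)_\rho \;=\; I(A:B)_p.
\end{equation}

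Finally, I would invoke \Cref{thm:Hell-mutual-info} on this particular~$\rho$ (noting the local dimension is~$d$, matching the hypothesis of the theorem). This directly produces $I(A:B)_p = I(A:B)_\rho \leq \eta\cdot O(\log(d/\eta))$, as claimed. The only ``obstacle'' worth flagging is the trivial degenerate regime: the bound is vacuous once $\eta \gtrsim 1$ (or once $\eta \gtrsim 1/d$ if one wants $\log(d/\eta)$ to remain positive), but since $I(A:B)_p \leq \log d$ always, we may simply restrict attention to, say, $\eta \leq 1$, absorbing the remaining case into the implicit constant.
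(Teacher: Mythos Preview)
Your proposal is correct and follows exactly the paper's own approach: the paper simply states that the corollary follows ``by restricting $\rho_{AB}$ to be diagonal,'' and you have spelled out this diagonal embedding together with the (routine) verifications that $\DHellsq{\cdot}{\cdot}$ and $\DKL{\cdot}{\cdot}$ reduce to their classical counterparts on commuting states.
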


We first state a bound on the continuity of mutual information in terms of the trace distance and the subsystem dimension. 
A bound of the following form can be proven a number of ways, for example by appealing to the Petz--Fannes--Audenaert~\cite{Audenaert2007}  and Alicki--Fannes~\cite{Alicki2004} inequalities; see~\cite[Appendix F]{Flammia2017}. 
The bound we use is an immediate corollary of \cite[Prop.\ 1]{Shirokov2017} which gives small explicit constants. 

\begin{lemma}[Continuity of quantum mutual information] 
\label{lem:contmutualinfo}
    For two density operator $\rho$, $\sigma$ on $A\otimes B$ with $A\cong B \cong \mathbb{C}^d$ and $\tfrac{1}{2}\|\rho -\sigma\|_1 = \eps$, we have
    \begin{equation}
        |I(A:B)_\rho - I(A:B)_\sigma| \le 2 \eps \log\frac{4d}{\eps}\,.
    \end{equation}
\end{lemma}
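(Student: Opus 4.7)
The plan is to decompose the mutual information into von Neumann entropies and then apply Fannes-type continuity bounds to each piece. Writing $I(A:B)_\tau = S(\tau_A) + S(\tau_B) - S(\tau_{AB})$ for $\tau \in \{\rho, \sigma\}$, the triangle inequality gives
\begin{equation}
    |I(A:B)_\rho - I(A:B)_\sigma| \leq |S(\rho_A) - S(\sigma_A)| + |S(\rho_B) - S(\sigma_B)| + |S(\rho_{AB}) - S(\sigma_{AB})|.
\end{equation}
Data processing (monotonicity of trace distance under partial trace) reduces the first two terms to a problem on $d$-dimensional systems with trace-distance parameter at most~$\eps$, while the third term is on a $d^2$-dimensional system with trace-distance parameter exactly~$\eps$.

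Next, I would apply the Fannes--Audenaert inequality $|S(\tau_1)-S(\tau_2)| \leq \eps' \log(k-1) + h(\eps')$ (with $k$ the Hilbert-space dimension and $\eps' = \tfrac12 \|\tau_1 - \tau_2\|_1$) to each of the three entropy differences. The marginals contribute at most $\eps \log d + h(\eps)$ apiece and the joint entropy contributes at most $\eps \log(d^2 - 1) + h(\eps) \leq 2\eps \log d + h(\eps)$. Summing and bounding $h(\eps) \leq \eps \log(c/\eps)$ for a small constant $c$ packages the total as $4\eps \log(c'd/\eps)$ for some $c'$. This already yields the desired scaling $O(\eps \log(d/\eps))$, which is all that is genuinely needed for Theorem~\ref{thm:Hell-mutual-info} and Corollary~\ref{cor:hell-mutual-info}.

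The main obstacle is tightening the leading constant from $4$ down to the claimed $2$. This is the content of Shirokov's refinement: instead of applying Fannes--Audenaert to each of the three entropies independently, one uses the alternate form $I(A:B) = S(A) - S(A|B)$ together with the Alicki--Fannes--Winter-style continuity bound for conditional entropy, $|S(A|B)_\rho - S(A|B)_\sigma| \leq 2\eps \log d_A + g(\eps)$ (with $g(\eps) = O(\eps\log(1/\eps))$). Because this bound already absorbs the ``joint minus $B$-marginal'' contribution into a single term scaling as $2\eps \log d$, only one factor of $d$ appears in the leading coefficient rather than the $1+1+2=4$ factors produced by the naive approach. Folding the $h$ and $g$ corrections into the logarithm via $h(\eps) \leq \eps \log(4/\eps)$ is then a routine manipulation yielding the clean form $2\eps \log(4d/\eps)$.
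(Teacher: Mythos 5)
Your first paragraph is correct and takes a genuinely different route from the paper, which proves the lemma in one line by citing Shirokov's Proposition~1 on continuity of quantum conditional mutual information with the conditioning system taken to be trivial. The naive Fannes--Audenaert decomposition you describe does give a valid bound of the form $4\eps\log(c'd/\eps)$, and you are right that this weaker constant already suffices for every downstream use here (\Cref{thm:Hell-mutual-info} and \Cref{cor:hell-mutual-info} only need $O(\eps\log(d/\eps))$ scaling); as an elementary, self-contained argument it has some value that the paper's bare citation does not.

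However, your sketch of how the constant $4$ gets tightened to $2$ is incorrect. Writing $I(A:B)=S(A)-S(A|B)$ and applying the Alicki--Fannes--Winter bound $|S(A|B)_\rho - S(A|B)_\sigma| \le 2\eps\log d_A + g(\eps)$ to the conditional-entropy piece does \emph{not} absorb the whole leading coefficient: you must still control $|S(A)_\rho - S(A)_\sigma|$, which costs an additional $\eps\log d$ by Fannes--Audenaert. The total from your decomposition is therefore $3\eps\log d + O(\eps\log(1/\eps))$, not $2\eps\log d + O(\eps\log(1/\eps))$. The actual mechanism in Shirokov (following Winter) is to run the Alicki--Fannes--Winter interpolation argument \emph{directly} on the (conditional) mutual information functional rather than through an entropy decomposition: one combines the local dimension bound $I(A:B|C)\le 2\log\min\{d_A,d_B\}$ with the near-affinity of $I(A:B|C)$ along the mixing path $(1-p)\rho + p\sigma$. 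Since mutual information ranges over an interval of length $2\log\min\{d_A,d_B\}$, the AFW technique produces a leading coefficient of $2$ here (just as conditional entropy, ranging over an interval of length $2\log d_A$, inherits its own factor of $2$). The remaining $2g(\eps)$ term then folds into $2\eps\log(4/\eps)$ using $g(\eps)\le\eps\log(4/\eps)$ for $\eps\le 1$, giving the stated $2\eps\log(4d/\eps)$. If you want the precise constant of the lemma you should either reproduce that AFW-on-CMI argument or simply cite the reference, as the paper does.
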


\begin{proof}
    The bound from \cite[Prop.\ 1]{Shirokov2017} for the quantum conditional mutual information applies immediately with a trivial conditioning system and with the bound $\eps \le 1$ to get the constant 4. 
\end{proof}

We also must bound how much $\DHellsq{\rho}{\rho_A\otimes\rho_B}$ changes relative to a depolarization of $\rho$. 

\begin{lemma} \label{lem:DHellsqdepolbound}
Given a density operator $\rho$ on $A\otimes B$ with $A\cong B\cong\mathbb{C}^d$ and the depolarization $\sigma = \Delta_\eps(\sigma) = (1-\eps)\rho + \frac{\eps}{d^2}\mathbbm{1}$, the squared Hellinger distance obeys
\begin{equation}
    \DHellsq{\sigma}{\sigma_A\otimes\sigma_B} \le C \sqrt{\eps} + \DHellsq{\rho}{\rho_A\otimes\rho_B}\,,
\end{equation}
where the constant can be chosen as $C = 4+4\sqrt{2}$.
\end{lemma}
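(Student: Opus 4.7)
The plan is to exploit the Frobenius-norm representation $\DHellsq{X}{Y} = \|\sqrt{X}-\sqrt{Y}\|_\Fro^2$ from \Cref{def:qhell} and estimate the two quantities in the claimed inequality by a careful telescoping, rather than by a triangle inequality on $\DHell$ (which, I will explain, gives a worse constant). Since both $\sigma$ and $\sigma_A\otimes\sigma_B$ have trace $1$ (and likewise for $\rho$), expanding the square of the Frobenius norm gives
\[
\DHellsq{\sigma}{\sigma_A\otimes\sigma_B} - \DHellsq{\rho}{\rho_A\otimes\rho_B} = 2\tr\!\bigl(\sqrt{\rho}\sqrt{\rho_A\otimes\rho_B} - \sqrt{\sigma}\sqrt{\sigma_A\otimes\sigma_B}\bigr),
\]
and I would telescope the right-hand side as
\[
2\tr\!\bigl[(\sqrt{\rho}-\sqrt{\sigma})\sqrt{\rho_A\otimes\rho_B}\bigr] + 2\tr\!\bigl[\sqrt{\sigma}(\sqrt{\rho_A\otimes\rho_B}-\sqrt{\sigma_A\otimes\sigma_B})\bigr].
\]
Applying the Hilbert--Schmidt Cauchy--Schwarz inequality to each summand, and using $\|\sqrt{\rho_A\otimes\rho_B}\|_\Fro = \|\sqrt{\sigma}\|_\Fro = 1$ together with $\|\sqrt{X}-\sqrt{Y}\|_\Fro = \DHell{X}{Y}$, the whole difference is at most $2\DHell{\rho}{\sigma} + 2\DHell{\rho_A\otimes\rho_B}{\sigma_A\otimes\sigma_B}$.

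Next I would convert each Hellinger distance into a trace-distance bound using \Cref{prop:quantum-ineqs}, which in particular gives $\DHellsq{\cdot}{\cdot} \leq \|{\cdot}-{\cdot}\|_1$. For the joint states, $\sigma - \rho = \eps(\Id/d^2 - \rho)$ so $\|\sigma-\rho\|_1 \leq 2\eps$, yielding $\DHell{\rho}{\sigma}\leq\sqrt{2\eps}$. For the marginals, tracing out the complementary subsystem shows $\sigma_A = (1-\eps)\rho_A + (\eps/d)\Id_A$ (and analogously for $B$), so $\|\sigma_A-\rho_A\|_1, \|\sigma_B-\rho_B\|_1 \leq 2\eps$. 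The standard tensor-product estimate $\|\rho_A\otimes\rho_B - \sigma_A\otimes\sigma_B\|_1 \leq \|\rho_A-\sigma_A\|_1 + \|\rho_B-\sigma_B\|_1 \leq 4\eps$ then gives $\DHell{\rho_A\otimes\rho_B}{\sigma_A\otimes\sigma_B} \leq 2\sqrt{\eps}$. Assembling everything yields $\DHellsq{\sigma}{\sigma_A\otimes\sigma_B} - \DHellsq{\rho}{\rho_A\otimes\rho_B} \leq (4+2\sqrt 2)\sqrt{\eps}$, comfortably beating the stated constant $C = 4+4\sqrt 2$.

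I do not anticipate a serious obstacle. The only pitfall worth flagging is the more obvious route: use the triangle inequality for the metric $\DHell$ to pass from $\sigma\mapsto\rho\mapsto\rho_A\otimes\rho_B\mapsto\sigma_A\otimes\sigma_B$, then square. That also works, but forces one to linearize the cross term via $\DHell\leq\sqrt 2$, inflating the constant to roughly $8+8\sqrt 2$. The telescoping above avoids this slack by taking advantage of the fact that $\sqrt{\sigma}$ and $\sqrt{\rho_A\otimes\rho_B}$ are square roots of genuine density operators and hence have Frobenius norm exactly $1$, not merely $\leq\sqrt 2$.
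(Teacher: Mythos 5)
Your proof is correct, and it takes a genuinely different route from the paper. The paper applies the triangle inequality at the level of the metric $\DHell$ through the chain $\sigma \to \rho \to \rho_A\otimes\rho_B \to \sigma_A\otimes\sigma_B$, arriving at a bound on the \emph{difference} $\DHell{\sigma}{\sigma_A\otimes\sigma_B} - \DHell{\rho}{\rho_A\otimes\rho_B}$, and then multiplies through by the \emph{sum} (bounded by $2\sqrt2$ via $\DHell\le\sqrt2$) to pass from the difference of distances to the difference of squared distances. You instead work directly at the level of $\DHellsq$: expanding $\|\sqrt X-\sqrt Y\|_\Fro^2 = \tr X + \tr Y - 2\tr(\sqrt X\sqrt Y)$ kills the trace-$1$ terms, and the remaining cross term is telescoped and controlled by Hilbert--Schmidt Cauchy--Schwarz, exploiting that the ``spectator'' factors $\sqrt\sigma$ and $\sqrt{\rho_A\otimes\rho_B}$ have Frobenius norm exactly $1$. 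Both arguments then feed in the identical trace-distance estimates $\|\sigma-\rho\|_1\le 2\eps$, $\|\sigma_A\otimes\sigma_B-\rho_A\otimes\rho_B\|_1\le 4\eps$. Your version is a little sharper: it yields $C = 4+2\sqrt2 \approx 6.83$ versus the paper's $4+4\sqrt2 \approx 9.66$, because replacing the crude bound $\DHell\le\sqrt2$ with the exact value $\|\sqrt\sigma\|_\Fro = 1$ saves a factor of $\sqrt2$ on the cross term. One small slip in your closing remark: the ``triangle inequality then square'' route does not inflate to $8+8\sqrt2$; carried out carefully (as in the paper, or by expanding the square and keeping the leading $\sqrt\eps$ terms) it gives precisely $4+4\sqrt2$. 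This doesn't affect the correctness of your main argument.
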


\begin{proof}
    Since $\DHell{\rho}{\sigma}$ is a metric, we have
    \begin{equation}
        \DHell{\sigma}{\sigma_A\otimes\sigma_B} \le \DHell{\sigma}{\rho} + \DHell{\rho}{\rho_A\otimes\rho_B} + \DHell{\rho_A\otimes\rho_B}{\sigma_A\otimes\sigma_B}\,.
    \end{equation}
    By \Cref{prop:quantum-ineqs} we have $\DHell{\rho}{\sigma}\le \sqrt{2\Dtr{\rho}{\sigma}}$ and we have $\Dtr{\rho}{\sigma} \le \eps$ by the triangle inequality. 
    By the triangle inequality and quantum data processing inequality (monotonicity of trace distance under CPTP maps), we also have $\Dtr{\rho_A\otimes\rho_B}{\sigma_A\otimes\sigma_B} \le \Dtr{\rho_A}{\sigma_A} + \Dtr{\rho_B}{\sigma_B} \le 2\eps$. 
    Plugging this in above and rearranging, we have
    \begin{equation}
        \DHell{\sigma}{\sigma_A\otimes\sigma_B} - \DHell{\rho}{\rho_A\otimes\rho_B} \le (2+\sqrt{2})\sqrt{\eps}\,.
    \end{equation}
    For any two states $\rho, \sigma$, $\DHell{\rho}{\sigma} \le \sqrt{2}$, so we can multiply both sides by $\DHell{\sigma}{\sigma_A\otimes\sigma_B} + \DHell{\rho}{\rho_A\otimes\rho_B} \le 2\sqrt{2}$ and the bound follows.
\end{proof}

Now we can prove~\Cref{thm:Hell-mutual-info}. 

\begin{proof}[Proof of \Cref{thm:Hell-mutual-info}.]
We begin by smoothing the state $\rho$ with a depolarizing channel to obtain 
\begin{equation}
    \sigma = (1-\eps) \rho + \frac{\eps}{d^2}\mathbbm{1}\,.
\end{equation}
By the triangle inequality, we have $\tfrac{1}{2}\|\rho - \sigma\|_1 = \tfrac{\eps}{2} \| \rho - \mathbbm{1}/d^2\|_1 \le \eps$. 
The change in the mutual information by passing from $\rho \to \sigma$ is bounded using \Cref{lem:contmutualinfo}, 
\begin{equation}
    I(A:B)_\rho \le I(A:B)_\sigma + 2 \eps \log\frac{4d}{\eps}\,.
\end{equation}
Note that $I(A:B)_\rho \ge I(A:B)_\sigma$ by the quantum data processing inequality for relative entropy (monotonicity of mutual information under local CPTP maps). 

By \Cref{thm:quantum-rev-ineq}, we have 
\begin{equation}
    I(A:B)_\sigma \le (2+ \Dren{\infty}{\sigma}{\sigma_A\otimes\sigma_B}) \cdot \DHellsq{\sigma}{\sigma_A\otimes\sigma_B}\,.
\end{equation}
Since $\sigma - \frac{\eps}{d^2}\mathbbm{1} \ge 0$, the positivity of the partial trace map shows that the reduced states $\sigma_A$ and $\sigma_B$ satisfy
\begin{equation}
    \sigma_A \ge \frac{\eps}{d}\mathbbm{1} \,, \text{ and}\quad \sigma_B \ge \frac{\eps}{d}\mathbbm{1}\,.
\end{equation}
Therefore the R\'{e}nyi entropy term is bounded using \Cref{fact:Dren2} as 
\begin{equation}
    \Dren{\infty}{\sigma}{\sigma_A\otimes\sigma_B}) \le \log \|\sigma_A^{-1}\otimes\sigma_B^{-1}\| \le \log \frac{d^2}{\eps^2}\,.
\end{equation}
Using \Cref{lem:DHellsqdepolbound}, the $\DHellsq{\sigma}{\sigma_A\otimes\sigma_B}$ term is bounded by
\begin{equation}
    \DHellsq{\sigma}{\sigma_A\otimes\sigma_B} \le C \sqrt{\eps} + \DHellsq{\rho}{\rho_A\otimes\rho_B} = C \sqrt{\eps} + \eta\,.
\end{equation}
Putting this all together, we have that 
\begin{equation}
    I(A:B)_\rho 
    \le \Bigl(2+ \log \frac{d^2}{\eps^2}\Bigr) \cdot \bigl(C \sqrt{\eps} + \eta\bigr) + 2 \eps \log\frac{4d}{\eps} 
    \le (\eta + \sqrt{\eps}) \, O\bigl(\log (d/\eps)\bigr)\,.
\end{equation}
Choosing $\eps = O(\eta^2)$ completes the proof.
\end{proof}

\subsection{Testing zero classical mutual information}
Before moving to the trickier quantum case, we warm up by showing 
an efficient tester for classical mutual information, establishing \Cref{thm:test-classical}. 
First we give a short proof of the following (which appears explicitly as~\cite[Lem.~7]{Acharya2015}):
\begin{lemma}   \label{lem:learn-marginals}
    Given $n = O(d/\eps)$ samples from two distributions $q, q'$ on~$[d]$, one can output a hypothesis $\wh{\bq} \times \wh{\bq}'$ that (with probability at least~$.99$) satisfies $\dchisq{q \times q'}{\wh{\bq} \times \wh{\bq}'} \leq \eps$.
\end{lemma}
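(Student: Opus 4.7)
The plan is to learn each of the two marginals separately in $\chi^2$-divergence using the add-one estimator, and then combine them via the tensorization identity for $\chi^2$-divergence.

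\medskip\noindent\textbf{Step 1 (split the samples).} Use $m = \Theta(d/\eps)$ samples from $q$ to produce an add-one estimator $\wh{\bq}$ via \Cref{prop:learn-chi2} (with $S = [d]$), and similarly use $m$ samples from $q'$ to produce $\wh{\bq}'$. By the sharpest form of \Cref{ineq:ym}, we have $\E[\dchisq{q}{\wh{\bq}}] \le (d-1)/(m+1)$, and likewise for $q'$. Choosing the constant in $m = \Theta(d/\eps)$ sufficiently large makes both expectations at most $c\eps$ for any desired small constant $c$.

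\medskip\noindent\textbf{Step 2 (high probability via Markov).} By Markov's inequality,
\begin{equation}
    \Pr\bigl[\dchisq{q}{\wh{\bq}} > \eps/3\bigr] \le 3c,
\end{equation}
and similarly for $q'$. Taking $c$ small and union-bounding, we conclude that with probability at least $.99$ both marginals satisfy $\dchisq{q}{\wh{\bq}} \le \eps/3$ and $\dchisq{q'}{\wh{\bq}'} \le \eps/3$.

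\medskip\noindent\textbf{Step 3 (tensorization).} The key identity is
\begin{equation}
    1 + \dchisq{q \times q'}{\wh{\bq} \times \wh{\bq}'} = \bigl(1 + \dchisq{q}{\wh{\bq}}\bigr)\bigl(1 + \dchisq{q'}{\wh{\bq}'}\bigr),
\end{equation}
which follows directly from the second formula in \Cref{def:chi2}: the sum $\sum_{i,j} (q_i q'_j)^2/(\wh{\bq}_i \wh{\bq}'_j)$ factors as the product $\bigl(\sum_i q_i^2/\wh{\bq}_i\bigr)\bigl(\sum_j {q'_j}^2/\wh{\bq}'_j\bigr)$. Combined with Step 2, we get (on the good event)
\begin{equation}
    1 + \dchisq{q \times q'}{\wh{\bq} \times \wh{\bq}'} \le (1 + \eps/3)^2 \le 1 + \eps,
\end{equation}
provided $\eps$ is below a small universal constant (for larger $\eps$, we apply the result with this universal constant in place of $\eps$, losing only constants in the sample complexity).

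\medskip\noindent\textbf{Main obstacle.} There is no serious obstacle here; the entire proof hinges on recognizing the multiplicative tensorization identity for $\chi^2$-divergence, which is what cleanly converts a per-marginal $\chi^2$-accuracy of $\eps/3$ into a joint $\chi^2$-accuracy of $\eps$ without any cross-term penalty. The rest is a routine combination of \Cref{prop:learn-chi2}, Markov, and a union bound.
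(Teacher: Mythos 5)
Your proposal is correct and follows essentially the same route as the paper: learn each marginal separately to $\chi^2$-accuracy $\eps/3$ via \Cref{prop:learn-chi2} and Markov, then combine with the multiplicative tensorization identity for $\chi^2$-divergence, which is exactly the paper's \Cref{prop:chi-add}. The only cosmetic difference is that you verify the $(1+\eps/3)^2 \le 1+\eps$ arithmetic explicitly (and even the ``small $\eps$'' caveat is unnecessary, since it holds for all $\eps \le 3$).
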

\begin{proof}
    It suffices to separately learn each of $q,q'$ to $\chi^2$-accuracy~$\eps/3$ and high confidence (which can be done applying \Cref{prop:learn-chi2} and Markov's inequality), and then apply the below \Cref{prop:chi-add}.
\end{proof}
\begin{proposition} \label{prop:chi-add}
    Let $\dchisq{p}{q} = \eps_1$, $\dchisq{p'}{q'} = \eps_2$. 
    Then we have the near-additivity formula
    \begin{equation}
        \dchisq{p\otimes p'}{q \otimes q'} = (1+\eps_1)(1+\eps_2) - 1 = \eps_1 + \eps_2 + \eps_1 \eps_2.
    \end{equation}
\end{proposition}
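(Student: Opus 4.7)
The plan is to prove this by direct computation, leveraging the alternative formula for $\chi^2$-divergence given in \Cref{def:chi2}, namely
\begin{equation}
    1 + \dchisq{p}{q} = \sum_{i=1}^d \frac{p_i^2}{q_i}.
\end{equation}
This quantity factorizes beautifully over tensor products, which is precisely what makes the near-additivity identity work.

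First I would write down
\begin{equation}
    1 + \dchisq{p \otimes p'}{q \otimes q'} = \sum_{i,j} \frac{(p_i p'_j)^2}{q_i q'_j}.
\end{equation}
Then I would simply note that the summand factors as $(p_i^2/q_i)(p_j'^2/q_j')$, so the double sum splits as the product of two single sums:
\begin{equation}
    \sum_{i,j} \frac{(p_i p'_j)^2}{q_i q'_j} = \left(\sum_i \frac{p_i^2}{q_i}\right) \left(\sum_j \frac{(p'_j)^2}{q'_j}\right) = (1+\eps_1)(1+\eps_2).
\end{equation}
Subtracting $1$ from both sides and expanding yields the claimed identity. The main (and only) ``obstacle'' is remembering to use the second formula in \Cref{eqn:defchi2} rather than the first; trying to expand $\sum_{ij} (p_ip_j' - q_iq_j')^2/(q_iq_j')$ directly is messier and obscures the product structure. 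There is nothing analytically subtle here — the whole argument is a one-line factorization.
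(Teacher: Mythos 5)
Your proof is correct and is exactly the argument the paper intends: the paper's proof simply says the identity ``follows essentially immediately from the second formula in \Cref{def:chi2},'' which is precisely the factorization of $\sum_{i,j}(p_ip'_j)^2/(q_iq'_j)$ that you carry out. Nothing to add.
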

\begin{proof}
    This follows essentially immediate from the second formula in \Cref{def:chi2}.
\end{proof}

Now to prove \Cref{thm:test-classical}, suppose we are given access to a bipartite probability distribution~$p = p_{AB}$ on $[d] \times [d]$ and we are promised that either $I(A : B)_p = 0$ or $I(A : B)_p \geq \eps$.
In the former case we have
\begin{equation} \label{ineq:c}
    I(A : B)_p = 0 \quad\implies\quad p = p_A \times p_B \quad \implies \quad \dchisq{p}{\wh{\bp}_A \times \wh{\bp}_B} \leq c \eps/\log(d/\eps)
\end{equation}
with high probability if we estimate the two marginals using $O((d/\eps)\cdot \log(d/\eps))$ samples as in \Cref{lem:learn-marginals}. 
(Here $c > 0$ may be any small constant.)
On the other hand, in case $I(A : B)_p \geq \eps$, we get $\dhellsq{p}{p_A \times p_B} \geq \Omega(\eps/\log(d/\eps))$ from \Cref{cor:hell-mutual-info}; ensuring that the constant~$c$ in \Cref{ineq:c} is small enough, this implies
\begin{equation}
    \dhellsq{p}{\wh{\bp}_A \times \wh{\bp}_B} \geq \Omega(\eps/\log(d/\eps))
\end{equation}
also.
Now again ensuring~$c$ is small enough, our classical mutual information tester \Cref{thm:test-classical} follows by using the below ``$\chi^2$-vs.-$\mathrm{H}^2$ identity tester'' of Daskalakis--Kamath--Wright with the ``known'' distribution being $\wh{\bp}_A \times \wh{\bp}_B$.  This distinguishes our  two cases (with high probability) using $O(\sqrt{d\times d}/\eps')$ samples, $\eps' =\Theta(\eps/\log(d/\eps))$; in other words, with $O((d/\eps) \log(d/\eps))$ samples.

\begin{theorem} \label{thm:dkw} (\cite[Thm.~1]{DKW2018}.)
    For any ``known'' distribution $q$ on~$[d]$, there is a testing algorithm with the following guarantee:
    Given $0 < \eps \leq 1/2$, as well as $n = O(\sqrt{d}/\eps)$ samples from an unknown distribution $p$~on~$[d]$, if $\dchisq{p}{q} \leq \eps/2$ then the test accepts with probability at least~$.99$, and if $\dhellsq{p}{q} \geq \eps$ then the test rejects with probability at least~$.99$.
\end{theorem}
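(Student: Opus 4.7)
The plan is to use a truncated chi-squared test statistic in the style of Chan--Diakonikolas--Valiant--Valiant, refined with the variance-control trick of Daskalakis--Kamath--Wright. Working in the standard Poissonized model (where the counts $N_i \sim \textnormal{Poisson}(np_i)$ are independent across $i \in [d]$), the natural starting point is
\begin{equation*}
    Z_0 = \sum_{i \in [d]} \frac{(N_i - n q_i)^2 - N_i}{n q_i},
\end{equation*}
for which a direct Poisson moment computation gives $\E[Z_0] = n \dchisq{p}{q}$. The problem with $Z_0$ is that when some $q_i$ is extremely small, a single coordinate can inflate its variance beyond what Chebyshev can absorb; indeed, this is precisely why the theorem must be stated asymmetrically, with chi-squared on the accept side and Hellinger on the reject side.

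The fix is to truncate: replace $q_i$ in the denominator by $\widetilde{q}_i \coloneqq \max(q_i, \eps/d)$ and correct for the truncation by a small deterministic shift, yielding a modified statistic $Z$. This preserves the two-sided mean separation: the accept condition $\dchisq{p}{q} \leq \eps/2$ still forces $\E[Z] \leq n\eps/2$ up to constants, while the reject condition $\dhellsq{p}{q} \geq \eps$ forces $\E[Z] \gtrsim n\eps$. The latter uses the elementary fact that $(p_i - q_i)^2/\widetilde{q}_i \gtrsim (\sqrt{p_i} - \sqrt{q_i})^2$ on coordinates where $\widetilde{q}_i = q_i$, together with the observation that the total Hellinger-squared mass on the flooded coordinates is at most $O(\eps)$. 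The test then thresholds $Z$ at roughly $\tfrac{3}{4} n \eps$.

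The main obstacle is the variance bound, which is where all of the quadratic savings in~$\eps$ come from. By Poisson independence, $\Var[Z]$ decomposes across~$i$, and expanding fourth moments produces per-coordinate contributions of order $(n^2 p_i^2 + n^3 p_i (p_i - q_i)^2)/(n \widetilde{q}_i)^2$. The key DKW observation is that, after truncation, the dominant second piece can be charged to $\dhellsq{p}{q}$ rather than to $\dchisq{p}{q}$: one writes
\begin{equation*}
    \frac{p_i (p_i - q_i)^2}{\widetilde{q}_i^2} \lesssim (\sqrt{p_i} - \sqrt{q_i})^2 \cdot \frac{p_i}{\widetilde{q}_i},
\end{equation*}
and uses the floor to cap $p_i/\widetilde{q}_i$ by $O(d/\eps)$. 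Summing the pieces with $n = \Theta(\sqrt{d}/\eps)$ yields $\Var[Z] = O((n\eps)^2)$ in the accept regime; Chebyshev's inequality then gives the desired constant failure probability, and the standard depoissonization argument loses only a constant factor in the sample count.
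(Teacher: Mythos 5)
The paper does not prove this statement; it is imported verbatim as \cite[Thm.~1]{DKW2018}. There is therefore no internal proof to compare against, and your sketch must be evaluated on its own merits.

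Your high-level plan (Poissonize, use a CDVV-style statistic, floor the denominator, bound mean and variance, apply Chebyshev) is the right one and in the spirit of DKW. However, there is a genuine gap in the variance analysis. Your displayed inequality
\begin{equation*}
    \frac{p_i (p_i - q_i)^2}{\widetilde{q}_i^2} \lesssim (\sqrt{p_i} - \sqrt{q_i})^2 \cdot \frac{p_i}{\widetilde{q}_i}
\end{equation*}
cancels to $(p_i - q_i)^2/\widetilde{q}_i \lesssim (\sqrt{p_i} - \sqrt{q_i})^2$, equivalently $(\sqrt{p_i}+\sqrt{q_i})^2 \lesssim \widetilde{q}_i$, which is simply false whenever $p_i \gg \widetilde{q}_i$. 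The true bound, via $(p_i-q_i)^2 = (\sqrt{p_i}-\sqrt{q_i})^2(\sqrt{p_i}+\sqrt{q_i})^2$ and $(\sqrt{p_i}+\sqrt{q_i})^2 \leq 4\max(p_i,\widetilde{q}_i)$, is $\frac{p_i(p_i-q_i)^2}{\widetilde{q}_i^2} \lesssim (\sqrt{p_i}-\sqrt{q_i})^2 \max\bigl((p_i/\widetilde{q}_i)^2,\, p_i/\widetilde{q}_i\bigr)$, so you are missing a full factor of $p_i/\widetilde{q}_i$ in the dangerous regime.

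Moreover, even granting your inequality, the parameters do not close. With the crude floor cap $p_i/\widetilde{q}_i \leq O(d/\eps)$, the second variance piece in the accept regime is $\lesssim n\cdot(d/\eps)\cdot\dhellsq{p}{q} = O(nd)$; with $n = \Theta(\sqrt{d}/\eps)$ this is $\Theta(d^{1.5}/\eps)$, while $(n\eps)^2 = \Theta(d)$, so Chebyshev fails. What actually closes the argument is that the \emph{accept hypothesis itself} controls the ratio: $\dchisq{p}{q}\le\eps/2$ forces $(p_i-q_i)^2 \le (\eps/2)q_i$ coordinatewise, hence $p_i \le q_i + \sqrt{\eps q_i/2} \le \widetilde{q}_i(1+\sqrt{d/2})$, giving $p_i/\widetilde{q}_i = O(\sqrt{d})$. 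Then the second piece is $\lesssim n\sqrt{d}\,\dchisq{p}{\widetilde{q}} \lesssim n\sqrt{d}\,\eps = O(d)$, matching $(n\eps)^2$ up to a constant absorbed by enlarging $n$. A smaller but related slip: the claim that the Hellinger-squared mass on flooded coordinates is automatically $O(\eps)$ is not right as stated (a single flooded $i$ with large $p_i$ can carry $\Omega(1)$ Hellinger mass); the resolution is that such an $i$ contributes $(p_i-q_i)^2/\widetilde{q}_i \gtrsim p_i^2 d/\eps \gg (\sqrt{p_i}-\sqrt{q_i})^2$ to the floored statistic, so the reject-side mean bound still holds.
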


\subsection{Testing zero quantum mutual information}\label{subsec:testingzero}
To prove \Cref{thm:test}, we now endeavor to repeat the result from the previous setting in the quantum case. 
Naturally, it is crucially important that we are able to do quantum state tomography with respect to Bures $\chi^2$-divergence.
This lets use the following quantum 
``$\chi^2$-vs.-$\mathrm{H}^2$ identity tester'' from~\cite{Badescu2019} in place of \Cref{thm:dkw}.
\begin{theorem} \label{thm:bow} (\cite[Thm.~1]{Badescu2019}.)
    For any ``known'' quantum state $\sigma \in \C^{d \times d}$, there is a testing algorithm with the following guarantee:
    Given $0 < \eps \leq 1/2$, as well as $n = O(d/\eps)$ copies of an unknown state $\rho \in \C^{d \times d}$, if $\DBchi{\rho}{\sigma} \leq .49\eps$ then the test accepts with  probability at least~$.99$, and if $\DHellsq{\rho}{\sigma} \geq 2\eps$ then the test rejects with probability at least~$.99$.
\end{theorem}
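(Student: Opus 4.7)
The plan is to construct an unbiased estimator $\bT$ of $(n/2)^2 \cdot \DBchi{\rho}{\sigma}$ whose variance is $O((n\eps)^2)$, so that Chebyshev's inequality separates the two promised cases with the claimed $n = O(d/\eps)$ sample complexity. Since both $\DBchi{\cdot}{\cdot}$ and $\DHellsq{\cdot}{\cdot}$ are unitarily invariant and the test may depend on the known~$\sigma$, I first pass to the $\sigma$-eigenbasis so that $\sigma = \diag(q_1,\ldots,q_d)$, and split
\begin{equation}
    \DBchi{\rho}{\sigma} = \sum_i \frac{(\rho_{ii}-q_i)^2}{q_i} + \sum_{i\neq j}\frac{2\,|\rho_{ij}|^2}{q_i+q_j}
\end{equation}
into diagonal and off-diagonal contributions, estimating each from a disjoint batch of $n/2$ copies.

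For the diagonal part I would measure the first $n/2$ copies in the $\sigma$-eigenbasis to obtain counts $X_i \sim \mathrm{Binomial}(n/2, \rho_{ii})$ and form the classical DKW chi-squared statistic $\bT_{\mathrm{diag}} = \sum_i \bigl((X_i - (n/2)q_i)^2 - X_i\bigr)/q_i$, which is unbiased for $(n/2)^2 \sum_i (\rho_{ii}-q_i)^2/q_i$; the variance analysis underlying \Cref{thm:dkw} controls this term with $O(d/\eps)$ copies. For the off-diagonal part I would reuse the ``Pauli-edge'' measurement scheme from \Cref{prop:simple}: partition the edges of $K_d$ into $d-1$ perfect matchings, allocate a batch of copies to each matching, and apply both the real POVM $(X^\pm_{ij})_{\{i,j\}\in M}$ and its imaginary counterpart $(Y^\pm_{ij})_{\{i,j\}\in M}$. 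Debiasing the squared difference-of-counts statistic (in the style of the proof of \Cref{prop:simple}) yields an unbiased estimator $\wh{t}_{ij}$ of $|\rho_{ij}|^2$, and I would set $\bT_{\mathrm{off}} = \sum_{i\neq j}\frac{2}{q_i+q_j}\wh{t}_{ij}$, so that $\bT := \bT_{\mathrm{diag}} + \bT_{\mathrm{off}}$ is unbiased for $(n/2)^2\,\DBchi{\rho}{\sigma}$.

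The main obstacle is the variance bound on $\bT_{\mathrm{off}}$, which requires $\sum_{i\neq j}\frac{1}{(q_i+q_j)^2}\Var[\wh{t}_{ij}] \lesssim (n\eps)^2$. Per-pair, $\Var[\wh{t}_{ij}]$ scales like $(\rho_{ii}+\rho_{jj})/m$ with $m \sim n/d$ the per-matching sample count, so the task reduces to controlling $\sum_{i\neq j}\frac{\rho_{ii}+\rho_{jj}}{(q_i+q_j)^2}$. Using the PSD inequality $|\rho_{ij}|^2 \leq \rho_{ii}\rho_{jj}$ and a DKW-style Cauchy--Schwarz rearrangement that splits terms based on whether $\rho_{ii}/q_i$ is much larger than~$1$, this sum can be bounded by $O(d) + O(\DBchi{\rho}{\sigma}\cdot n)$, which suffices under the promise $\DBchi{\rho}{\sigma}\le 0.49\eps$. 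For the rejection side, I would invoke \Cref{prop:quantum-ineqs} together with \Cref{fact:hellDB} to get $\DHellsq{\rho}{\sigma} \le 2\,\DBchi{\rho}{\sigma}$, so the promise $\DHellsq{\rho}{\sigma}\ge 2\eps$ forces $\DBchi{\rho}{\sigma}\ge \eps$; then a decision threshold placed midway between $0.49\eps\,(n/2)^2$ and $\eps\,(n/2)^2$, combined with Chebyshev applied to $\bT$, yields the stated tester.
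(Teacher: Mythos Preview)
The paper does not prove this theorem; it is quoted as a black-box result from \cite{Badescu2019}, whose proof uses collective (entangled) measurements via weak Schur sampling and an analysis of the resulting Young-diagram statistic. Your single-copy, Pauli-matching approach is thus not the paper's route, and more importantly it does not achieve the stated $O(d/\eps)$ copy complexity.

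The gap is in the off-diagonal variance. By allocating the $n/2$ off-diagonal copies across $\Theta(d)$ matchings, each pair $\{i,j\}$ receives only $m \sim n/d$ samples, and this factor-of-$d$ loss is fatal. Concretely, even in the null case $\rho = \sigma$ your key sum
\[
    \sum_{i\neq j}\frac{\rho_{ii}+\rho_{jj}}{(q_i+q_j)^2} \;=\; \sum_{i\neq j}\frac{1}{q_i+q_j}
\]
equals $\Theta(d^3)$ for uniform $\sigma$ and is at least $\Omega(d^2)$ in general---not $O(d)$ as you assert. (Separately, an unbiased estimator of $|\rho_{ij}|^2$ built from $m$ samples has null-case variance $\Theta(p_{ij}^2/m^2)$, not $p_{ij}/m$; the latter is the variance of $\widehat{r}_{ij}$ itself, not of its square.) Carrying either correct expression through makes the off-diagonal variance too large by a factor $\Theta(d^2)$, so this tester would need $n = \Theta(d^2/\eps)$ rather than $O(d/\eps)$. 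The matching decomposition from \Cref{prop:simple} is tailored to Frobenius-squared estimation, where the $1/(q_i+q_j)$ weights are absent; to reach $O(d/\eps)$ one needs measurements in which each copy contributes information about \emph{all} off-diagonal entries simultaneously, which is exactly what the Schur--Weyl-based estimator of \cite{Badescu2019} provides.
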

We can relate quantum mutual information and its Hellinger version using \Cref{thm:Hell-mutual-info} in place of \Cref{cor:hell-mutual-info}.
It would seem then that we could establish our quantum zero mutual information tester \Cref{thm:test} in exactly the same way we did its classical analogue, using $\wt{O}(d^2/\eps)$ copies of a bipartite $d \times d$-dimensional state~$\rho$.
Unfortunately, there is a missing piece: a quantum analogue of \Cref{lem:learn-marginals} with $O(d^2/\eps)$ copy complexity.
We prove a slightly worse variant, which leads to our main testing \Cref{thm:test}:
\begin{theorem}   \label{thm:suffering}
    There is a tomography algorithm that, given parameter $0 < \eps \leq 1/2$ and 
    \begin{equation}    \label{eqn:suff}
        n = \max\{\wt{O}(r d^{1.5}/\eps),\;\wt{O}(r^{.5} d^{1.75}/\eps)\}
    \end{equation} 
    copies of unknown $d$-dimensional quantum states $\xi, \rho$ of rank at most~$r$, outputs  diagonal states $\bsigma, \btau$ such that (with probability at least~$.9$),
    \begin{equation}
        \DBchi{\xi \otimes \rho}{\bsigma' \otimes \btau'} \leq \eps.
    \end{equation}
\end{theorem}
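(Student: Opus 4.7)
The plan is to apply the $\chi^2$-tomography algorithm (\Cref{cor:yoyo}, or more flexibly \Cref{cor:cookie}) \emph{separately} to $\xi$ and to $\rho$, producing diagonal estimates $\bsigma$ and $\btau$ respectively. Since $(\xi\otimes\rho)^{\otimes n}$ factors as $\xi^{\otimes n}\otimes\rho^{\otimes n}$ after reordering subsystems, a single pool of $n$ copies feeds both tomographies without a factor-of-two loss in samples. The product $\bsigma\otimes\btau$ is itself diagonal, so $\DBchi{\xi\otimes\rho}{\bsigma\otimes\btau}$ is computed by \Cref{eqn:chi-formula} with denominator $\alpha_i\beta_j+\alpha_k\beta_l$, where $\alpha,\beta$ are the diagonals of $\bsigma,\btau$.

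The crux is bounding this divergence in terms of $\DBchi{\xi}{\bsigma}$ and $\DBchi{\rho}{\btau}$. Writing $\Delta^\xi=\xi-\bsigma$ and $\Delta^\rho=\rho-\btau$, I would decompose
\[
  \xi\otimes\rho - \bsigma\otimes\btau \;=\; \Delta^\xi\otimes\btau + \bsigma\otimes\Delta^\rho + \Delta^\xi\otimes\Delta^\rho,
\]
apply $|a+b+c|^2\le 3(|a|^2+|b|^2+|c|^2)$, and bound each of the three resulting weighted sums. Because $\bsigma$ and $\btau$ are diagonal, the first two terms telescope cleanly (the Kronecker $\delta$'s collapse one of the indices) to contributions of $O(\DBchi{\xi}{\bsigma})$ and $O(\DBchi{\rho}{\btau})$. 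The cross-term is the delicate one: using the pointwise estimate $\tfrac{1}{\alpha_i\beta_j+\alpha_k\beta_l}\le \tfrac{1}{(\alpha_i+\alpha_k)\min(\beta_j,\beta_l)}$ (which follows by taking WLOG $\beta_j\le \beta_l$), it factors as $\DBchi{\xi}{\bsigma}\cdot T$, where $T = \sum_{j,l}|\Delta^\rho_{jl}|^2/\min(\beta_j,\beta_l)$.

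Controlling $T$ is the main technical step. I would exploit the $\bL\sqcup\bR$ structure guaranteed by \Cref{thm:central} for the $\rho$-tomography: $|\bR|\le \wt O(r)$, each $\beta_j$ is within a constant factor of $\rho_{jj}$ on $\bR$, $\beta_j\gtrsim \eta_\rho/|\bL|$ on $\bL$, and $\tr\rho[\bL]\le O(\tilde{\eps}_\rho)$. Combining these with the PSD bound $|\rho_{jl}|^2\le \rho_{jj}\rho_{ll}$ and splitting the sum over the four $\{\bR,\bL\}^2$ rectangles should yield $T\lesssim r + d\tilde{\eps}_\rho/\eta_\rho$.

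The final step is parameter optimization. \Cref{cor:cookie} gives four tuning knobs $(\tilde{\eps}_\xi,\eta_\xi)$ and $(\tilde{\eps}_\rho,\eta_\rho)$, with sample cost $\wt O(rd/\tilde{\eps})$ per marginal and $\DBchi{\cdot}{\cdot}=O(\eta+\tilde{\eps}+(d/r)(\tilde{\eps}^2/\eta))$. Taking $\eta_\rho$ \emph{larger} than the default $\sqrt{d/r}\tilde{\eps}_\rho$ of \Cref{rem:cookies} shrinks $T$ at the cost of a slightly weaker $\chi^2$-estimate for $\rho$; balancing the two marginal sample costs in the regimes $r\gtrsim \sqrt d$ and $r\lesssim \sqrt d$ respectively should yield the two terms in the claimed $\max\{\wt O(rd^{1.5}/\eps),\wt O(r^{0.5}d^{1.75}/\eps)\}$. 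The main obstacle is this multi-regime optimization, together with sharpening the $T$-bound tightly enough (possibly by using rank-based refinements like $\|P_\bR\rho P_\bL\|_\Fro^2 \le \min(\btau,\,r\|\rho\|^2)$) for the chosen parameters to actually deliver these exponents.
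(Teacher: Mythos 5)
Your top-level decomposition is different from the paper's and, as written, has a gap that prevents it from reaching the stated sample complexity. The paper splits the Bures $\chi^2$-divergence by \emph{diagonal/off-diagonal index position} (ON-ON, ON-OFF, OFF-OFF) and treats each block with a tailored argument: ON-ON uses the exact multiplicativity $\dchisq{p\otimes p'}{q\otimes q'}=(1+\dchisq{p}{q})(1+\dchisq{p'}{q'})-1$ (\Cref{prop:chi-add}); ON-OFF factorizes \emph{exactly} into $(1+\dchisq{\diag\xi}{s'})\cdot\DBoff{\rho}{\btau}$ with the first factor being $O(1)$; and only OFF-OFF pays the AM-GM penalty $d/\eps$, which is then absorbed by tuning $\DBoff{{\cdot}}{{\cdot}}\le\wt{O}(\eps/\sqrt{d})$.

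Your algebraic split $\Delta^\xi\otimes\btau + \bsigma\otimes\Delta^\rho + \Delta^\xi\otimes\Delta^\rho$ is a sound starting point (the first two terms indeed collapse to $\DBchi{\xi}{\bsigma}$ and $\DBchi{\rho}{\btau}$ because $\btau,\bsigma$ are states). But the uniform pointwise bound
\[
\frac{1}{\alpha_i\beta_j+\alpha_k\beta_l}\le\frac{1}{(\alpha_i+\alpha_k)\min(\beta_j,\beta_l)}
\]
on the cross term is too lossy precisely where it need not be. For the positions with $i=k$, $j\ne l$ the denominator is already $\alpha_i(\beta_j+\beta_l)$, which factorizes with no loss at all, giving $\DBon{\xi}{\bsigma}\cdot\DBoff{\rho}{\btau}=\wt{O}(\eps)\cdot\wt{O}(\eps/\sqrt d)$. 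Your bound replaces $\beta_j+\beta_l$ by $2\min(\beta_j,\beta_l)$, inflating this term by up to the eigenvalue dynamic range. The resulting $T=\sum_{j,l}|\Delta^\rho_{jl}|^2/\min(\beta_j,\beta_l)$ cannot be pushed below $\Omega(r)$ in the worst case: on the $\bR\times\bR$ off-diagonal block, $\beta_j\approx\rho_{jj}$, and the PSD bound gives $\sum_{j\ne l\in\bR}\max(\rho_{jj},\rho_{ll})\le|\bR|=\wt{O}(r)$ — a bound that is tight when $\rho$ has genuine off-diagonal structure in the learned basis. With $T\gtrsim r$, the cross term $\DBchi{\xi}{\bsigma}\cdot T\le O(\eps)$ forces $\DBchi{\xi}{\bsigma}\lesssim\eps/r$, which by \Cref{cor:yoyo} costs $\wt{O}(r^{1.5}d^{1.5}/\eps)$ copies; this exceeds the theorem's bound by a factor of $\sqrt r$ when $r>\sqrt d$, and the $d\tilde\eps_\rho/\eta_\rho$ piece of your $T$-bound causes an analogous $d^{1/4}$ overshoot when $r\le\sqrt d$. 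The fix is to further split your cross term by $(i\,{=}\,k)$ vs $(i\ne k)$ and $(j\,{=}\,l)$ vs $(j\ne l)$ and reserve the AM-GM/min bound for the doubly-off-diagonal piece only — at which point you have essentially rederived the paper's ON-ON/ON-OFF/OFF-OFF decomposition.
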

\begin{proof}
    The strategy is to apply our central estimation algorithm in the form of \Cref{cor:cookie} to both~$\xi, \rho$ (with the Frobenius-learner from \Cref{thm:ow} with $f(d,r) = O(d)$).
    We use the parameter choices 
    \begin{equation}
        \eta = \eps, \qquad \tilde{\eps} = \eps \cdot \min\{1/d^{.5}, r^{.5}/d^{.75}\}.
    \end{equation}
    This leads to the claimed copy complexity from \Cref{eqn:suff}, and yields (with probability at least~$.9$) estimates~$\bsigma, \btau$ satisfying
    \begin{align}
        \DBoff{\xi}{\bsigma},\  \DBoff{\rho}{\btau} &\leq \wt{O}(\eps \cdot (1/d^{.5} + \min\{1/r, r^{.5}/d^{.75}\})) = \wt{O}(\eps/\sqrt{d}), \label[ineq]{ineq:z1}\\
        \DBon{\xi}{\bsigma},\  \DBoff{\rho}{\btau} &\leq \wt{O}(\eps),  \label[ineq]{ineq:z2}
    \end{align}
    with $\bsigma, \btau$ having minimum eigenvalue at least~$\eps/d$.
    We claim that for any fixed outcomes $\bsigma = \sigma'$ and $\btau' = \tau'$ satisfying the above, it holds that
    \begin{equation}    \label[ineq]{eqn:yok}
        \DBchi{\xi \otimes \rho}{\sigma' \otimes \tau'} \leq \wt{O}(\eps).
    \end{equation}
    This is sufficient to complete the proof.    \\
    
    To establish \Cref{eqn:yok}, let us write $\sigma' = \diag(s')$ and $\tau' = \diag(t')$, with
    \begin{equation}    \label[ineq]{ineq:co}
        s'_a \geq \eps/d, \quad t'_j \geq \eps/d \qquad \text{for all $a,j  \in [d]$.}
    \end{equation}
    We will break up $\DBchi{\xi \otimes \rho}{\sigma' \otimes \tau'}$ into three parts: on-on-diagonal, on-off-diagonal, and off-off-diagonal:
    \begin{equation}
        \underbrace{\sum_{a,i = 1}^d \tfrac{1}{s'_at'_i} (\xi_{aa}\rho_{ii} - s'_at'_i)^2}_{\textrm{(ON-ON)}}
        + 
        \underbrace{\parens*{
        \sum_{a = 1}^d \sum_{i \neq j} \tfrac{2}{s'_at'_i + s'_a t'_j} \abs{\xi_{aa}\rho_{ij}}^2
        +
        \sum_{a \neq b} \sum_{i=1}^d \tfrac{2}{s'_at'_i + s'_b t'_i} \abs{\xi_{ab}\rho_{ii}}^2
        }}_{\textrm{(ON-OFF)}}
        + 
        \underbrace{
        \sum_{a \neq b} \sum_{i\neq j} \tfrac{2}{s'_at'_i + s'_b t'_j} \abs{\xi_{ab}\rho_{ij}}^2}_{\textrm{(OFF-OFF)}}.
    \end{equation}
    First, using \Cref{prop:chi-add},
    \begin{equation} \label[ineq]{ineq:add}
        \textrm{(ON-ON)} = \dchisq{\diag(\xi) \otimes \diag(\rho)}{s' \otimes t'} = (1+\dchisq{\diag(\xi)}{s'})(1+\dchisq{\diag(\rho)}{t'}) - 1.
    \end{equation}
    But $\dchisq{\diag(\xi)}{s'} \leq \DBon{\xi}{\sigma'} \leq \wt{O}(\eps)$ by \Cref{ineq:z1}, and similarly for $\dchisq{\diag(\rho)}{t'}$, so we conclude from \Cref{ineq:add} that $\textrm{(ON-ON)} \leq (1 + \wt{O}(\eps))(1 + \wt{O}(\eps)) - 1 = \wt{O}(\eps)$, as needed for \Cref{eqn:yok}.

    Moving to (ON-OFF), the first term in it factorizes to
    \begin{equation}
        \label{eqn:factorize}
        \parens*{\sum_{a=1}^d \tfrac{1}{s'_a}\xi_{aa}^2}
        \parens*{\sum_{i \neq j} \tfrac{2}{t'_i + t'_j} \abs{\rho_{ij}}^2}
    \end{equation}
    The first factor above is precisely 
    \begin{equation}
        1+\dchisq{\diag(\xi)}{s'} \leq 1 + \DBon{\xi}{\sigma'} \leq 1 + \wt{O}(\eps) \leq O(1).
    \end{equation}
    The second factor in \Cref{eqn:factorize} is 
    \begin{equation}
        \label{ineq:almost}
        \sum_{i \neq j} \tfrac{2}{t'_i + t'_j} \abs{\rho_{ij}}^2 = 
        \DBoff{\rho}{\tau'} \leq \DBchi{\rho}{\tau}{t'} \leq \wt{O}(\eps).
    \end{equation}
    Thus \Cref{eqn:factorize}, and indeed both terms in (ON-OFF), can be bounded by~$\wt{O}(\eps)$, as needed for \Cref{eqn:yok}.
    
    It remains to bound (OFF-OFF) by~$\wt{O}(\eps)$.
    By the AM-GM inequality,
    \begin{equation}
        \frac{s'_at'_i+ s'_b t'_j}{2} \geq \sqrt{s'_a t'_i s'_b t'_j} = \sqrt{s'_a s'_b}\sqrt{t'_it'_j}.
    \end{equation}
    Of course there is no reverse AM-GM inequality, but we at least have
    \begin{equation}
        \sqrt{xy} \geq \min(\sqrt{x/y},\sqrt{y/x}) \cdot \frac{x+y}{2} \qquad \forall x,y > 0.
    \end{equation}
    When $(x,y)$ is $(s'_a,s'_b)$ or $(t'_i,t'_j)$, we have $\min(\sqrt{x/y},\sqrt{y/x}) \geq \sqrt{\eps/d}$ (from \Cref{ineq:co}), and hence
    \begin{equation}
          \frac{s'_at'_i+ s'_b t'_j}{2} \geq (\eps/d) \cdot \frac{s'_a + s'_b}{2} \cdot \frac{t'_i + t'_j}{2}
    \end{equation}
    Putting this into the definition of (OFF-OFF) yields 
    \begin{equation}
        \textrm{(OFF-OFF)} \leq (d/\eps) \cdot \sum_{a \neq b} \sum_{i \neq j} \tfrac{2}{s'_a + s'_b}\cdot  \tfrac{2}{t'_i + t'_j} \cdot \abs{\xi_{ab} \rho_{ij}}^2
        = (d/\eps) \parens*{\sum_{a \neq b} \tfrac{2}{s'_i + s'_j} \abs{\xi_{ab}}^2} \parens*{\sum_{i \neq j} \tfrac{2}{t'_i + t'_j} \abs{\rho_{ij}}^2}.
    \end{equation}
    The last factor here is bounded by $\DBoff{\rho}{\tau'}$ in \Cref{ineq:almost}, and the similar factor with $s'$ and $\xi$ is similarly bounded.
    Hence using \Cref{ineq:z1}, we indeed get
    \begin{equation} \label[ineq]{ineq:offoff}
        \textrm{(OFF-OFF)} \leq O(d/\eps)\cdot \wt{O}(\eps/\sqrt{d}) \cdot \wt{O}(\eps/\sqrt{d}) = \wt{O}(\eps),
    \end{equation}
    completing the proof.
\end{proof}

\section{Open Problems}\label{sec:conclusion}

One obvious and by now longstanding open question related to our work is learning in infidelity to precision~$\eps$ with $O(rd/\eps)$ samples, without any logarithms. 
This would settle the sample complexity of tomography with infidelity loss up to constant factors. 
In light of our work, perhaps we could even ask for more: 
Given our result that learning in quantum relative entropy is possible with $\wt{O}(rd/\eps)$ samples, might a similar no-logarithm bound hold here as well?

Our algorithm uses only single-copy measurements, but even these are challenging on present-day quantum computers. 
A stronger assumption on measurements is to restrict to product measurements, meaning that all POVM elements factorize into tensor products over subsystems. 
We believe this measurement model will require strictly greater sample complexity for learning in $\chi^2$-divergence and for quantum mutual information testing than the single-copy case analyzed here.

Regarding quantum mutual information testing, note that in the classical case we could learn product states to $\chi^2$-divergence well enough that the entire testing complexity was dominated by the $\chi^2$-vs.-Hellinger identity tester. 
Unfortunately, in the quantum case we couldn't quite match this. 
Might it be possible to reduce the complexity of testing zero quantum mutual information down to to $\wt{O}(d^2/\eps)$?

For learning in $\chi^2$-divergence, it would be interesting to show that $\wt{\Omega}(\sqrt{r}d^{1.5}/\eps)$ is the right lower bound; currently, we have nothing better than the infidelity-tomography lower bound of $\wt{\Omega}(rd/\eps)$. 
As explained in~\Cref{{rem:itshard}}, though, it seems like reducing the upper bound could be difficult  even for the case $r=1$. 

Although the \emph{Bures} $\chi^2$-divergence is usually the largest of the ``big four'' quantities considered in this paper, there are other quantum generalizations of $\chi^2$-divergence in the literature that are larger still than Bures $\chi^2$-divergence (see, e.g.,~\cite{Petz1996, Temme2010}).
An example is the so-called ``standard'' quantum $\chi^2$-divergence, in which the the arithmetic mean reciprocal-prefactor in \Cref{eqn:chi-formula} is replaced by a geometric mean. 
Similarly, there are also multiple generalizations of the quantum relative entropy besides the ``Umegaki'' quantum relative entropy~$\DKL{{\cdot}}{{\cdot}}$ studied herein.
As explained above, the main reason for us to consider learning with respect to \emph{Bures} $\chi^2$-divergence (as opposed to other metrics) is that it seems necessary for some applications; for example, our quantum mutual information testing problem. 
It is an interesting open question to study state tomography with respect to other generalizations of relative entropy and $\chi^2$-divergence, and in particular to decide if this is possible while still having $\wt{O}(1/\eps)$ scaling. 

More generally, a very interesting direction is to investigate for which quantum learning and testing tasks we can get away with $\wt{O}(1/\eps)$ samples, and for which we require (say) $\wt{\Omega}(1/\eps^2)$ samples.

\bibliographystyle{plainnat}
\bibliography{refs}

\end{document}